\newcommand*{\key}{\emph{key}}
\newcommand*{\cT}[0]{\mathcal{T}}
\newcommand*{\OO}{\mathrm{O}}
\newcommand*{\barM}{\overline{M}}
\newcommand*{\Decreasekey}{\mbox{\it decrease-key}}
\newcommand*{\Delete}{\mbox{\it delete}}
\newcommand*{\Deletemin}{\mbox{\it delete-min}}
\newcommand*{\Findmin}{\mbox{\it find-min}}
\newcommand*{\Insert}{\mbox{\it insert}}
\newcommand*{\Makeheap}{\mbox{\it make-heap}}
\newcommand*{\Meld}{\mbox{\it meld}}
\newtheorem{theorem}{Theorem}
\newtheorem{lemma}[theorem]{Lemma}
\newtheorem{corollary}[theorem]{Corollary}
\newtheorem{remark}[theorem]{Remark}
\numberwithin{theorem}{section}
\begin{document}
\title{Efficiency of Self-Adjusting Heaps}


\author{Corwin Sinnamon\thanks{Department of Computer Science, Princeton University.  Research partially supported by a gift from Microsoft.} \and Robert E. Tarjan\footnotemark[1]}
\date{}

\maketitle

\begin{abstract}
Since the invention of the pairing heap by Fredman, Sedgewick, Sleator, and Tarjan~\cite{FSST86}, it has been an open question whether this or any other simple ``self-adjusting" heap supports decrease-key operations in $\OO(\log\log n)$ time, where $n$ is the number of heap items.  Using powerful new techniques, we answer this question in the affirmative. We prove that both slim and smooth heaps, recently introduced self-adjusting heaps, support heap operations in the following amortized time bounds: $\OO(\log n)$ for delete-min and delete, $\OO(\log\log n)$ for decrease-key, and $\OO(1)$ for all other heap operations, including insert and meld, where $n$ is the number of heap items that are eventually deleted: Items inserted but never deleted do not count in the bounds.  We also analyze the multipass pairing heap, a variant of pairing heaps.  For this heap implementation, we obtain the same bounds except for decrease-key, for which our bound is $\OO(\log\log n \cdot\log\log\log n)$, where again items that are never deleted do not count in $n$.  Our bounds significantly improve the best previously known bounds for all three data structures.  For slim and smooth heaps our bounds are tight, since they match lower bounds of Iacono and {\"O}zkan~\cite{IaconoOzkan}.
\end{abstract}





\section{Introduction}
\label{S:introduction}

A \emph{heap} (or \emph{priority queue}) is a data structure that stores a set of items, each with a \emph{key} selected from a totally ordered key space. Heaps support the following operations:

\begin{itemize}
\setlength\itemsep{1em}
\item[] $\Makeheap()$: Create and return a new, empty heap.

\item[] $\Findmin(H)$: Return an item of smallest key in heap $H$, or  null if $H$ is empty.

\item[] $\Insert(H, e)$: Insert item $e$ with predefined key into heap $H$.  Item $e$ must not be in any other heap.

\item[] $\Deletemin(H)$: Delete $\Findmin(H)$ from $H$.  Heap $H$ must be nonempty before the delete-min.

\item[] $\Meld(H_1, H_2)$: Return a heap containing all items in item-disjoint heaps $H_1$ and $H_2$, destroying $H_1$ and $H_2$ in the process.

\item[] $\Decreasekey(H, e, k)$: Given a pointer to item $e$ in heap $H$, and given that the key of $e$ is at least $k$, change the key of $e$ to $k$. 

\item[] $\Delete(H, e)$: Given a pointer to item $e$ in heap $H$, delete $e$ from $H$. 
\end{itemize}

One can implement $\Delete(H, e)$ as $\Decreasekey(H, e, -\infty)$ followed by $\Deletemin(H)$, where $-\infty$ denotes a dummy key smaller than all those in the key space.   We shall assume this implementation and not further mention $\Delete$: In all the data structures we consider the time bound of $\Delete$ is the same as that of $\Deletemin$ to within a constant factor.

In discussing efficiency bounds for heap operations, we shall denote by $n$ the number of items in the heap or heaps involved in the operation.  To simplify certain bounds we shall assume that $n \geq 8$.  We denote $\log_2 \max\{n, 1\}$ by $\lg n$.  

There is a vast literature on heap implementations.  See~\cite{BrodalSurvey} for a survey.  If the only operations on keys are comparisons, a sequence of $n$ inserts followed by $n$ delete-mins will sort $n$ numbers.  Thus either insert or delete-min must take $\Omega(\lg n)$ amortized time.  Most of the comparison-based heap implementations in the literature support all heap operations in $\OO(\lg n)$ time per operation, possibly amortized. 

For certain applications, notably Dijkstra's shortest path algorithm, this bound may not be good enough.  On an $n$-vertex, $m$-arc graph, Dijkstra's algorithm does at most $n$ inserts and at most $n$ delete-mins, but the number of decrease-key operations can be as many as $m-n+1$.  Making decrease-key operations faster can thus speed up Dijkstra's algorithm for certain graph densities.

The \emph{Fibonacci heap}~\cite{Fibonacci} was designed to be optimal for such an application.  In a Fibonacci heap, each delete-min takes $\OO(\lg n)$ amortized time, but every other operation, including decrease-key, takes $\OO(1)$ amortized time.  Hence Dijkstra's algorithm runs in $\OO(m+n\lg n)$ time if implemented using a Fibonacci heap.  In fact, the Fibonacci heap sets the theoretical standard: its amortized bounds are best possible for any comparison-based heap. In practice, Fibonacci heaps are not especially efficient, however.  Their theoretical efficiency comes from maintaining a certain kind of balance.  The time and extra space needed to maintain balance makes them slower in practice than simpler implementations.

A natural question, then, is whether there is a heap implementation that has the same (optimal) amortized bounds as Fibonacci heaps and is simpler as well as faster in practice.  The \emph{pairing heap}~\cite{FSST86} was devised as an answer to this question.  The inventors of this data structure proved that pairing heaps take $\OO(1)$ time per make-heap and find-min (both amortized and worst-case) and take $\OO(\lg n)$ amortized time for each other operation.  Iacono~\cite{IaconoPairing} later reduced the bound for insert and meld to $\OO(1)$, a result Sinnamon and Tarjan~\cite{SimplerPairing} recently simplified.  Pettie~\cite{PettiePairing}, in work predating Iacono's, obtained an amortized bound of $\OO(4^{\sqrt{\lg\lg N}})$ for insert, meld, and decrease-key, where $N$ is the maximum number of items ever in a heap, while preserving the $\OO(\lg n)$ bound for delete-min.  Sinnamon~\cite{phd2022} simplified and slightly improved Pettie's result, reducing the amortized bound for insert, meld, and decrease-key to $\OO(\sqrt{\lg\lg N}2^{\sqrt{2\lg\lg N}})$.

The known bounds for decrease-key may be far from tight: The inventors of pairing heaps conjectured that they have the same amortized efficiency as Fibonacci heaps, which if true would make decrease-key an $\OO(1)$-time operation.  But Fredman~\cite{FredmanLB} disproved this conjecture by showing that pairing heaps and similar data structures need $\Omega(\lg\lg n)$ amortized time per decrease-key if insert and delete-min take $\OO(\lg n)$ amortized time.  Whether the efficiency of pairing heaps matches Fredman's lower bound remains an open question.  

Nevertheless, the pairing heap is simple, robust, and efficient in practice. It is \emph{self-adjusting}: It does not explicitly maintain a ``balanced'' state, ready to serve any incoming request quickly at all times, but rather reorganizes itself in a simple, uniform way that guarantees efficiency in an amortized sense, i.e., over a long sequence of operations. In general a self-adjusting data structure stores little or no auxiliary information about its state -- it does not need to, since it does not enforce any specific structure.  Besides saving space, self-adjusting heaps are often efficient in practice and easy to implement, as compared to ``balanced" heaps such as Fibonacci heaps~\cite{LarkinSenTarjan}.

The question remains: Is there \emph{any} self-adjusting heap that is both practical and theoretically competitive with Fibonacci heaps?  Any heap that satisfies the conditions of Fredman's lower bounds cannot match the efficiency of Fibonacci heaps on decrease-key operations.  Moreover, an $\Omega(\lg\lg n)$ amortized lower bound for decrease-key was proved by Iacono and {\"O}zkan~\cite{IaconoOzkan} for the \emph{pure heap} model, which captures another large class of self-adjusting heaps.  The self-adjusting heap models of Fredman and of Iacono and {\"O}zkan do not cover all the possibilities.  In particular, they assume that the decrease-key operation is done by cutting a node and its subtree from the existing tree, as we describe in Section~\ref{S:canonical-framework}, but there are other possibilities, such as removing the item from its node and putting it in a new node, as in hollow heaps~\cite{HollowHeaps}.  Indeed, the term ``self-adjusting heap'' does not have a precise technical definition.   Nevertheless, these lower bounds do provide some evidence that no heap that is intuitively self-adjusting and natural can support decrease-key in $\OO(1)$ time.

Elmasry~\cite{Elmasry09,Elmasry17} presented a modification of pairing heaps that \emph{does} support decrease-key in $\OO(\lg\lg n)$ amortized time.  It keeps any items that have undergone a decrease-key in a buffer, and periodically groups and sorts the buffer items by key to decide how to restructure the heap.  Elmasry's bound for inserts is $\OO(1)$; his bound for melds is $\OO(\lg\lg n)$.  Elmasry's heap is not subject to either of the lower bounds, however, and sorting buffer groups is arguably not ``self-adjusting''.

We ask instead whether there is a self-adjusting heap subject to either Fredman's or Iacono and {\"O}zkan's lower bounds that matches these bounds, i.e., that takes $\OO(\lg n)$ time per delete-min, $\OO(\lg\lg n)$ time per decrease-key, and $\OO(1)$ time for each other heap operation.  Although many heap implementations have been invented and analyzed since the invention of pairing heaps, no work until ours has answered this question.

We analyze three self-adjusting heaps: multipass pairing heaps, slim heaps, and smooth heaps.

Multipass pairing heaps are a variant of pairing heaps proposed and analyzed in the original pairing heaps paper~\cite{FSST86}.  The amortized bounds obtained there are $\OO(1)$ for make-heap and find-min, $\OO(\lg n \cdot \lg\lg n/\lg\lg\lg n)$ for delete-min, and $\OO(\lg n/\lg\lg\lg n)$ for insert, meld, and decrease-key.\footnote{The paper~\cite{FSST86} only claims $\OO(\lg n \lg\lg n/\lg\lg\lg n)$ for each operation, but the potential function used there yields the better bounds we have stated.}  These bounds are not tight: 32 years later, Dorfman, Kaplan, Kozma, Pettie, and Zwick~\cite{DKKPZ} proved that delete-min takes $\OO(\lg n \cdot \log^*n \cdot 2^{\log^*N})$ amortized time, where $N$ is the maximum number of items ever in a heap, and $\log^*$ is the iterated $\log$ function.  They did not explicitly address the other heap operations, since delete-min was their focus, but their potential function yields bounds of $\OO(\lg n/(\lg\lg n)^2 \cdot 2^{\log^*N})$ for insert, meld, and decrease-key, and $\OO(1)$ for find-min and make-heap. 

We obtain the following amortized time bounds for operations on multipass pairing heaps: $\OO(\lg n)$ for delete-min, $\OO(\lg\lg n \cdot\lg\lg\lg n)$ for decrease-key, and $\OO(1)$ for the other heap operations.  This result \emph{almost} answers our question: only the bound for decrease-key differs from the lower bounds, by a factor of $\lg\lg\lg n$.  Whether this factor is inherent or can be eliminated by a better analysis we leave as an open problem.

To completely answer our question, we analyze two more-recently invented self-adjusting heaps, \emph{smooth heaps} and \emph{slim heaps}.  Smooth heaps were invented by Kozma and Saranurak~\cite{KS19} as one result of their study of a duality between heaps and binary search trees (BST's), in which they showed that an execution of a sequence of operations on a binary search tree corresponds to a dual execution of a related sequence of heap operations, with the two sequences taking the same time to within a constant factor.  This duality maps the \emph{greedy} BST algorithm to the smooth heap.  The greedy BST algorithm has many if not all of the properties of the splay algorithm~\cite{DHIKP09,Fox11,Luc88,mehlhorn2015greedy,Mun00}, but unlike splay it is not practical to implement.  On the other hand, smooth heaps are simple, as we shall see.  Some results about the greedy BST algorithm transfer to smooth heaps through the duality, but the correspondence is limited to the \emph{sorting mode}, in which the heap operations are a sequence of insertions followed by a sequence of delete-mins.

The slim heap, a close variant of the smooth heap, was introduced by Hartmann, Kozma, Sinnamon, and Tarjan~\cite{HKST21}.  They extended smooth heaps and slim heaps to support the entire repertoire of heap operations, and they did a direct analysis of both data structures, obtaining amortized bounds of $\OO(\lg n)$ for delete-min and decrease-key and $\OO(1)$ for the other heap operations.  They also showed that Elmasry's mechanism can be used in both data structures to reduce the amortized time for decrease-key to $\OO(\lg\lg n)$.  Like pairing heaps, the implementations that use Elmasry's idea are not subject to Iacono and {\"O}zkan's lower bound.

We show that both slim heaps and smooth heaps without the Elmasry mechanism answer our question in the affirmative: Both take $\OO(\lg n)$ time per delete-min, $\OO(\lg\lg n)$ time per decrease-key, and $\OO(1)$ time for each other heap operation.  These data structures are subject to Iacono and {\"O}zkan's lower bound, and are the first data structures in their model to match their lower bounds for all operations.

Our analytical tools are interesting in their own right.  We adapt and extend an idea of Iacono~\cite{IaconoPairing} that allows us to reduce the amortized time per insert and meld to $\OO(1)$.  More significantly, we use a potential function based on one used in Pettie's analysis of pairing heaps~\cite{PettiePairing}, but modified to grow much more slowly.  The use of this potential function in a novel analysis allows us to significantly reduce the amortized time bound for decrease-key.

Table~\ref{Ta:known-bounds} contains a summary of known bounds for pairing heaps, multipass pairing heaps, slim heaps, and smooth heaps, including our new results.

\begin{table}
\centering
\setlength\tabcolsep{1.6pt}
\def\arraystretch{2}
\begin{tabular}{|c|c|c|c|}
\hline
\textbf{Heap} & delete-min/delete & decrease-key & insert/meld \\
\hline\hline
Fibonacci~\cite{Fibonacci} & $\OO(\lg n)$ & $\OO(1)$ & $\OO(1)$\\
\hline
Pairing~\cite{FSST86} & $\OO(\lg n)$ & $\OO(\lg n)$ & $\OO(\lg n)$\\
\hline
Pairing~\cite{IaconoPairing} & $\OO(\lg n)$ & $\OO(\lg n)$ & $\OO(1)$\\
\hline
\makecell{Pairing~\cite{Elmasry09, Elmasry17}\\ (variant decrease-key)} &  $\OO(\lg n)$ & $\OO(\lg\lg n)$ & $\OO(1)$\big/$\OO(\lg\lg n)$\\
\hline
Pairing~\cite{PettiePairing} & $\OO(\lg n)$ & $\OO(2^{2\sqrt{\lg\lg N}})$ & $\OO(2^{2\sqrt{\lg\lg N}})$\\
\hline
Pairing~\cite{phd2022} & $\OO(\lg n)$ & $\OO(\sqrt{\lg\lg N}2^{\sqrt{2\lg\lg N}})$ & $\OO(\sqrt{\lg\lg N}2^{\sqrt{2\lg\lg N}})$\\
\hline
Multipass pairing~\cite{FSST86} & $\OO\Bigr(\frac{\lg n \cdot \lg\lg n}{\lg\lg\lg n}\Bigr)$ & $\OO\Bigr(\frac{\lg n}{\lg\lg\lg n}\Bigr)$ & $\OO\Bigr(\frac{\lg n}{\lg\lg\lg n}\Bigr)$\\
\hline
Multipass pairing~\cite{DKKPZ} & $\OO(\lg n \cdot \log^*n \cdot 2^{\log^*N})$ & $\OO\Bigr(\frac{\lg n \cdot 2^{\log^*N}}{\lg^2\lg n}\Bigr)$ & $\OO\Bigr(\frac{\lg n \cdot 2^{\log^*N}}{\lg^2\lg n}\Bigr)$\\
\hline
\makecell{Multipass pairing\\\relax [this work]} & $\OO(\lg n)$ &  $\OO(\lg\lg n \cdot \lg\lg\lg n)$ & $\OO(1)$\\
\hline
Slim \& Smooth~\cite{HKST21} &  $\OO(\lg n)$ & $\OO(\lg n)$ & $\OO(1)$\\
\hline
\makecell{Slim \& Smooth~\cite{HKST21}\\ (variant decrease-key)} &  $\OO(\lg n)$ & $\OO(\lg\lg n)$ & $\OO(1)$\\
\hline
\makecell{Slim \& Smooth\\\relax [this work]} & $\OO(\lg n)$ & $\OO(\lg\lg n)$ & $\OO(1)$\\
\hline
\end{tabular}
\caption{New and previous bounds for self-adjusting heaps.  The bounds for Fibonacci and pairing heaps are included for comparison.  All bounds are amortized. Here $n$ denotes the number of items in the heap at the time of the operation and $N$ denotes the maximum number of items ever in a heap. For the bounds in this work, $n$ may be decreased to the number of items in the heap that will eventually be deleted. The find-min and make-heap operations take constant time in all cases. The heaps labeled ``variant decrease-key'' use Elmasry's buffered sorting method for decrease-key~\cite{Elmasry09, Elmasry17}. Excluding these heaps and Fibonacci heaps, if insert and delete-min take $\OO(\lg n)$ amortized time in a self-adjusting heap, then decrease-key must take $\Omega(\lg\lg n)$ amortized time~\cite{FredmanLB, IaconoOzkan}.}
\label{Ta:known-bounds}
\end{table}

An additional benefit of our analysis is that in our bounds we can take $n$ to be the number of items in the heap or heaps that are eventually deleted.  That is, items inserted but never deleted do not count, except for a cost of $\OO(1)$ per insertion of such an item. 

Our paper is a combination and significant revision of two conference papers~\cite{SODAMultipass, SODASmoothSlim}.  In addition to this introduction, our paper contains ten sections.  Section~\ref{S:canonical-framework} presents a framework for heap implementations that encompasses all the data structures we consider, and many others.  Section~\ref{S:self-adjusting-heaps} presents pairing heaps, multipass pairing heaps, slim heaps, and smooth heaps as instances of the canonical framework.  Section~\ref{S:terminology} summarizes the terminology and definitions used in our analysis. Section~\ref{S:overview} outlines the proof structure we use in the analysis of each of the heaps. Section~\ref{S:time-shifting-lemmas} contains a few central lemmas that apply to all three heaps. Sections~\ref{S:multipass-analysis}, \ref{S:slim-analysis}, and~\ref{S:smooth-analysis} analyze multipass pairing, slim, and smooth heaps, respectively.  Section~\ref{S:lazy} extends our results to lazy (forest) versions of the three data structures.  Section~\ref{S:remarks} contains some concluding remarks and open problems.

\section{A Canonical Heap Framework}
\label{S:canonical-framework}

All the heap implementations we study, and many others, are instances of a \emph{canonical heap framework}.  This framework represents a heap by a rooted tree whose nodes are the heap items.  That is, the data structure is \emph{endogenous}~\cite{tarjan1983data}.  We denote by $v.\key$ the key of heap node $v$.  The tree is \emph{heap-ordered} by key: If $v$ is the parent of $w$, $v.\key \leq w.\key$.  Thus the root of the tree is a node of minimum key.  Access to the tree is via the root, so a find-min takes $\OO(1)$ time.

Heap-ordered trees are modified by \emph{links} and \emph{cuts}.  A link of the roots of two node-disjoint trees combines the two trees into one by making the root of smaller key the parent of the root of larger key, breaking a tie arbitrarily.  The new parent is the \emph{winner} of the link; the new child is the \emph{loser} of the link.  We denote by $vw$ a link of two roots $v$ and $w$ won by $v$ and lost by $w$.  We use the term ``link" to denote the new parent-child pair $vw$ as well as the operation of making $w$ a child of $v$.

A \emph{cut} of a child $w$ breaks the link between $w$ and its parent, say $v$.  This breaks the tree containing $v$ and $w$ into two trees, one containing $w$ and all its descendants in the original tree, the other containing all nodes in the original tree that are not descendants of $w$.

Given an appropriate tree representation~\cite{FSST86}, a link or cut takes $\OO(1)$ time.

The canonical framework implements the heap operations as follows:

\begin{itemize}
\setlength\itemsep{1em}
\item[] $\Makeheap()$: Create and return a new, empty tree.

\item[] $\Findmin(H)$: If $H$ is nonempty, return its root; otherwise, return null.

\item[] $\Insert(H, e)$: If $H$ is nonempty, link $e$ with the root of $H$; if $H$ is empty, make $e$ the root of $H$.

\item[] $\Meld(H_1, H_2)$: If $H_1$ or $H_2$ is empty, return the other; if both are nonempty, link their roots and return the resulting tree.

\item[] $\Decreasekey(H, e, k)$: Change the key of $e$ to $k$.  If $e$ is not the root of $H$, cut $e$ from its parent and link $e$ with the root of $H$.

\item[] $\Deletemin(H)$: Delete $\Findmin(H)$ from $H$.  This makes each child of the deleted root into a new root.  Repeatedly link two of these new roots until only one remains.

\end{itemize}

The only flexibility in the canonical framework is in the algorithm for linking roots in delete-mins.  The self-adjusting heaps we study determine which links to do by maintaining each set of children as an ordered list.  When discussing such a list, we shall use \emph{left} and \emph{right} to mean ``toward the front of the list" and ``toward the back of the list," respectively.  With the exception of certain links done during delete-min operations in smooth heaps (as we discuss in the next section), each link makes the loser the new \emph{leftmost} child of the winner.  We call such a link a \emph{one-sided} link.  If all links are one-sided, the children of a node are in decreasing order by link time, the leftmost child being the latest.

In a delete-min, deletion of the root makes its list of children into a list of roots.  The heaps we study link only adjacent roots on this list, removing the loser and leaving the winner in its current position on the list.  We call a link $vw$ done during a delete-min a \emph{left} or \emph{right} link, respectively, if the loser $w$ is left or right of the winner $v$ on the root list before the link.  Links done during insertions, melds, and decrease-keys are neither left nor right.

The canonical framework as we have presented it is the \emph{eager} or \emph{tree} framework.  An alternative is the \emph{lazy} or \emph{forest} framework, used for example in the original version of Fibonacci heaps.  In this framework, a heap is not a single tree but a set of trees, with the roots stored in a circular list accessed via a root of minimum key, the \emph{min-root}.  An insertion into a nonempty heap does not do a link but makes the new node into a one-node tree and adds it to the root list.  A meld of two nonempty trees catenates their root lists.  A decrease-key  of a child $w$ cuts $w$ from its parent and adds $w$ to the list of roots.  In each case the operation also updates the min-root.  A delete-min deletes the min-root, catenates its list of children with the root list remaining after deletion of the min-root, and repeatedly links pairs of roots on the root list until only one remains.

The versions of slim and smooth heaps analyzed in~\cite{HKST21} use the lazy framework, but we shall study the eager versions, since their implementation and analysis are simpler.  Our bounds for multipass pairing, slim, and smooth heaps also hold for the lazy versions, as we discuss in Section~\ref{S:lazy}.  In general one can modify a heap implementation in the lazy framework to produce one in the eager framework and vice-versa, although this may require redefining how linking is done in a delete-min, and it affects the efficiency of the heap operations.  For example, there is an eager implementation of Fibonacci heaps~\cite{KaplanTZ14}, but it requires doing ``unbalanced" links as well as balanced ones.  The lower bound models of Fredman~\cite{FredmanLB} and Iacono and {\"O}zkan~\cite{IaconoOzkan} encompass both the eager and the lazy framework but place constraints on the links that are allowed in delete-min operations.

\begin{figure}[ht!]
\centering
\includegraphics[width=4.3in]{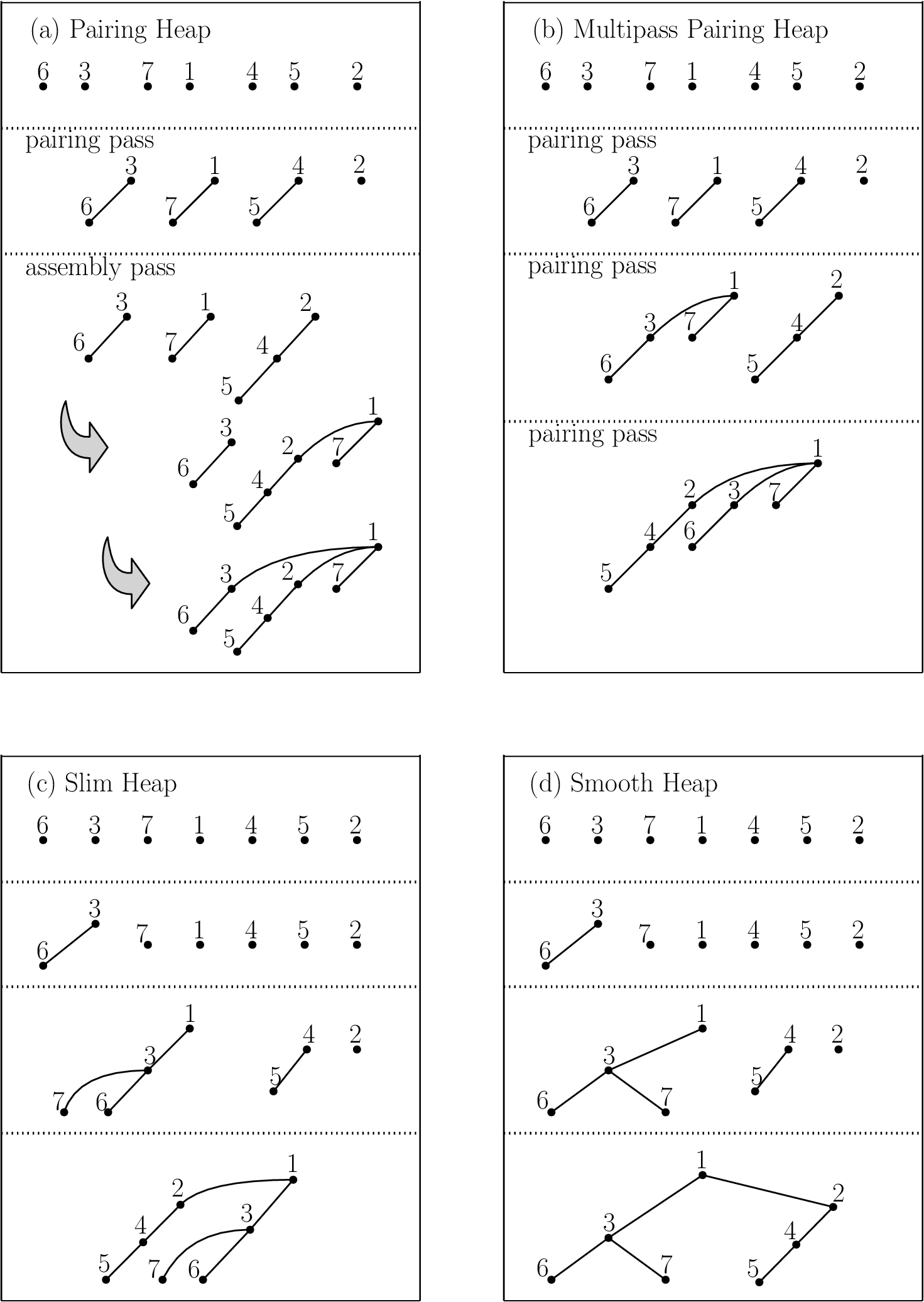}
\caption{Linking during delete-min, after the original root is deleted, for the pairing heap (a), multipass pairing heap (b), slim heap (c), and smooth heap (d). The circles represent nodes and the numbers indicate their keys. In (c) and (d), the state of the heap is shown after every two or three links. Smooth heaps do \emph{stable} links (explained in this section), so the losers of links in (d) are positioned on the left or the right of the winner, according to their relative order before the link.}
\label{F:heap-examples}
\end{figure}

\section{Self-Adjusting Heaps}\label{S:self-adjusting-heaps}

The heaps we study are all instances of the canonical framework, differing only in the algorithm by which they do links during a delete-min, and in how the link operation is implemented.  We specify this algorithm for each heap implementation.  We include pairing heaps for reference.    

The \emph{pairing heap} links roots during a delete-min in two passes.  The first pass, \emph{pairing}, links the roots on the root list in adjacent pairs, the first with the second, the third with the fourth, and so on.  If there is an odd number of roots, the rightmost root is not linked during the pairing pass.  The second pass, \emph{assembly}, repeatedly links the rightmost root with its left neighbor until only one root remains.  See Figure~\ref{F:heap-examples}(a).

The \emph{multipass pairing heap} links roots during a delete-min by doing repeated pairing passes until only one root is left. See Figure~\ref{F:heap-examples}(b).

Slim and smooth heaps both link roots during a delete-min by doing \emph{leftmost locally maximum linking}: Find the leftmost three consecutive roots $u$, $v$, $w$ on the root list such that $v.\key \geq \max\{u.\key, w.\key\}$, and link $v$ with whichever of $u$ and $w$ has larger key.  Repeat until there is only one root.  If $u$ and $w$ have the same key, break the tie by linking $v$ with $u$.  As special cases, if $v$ is the leftmost root, $w$ is its right neighbor, and $v.\key \geq w.\key$, link $v$ and $w$; if all the roots are in strictly increasing order by key, link the rightmost two roots.  The two special cases are equivalent to adding leftmost and rightmost dummy roots, each with key $-\infty$, and doing leftmost locally maximum linking until one non-dummy root remains.  See Figure~\ref{F:heap-examples}(c) and (d).

It is straightforward to do leftmost locally maximum linking in $\OO(1)$ time and at most two comparisons per link~\cite{KS19}: After deleting the original root, initialize the current root to be the leftmost root on the root list.  If the current root is not a local maximum, replace the current root by its right neighbor; if it is, link it to its neighbor of larger key and replace the current root by the winner of the link.  Repeat until only one root remains.

\begin{remark}
One can generalize leftmost locally maximum linking to allow \emph{arbitrary} locally maximum links: Given a root whose key is larger than those of its neighbors, link it with whichever neighbor has larger key.  This works as long as ties in keys are broken consistently.  See~\cite{HKST21}.
\end{remark}

The only difference between slim and smooth heaps is in how they maintain the order of children when doing links during delete-min operations. Slim heaps do one-sided links, defined in the previous section: The loser of a link during a delete-min becomes the new leftmost child of the winner.  Smooth heaps do \emph{stable} links: The loser of a link during a delete-min becomes the new leftmost or rightmost child of the winner if the loser was left or right of the winner on the root list, respectively.  See Figure~\ref{F:s-heaps}.  Pairing heaps, multipass pairing heaps, Fibonacci heaps, and many other heap implementations do one-sided links (although in Fibonacci heaps and similar ``balanced" heap implementations, links are not determined by order on the root list but by balance information, so the order of children is irrelevant).  In both slim and smooth heaps, each link done during an insertion, meld, or decrease-key is one-sided.

\begin{figure}[h!]
\centering
\includegraphics[width=4.5in]{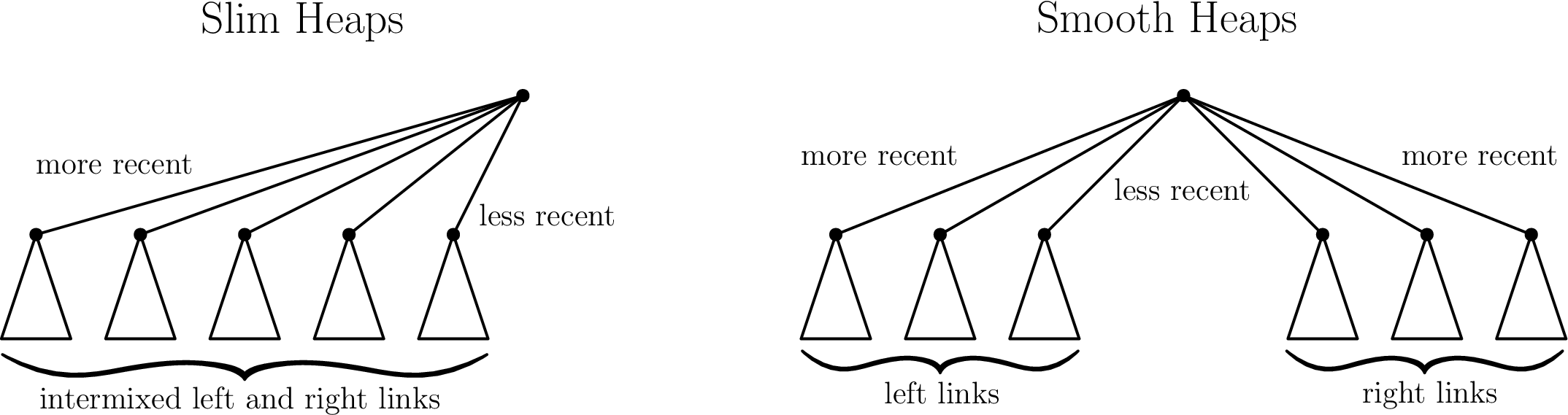}
\caption{Linking in slim and smooth heaps.  In a slim heap, the loser of a link becomes the leftmost child of the winner.  In a smooth heap, the loser becomes the leftmost or rightmost child of the winner, depending on whether the link is left or right.}
\label{F:s-heaps}
\end{figure}

\section{Terminology}\label{S:terminology}

Our presentation of the analysis of these data structures uses quite a bit of terminology, much of which we introduce here.  The following definitions apply to all the heaps we study, except as noted.  We shall repeat these definitions as we use them.

\begin{itemize}
\setlength\itemsep{1em}

\item [] A node is \emph{temporary} if it is eventually deleted, \emph{permanent} otherwise.

\item [] A link is an \emph{insertion link} if it is done during an insertion or meld, a \emph{decrease-key} link if it is done during a decrease-key, a \emph{deletion link} if it is done during a delete-min. 

\item [] A deletion link $vw$ is a \emph{left link} if $w$ precedes $v$ on the root list before the link, a \emph{right link} otherwise.  Insertion and decrease-key links are neither left nor right.

\item [] A link is a \emph{d-link} if it is cut by a delete-min, a \emph{k-link} if it is cut by a decrease-key, an \emph{f-link} if it is never cut. (The \emph{f} stands for ``final''.)

\item [] A link is \emph{real} if its winner and loser are both temporary and it is a d-link, \emph{phantom} otherwise.

\item [] A child node is \emph{a real child} if it is connected to its parent by a real link,  \emph{a phantom child} otherwise.  Every permanent child node is a phantom child.  Each time a temporary node loses a link, it becomes a real or phantom child, depending on whether the link it lost is a d-link or a k-link: A temporary node cannot lose an f-link.

\item [] A link done during pairing pass $i$ of a delete-min in a multipass pairing heap is a \emph{pass-$i$ link}.

\item [] A child node in a smooth heap is a \emph{left child} if the link it lost to its parent is an insertion, decrease-key, or left link, a \emph{right child} if the link it lost to its parent is a right link.

\item [] If $w$ is a temporary node, its \emph{real subtree} is the partial subtree induced by the set of descendants of $w$ (including $w$) that are connected to $w$ by a path of real links.

\item [] If $w$ is a temporary node, its \emph{size} $w.s$ is the number of nodes in its real subtree.  Permanent nodes have no size.  The size of a temporary node changes only when it wins a real link. 

\item [] If $w$ is a temporary node, it has a \emph{mass} $w.m$.  For multipass pairing heaps and slim heaps, we define $w.m$ as follows: If $w$ is a phantom child or a root not in the middle of a delete-min, $w.m = w.s$.  If $w$ is a real child of a temporary node $v$ in a multipass pairing heap or a slim heap, $w.m$ is the size of $v$ just after link $vw$ was done.  If $w$ is a root in the middle of a delete-min in a multipass pairing heap or a slim heap, $w.m$ is $w.s$ plus the sum of the sizes of all the temporary roots to its right on the root list, with certain exceptions depending on the type of heap.  We specify these exceptions in Sections~\ref{S:multipass-analysis} and~\ref{S:slim-analysis}.  We define the mass of temporary nodes in a smooth heap in Section~\ref{S:smooth-analysis}.  Permanent nodes have no mass.  The mass of a temporary node can change, but in a multipass pairing heap or a slim heap the mass of a node changes only when the node wins or loses a real link, or its parent is deleted by a delete-min. In smooth heaps, masses can change for other reasons, as we discuss in Section~\ref{S:smooth-analysis}.

\item [] If $w$ is a temporary node, the \emph{rank} $w.r$ of $w$ is $\lg\lg(w.m/w.s)$. 

\item [] For multipass pairing heaps, if $w$ is a temporary node and $b$ a nonnegative integer, the \emph{$b$-reduced rank} $w.r(b)$ of $w$ is $\max\{w,r - b, 0\} = \lg(\lg(w.m/w.s)/2^b)$.

\item [] If $vw$ is a real link, then $v.r_w$ and $v.r'_w$ and $w.r_v$ and $w.r'_v$ are the rank of $v$ before and after the link, and the rank of $w$ before and after the link, respectively.  This notation does not distinguish between the winner and the loser of the link, but the context will provide the distinction.

\item []  Let $vw$ be a real deletion link. The \emph{winner link} of $vw$ is the last link lost by $v$ before it wins $vw$.  This is the (unique) link $uv$ cut by the delete-min that does $vw$.  Since $u$ and $v$ are both temporary and $uv$ is cut by a delete-min, $uv$ is real.

\item []  Let $vw$ be a real deletion link with winner link $uv$. The \emph{link rank} of $vw$ is $v.r_u - u.r'_v$, the rank of $v$ just before $uv$ is done (which happened sometime before the deletion of $u$) minus the rank of $u$ just after $uv$ is done.\footnote{This definition of link rank differs from the one in our conference papers~\cite{SODAMultipass, SODASmoothSlim}, as does the next definition, of $w$-shift.  The new definitions allow us to simplify the analysis a bit.}

\item [] Let $vw$ be a real deletion link with winner link $uv$. The \emph{$w$-shift} of $vw$ is $v.r_w' - u.r_w'$, the change in the rank of the parent of $w$ resulting from cutting $uw$ and doing $vw$.

\item [] Let $vw$ be a real deletion link with winner link $uv$.  The \emph{$v$-shift} of $vw$ is $v.r_u - v.r_w'$, the net decrease in the rank of $v$ from just before link $uv$ was done until just after $vw$ was done. 

\item [] Let $vw$ be a real deletion link with winner link $uv$. The \emph{$u$-shift} of $vw$ is $u.r_w' - u.r_v'$, the rank of $u$ just after $uw$ was done minus the rank of $u$ just after $uv$ was done.  
\end{itemize}

Let $vw$ be a real deletion link with winner link $uv$.  The following equality, which we call the \emph{link-rank equality}, relates the link rank, the $w$-shift, the $v$-shift, and the $u$-shift of $vw$:
\begin{align*}
v.r_u-u.r_v' = (v.r_w' - u.r_w') + (v.r_u - v.r_w') + (u.r_w' - u.r_v')
\end{align*}

That is, the rank of $vw$ equals its $w$-shift plus its $v$-shift plus its $u$-shift.  The equality follows by canceling equal terms on the right-hand side.

If $f$ is a function of $n$, we use the terminology ``$f(n)$ per delete-min" to mean the sum of $f(n_i)$ over the delete-min operations, where $n_i$ the number of temporary nodes in the heap just before the $i$-th delete-min.  Similarly, ``$f(n)$ per decrease-key" is the sum of $f(n_i)$ over all the decrease-key operations, where $n_i$ is the number of temporary nodes in the heap during the $i$-th decrease-key, and ``$f(n)$ per k-link" is the sum of $f(n_i)$ over all the k-links, where $n_i$ is the number of temporary nodes in the heap at the time the $i$-th k-link was done.

\section{Overview of the Analysis}\label{S:overview}

To bound the total time of a sequence of heap operations, it is enough to count the number of links they do, since each link takes $\OO(1)$ time, and the time spent per operation is $\OO(1)$ plus at most a constant times the number of links it does.  The number of melds that do a link is at most the number of insertions that do not do a link, since each link done by a meld reduces the number of non-empty heaps by one, and the only way to create a non-empty heap is to do an insertion that does not do a link.  It follows that the number of insertion links is at most the number of insertions.  Thus the total number of links is at most one per insertion plus one per decrease-key plus the total number of deletion links.  Hence it suffices to count the total number of deletion links. We denote this number by $d$.

In stating bounds we denote by $n$ the number of temporary nodes in the heap or heaps undergoing an operation.  A benefit of our analysis is that our bounds depend only on $n$, not on the total number of nodes.  That is, nodes that are inserted but never deleted do not count in the bounds.  To simplify the statement of certain bounds we assume $n \geq 8$: Any operation on a heap or heaps with fewer than $8$ nodes takes $\OO(1)$ time, and all sizes, masses, and ranks in such a heap or heaps are bounded by a constant.

In the rest of this section we outline the ideas in our analysis.  The details differ for each type of heap, but the high-level approach is the same.  Our overview is necessarily inaccurate, since each type of heap requires adaptations in the general approach.

There are two parts.  The first, simpler part is to show that it suffices to count real deletion links, or in the case of multipass pairing heaps just a subset of these links.  The second, much more complicated part is to bound the number of such links.  For this we use node and link ranks. Ultimately we charge each real deletion link to some heap operation or to some change or changes in the sum of node ranks.  For this to work we need to bound the sum of the magnitudes (absolute values) of the changes in node ranks.  Obtaining this bound is complicated.  We shall state the bound when we need it but postpone its proof to the end of the analysis.

Central to our analysis are link ranks.  We shall prove that every link rank is non-negative and that the sum of all link ranks is at most our desired bound on real deletion links.  We do the latter indirectly, by bounding the sums of the $w$-shifts, $v$-shifts, and $u$-shifts of the real deletion links and applying the link-rank equality.  The bound on the sum of link ranks implies that the number of real deletion links with large rank (at least $1$) is within our desired bound.

We shall also prove that each link of small rank corresponds either to a node of small rank existing at a certain time, a number we can bound, or to a decrease of at least $1$ in a node rank.  The number of such decreases is within our desired bound, by our bound on the sum of the magnitudes of node-rank changes.

We do these steps in the following order:

\begin{itemize}[align=left]
\setlength\itemsep{1em}
\item[Step 1:] We bound the sum of the $w$-shifts.  This step does not depend on the type of heap.
\item[{Step 2:}] We show that it suffices to count real deletion links, or in the case of multipass pairing heaps to count a subset of these.
\item[Step 3:] We show that all link ranks are non-negative.
\item[Step 4:] We bound the number of real deletion links of small rank.
\item[Step 5:] We bound the sum of link ranks, by separately bounding the sum of the $v$-shifts and the sum of the $u$-shifts.  This step uses our bound on the sum of the magnitudes of node-rank changes, or in the case of multipass pairing heaps a more general bound.
\item [Step 6:] We prove our bound on the sum of the magnitudes of node-rank changes, or in the case of multipass pairing heaps a more general bound.
\end{itemize}

To do Step 6, we observe that node ranks, like link ranks, are always non-negative, and the sum of node ranks is initially zero, since initially there are no nodes.  It follows that the sum of the node-rank \emph{increases} is always at least as large as the sum of the node-rank \emph{decreases}, and hence that the sum of the magnitudes of the node-rank changes is at most twice the sum of the node-rank increases.  We bound the sum of these increases. 

We proceed with the details.  We begin with two results having to do with time-shifting, whose proofs do not depend on the type of heap.  The rest of our analysis depends on the type of heap.  We analyze multipass pairing heaps, slim heaps, and smooth heaps in turn, in Sections~\ref{S:multipass-analysis}, ~\ref{S:slim-analysis}, and~\ref{S:smooth-analysis} respectively.  In our analysis of multipass pairing heaps, we generalize the notion of node rank.  This allows us to reduce the effect of the changes in node ranks caused by pass-$i$ links as $i$ increases.  In our analysis of slim heaps, we extend a different notion, that of size.  This allows us to bound the effect of phantom links.  To analyze smooth heaps, we split the treap formed during a delete-min into three parts: a left side, a right side, and a central path.  We apply the analysis of slim heaps separately to the left side and to the right side, and we apply a new analysis to the central path.

\section{Time-Shifting Lemmas}\label{S:time-shifting-lemmas}

Our first time-shifting lemma allows us to shift the cost of each k-link from the time the k-link is done to the later time when it is cut by a decrease-key:

\begin{lemma}\label{L:k-link-shift}
In any sequence of heap operations starting with no heaps, the quantity $\lg\lg n$ per k-link is at most $\lg\lg n$ per decrease-key plus $3$ per delete-min.
\end{lemma}

\begin{proof}
We use a credit argument.  We allocate $\lg\lg n$ credits to each decrease-key and $3$ to each delete-min and show that the total number of credits is at least $\lg\lg n$ per k-link.

Let $vw$ be a k-link.  Let $n'$ and $n$ be the number of nodes in the heap containing $w$ just after $vw$ is done and when $vw$ is cut, respectively.  We use the $\lg\lg n$ credits allocated to the decrease-key that cuts $vw$ to partially or fully cover the $\lg\lg n'$ credits needed by link $vw$.  This suffices unless $n' > n$, in which case we need an additional $\lg\lg n' - \lg\lg n$ credits.  These we obtain from the credits allocated to the delete-min operations.

A delete-min on a heap of size $n''$ distributes its three credits equally among the nodes in the heap after the delete-min, $3/(n''-1)$ to each.  From the time $vw$ is done until the time $vw$ is cut, $w$ receives at least $3\sum_{i=n}^{n'-1} 1/i \geq 2(\lg n' - \lg n) \geq \lg\lg n' - \lg\lg n$ credits, where the last inequality holds because the derivative of $2\lg x$ is at least that of $\lg\lg x$ for $x \geq 0$.  These credits cover the remainder of the credits needed by link $vw$.
\end{proof}

Our second time-shifting lemma bounds the sum of the $w$-shifts of the real deletion links.  Its proof uses the time-shifting idea in the proof of Lemma~\ref{L:k-link-shift}, an idea that we shall also use later.  Recall from Section~\ref{S:terminology} that if $vw$ is a real deletion link with winner link $uv$, its $w$-shift is $v.r_w' - u.r_w'$, the change in the rank of the parent of $w$ resulting from cutting $uw$ and doing $vw$.

\begin{lemma}\label{L:w-shifts}
In any sequence of operations on multipass pairing heaps, slim heaps, or smooth heaps starting with no heaps, the sum of the $w$-shifts of all real deletion links is at most $\lg\lg n$ per decrease-key plus $\lg\lg n + 3$ per delete-min.  
\end{lemma}

\begin{proof}
Let $w$ be a temporary node.  Consider an interval $I$ from the time $w$ is a root not in the middle of a delete-min until it next becomes a root not in the middle of a delete-min.  We sum the $w$-shifts of the real deletion links lost by $w$ during $I$, and then sum over all $w$, $I$ pairs.

The first link $vw$ lost by $w$ during $I$ is an insertion or k-link, which makes the rank of $v$ zero, since just after the link the size and mass of $v$ are equal.  During the rest of interval $I$, node $w$ can change parent, but only during a delete-min in which it loses a link.  It remains a real child, except temporarily in the middle of a delete-min, until it is the last remaining root in a delete-min, ending $I$, or it loses a phantom deletion link.  Once $w$ loses a phantom link $vw$, it does not change parent until $vw$ is cut by a decrease-key.  This ends $I$, since it makes $w$ a root not in the middle of a delete-min.  Let $uw$, $vw$, and $xw$ be three consecutive links lost by $w$, with $vw$ and $xw$ real deletion links.  Then $uw$ is real and is the winner link of $vw$ and $vw$ is the winner link of $xw$.  The $w$-shifts of $vw$ and $xw$ are $v.r_w' - u.r_w'$ and $x.r_w'-v.r_w'$, respectively.  Their sum telescopes to $x.r_w'-u.r_w'$.  More generally, the sum of the $w$-shifts of the real deletion links lost by $w$ during $I$ telescopes to a difference of two node ranks, whose value is at most $\lg\lg n'$, where $n'$ is the size of the heap containing $w$ when $w$ loses its last real link during $I$.

To finish the proof of the lemma we use the time-shifting idea in the proof of Lemma~\ref{L:k-link-shift}.  We allocate $\lg\lg n$ credits to each decrease-key and $\lg\lg n + 3$ to each delete-min, and show that the total number of credits is at least $\lg\lg n'$ per triple $w$, $I$, $n'$, where $n'$ is the size of the heap containing $w$ when $w$ loses its last real link during $I$.

Given a triple $w$, $I$, $n'$, let $n$ be the size of the heap containing $w$ when $I$ ends.  We use $\lg\lg n$ of the credits allocated to the heap operation that ends $I$ to partially or fully cover the $\lg\lg n'$ credits needed by the triple.  This suffices unless $n' > n$, in which case we need an additional $\lg\lg n' - \lg\lg n$ credits.  These we obtain from the extra $3$ credits allocated to each delete-min operation.

A delete-min on a heap of size $n''$ distributes its extra three credits equally among the nodes in the heap after the delete-min, $3/(n''-1)$ to each.  From the time $w$ loses its last real link during $I$ until the end of $I$, $w$ receives at least $3\sum_{i=n}^{n'-1} 1/i \geq 2(\lg n' - \lg n) \geq \lg\lg n' - \lg\lg n$ credits, which cover the remainder of the credits needed by the triple.
\end{proof}

\section{Analysis of Multipass Pairing Heaps}\label{S:multipass-analysis}

\subsection{Basic results}

Now we turn to the analysis of multipass pairing heaps.  Recall that they do repeated pairing passes to link the roots in a delete-min. We begin with three simple results.  First, during a delete-min, the number of links remaining to be done after pairing pass $i$ is exponentially small in $i$.

\begin{lemma}\label{L:few-pass-i-links}
Suppose a delete-min in a multipass pairing heap does a total of $k$ links.  Then the number of links remaining to be done after the first $i$ passes is at most $k/2^i$.
\end{lemma}

\begin{proof}
Suppose there are $j+1$ roots at the beginning of pairing pass $i$.  Then the delete-min does a total of $j$ more links.  Pass $i$ does $\lceil j/2 \rceil$ of these links, leaving $\lfloor j/2 \rfloor$ still to be done.  The lemma follows by induction on $i$.
\end{proof}

Second, it is enough to bound a subset of the real deletion links.  We call a deletion link \emph{primary} if it is a real pass-2 link whose winner link is a pass-1, pass-2, or pass-3 link.  Let $p$ be the number of primary links.  Recall from Section~\ref{S:overview} that we denote by $d$ the total number of deletion links.

\begin{lemma}\label{L:primary-links}
In any sequence of multipass pairing heap operations starting with no heaps, $d$ is at most $8p$ plus $20$ per insertion plus $16$ per decrease-key. 
\end{lemma}
\begin{proof}
Consider a delete-min that deletes root $u$.  Removal of $u$ makes its list of children into a list of new roots, which are then linked in pairing passes.  Group these new roots into sets of four consecutive nodes, leaving at most three roots ungrouped.  The first pairing pass does two links of pairs of nodes within the group.  The second pairing pass links the winners of these two pass-1 links.  Let $vw$ and $xy$ be the two pass-1 links (won by $v$ and $x$, respectively), and let $vx$ be the pass-2 link (won by $v$). We associate these three links with the group. If $v$ is not the one root that remains at the end of the delete-min, it also loses a link in a later pairing pass.  We associate this link with the group also.

We charge these three or four links to a particular event, as follows.  Recall from Section~\ref{S:terminology} that link $vx$ is real if $v$ and $x$ are both temporary (eventually deleted) and $vx$ is a d-link (cut by a delete-min); otherwise $vx$ is phantom.  Suppose $vx$ is phantom.  If $v$ and $x$ are temporary, $vx$ must be a k-link (cut by a decrease-key).  In this case we charge the three or four links to the decrease-key that cuts $vx$.  The total charge incurred in this way is at most four per decrease-key that cuts a link between two temporary nodes.  Suppose $v$ is permanent: The argument is symmetric if $x$ is permanent.  The link $vw$ must be a k-link or an f-link (never cut), since a permanent node cannot win a d-link. If $vw$ is a k-link, we charge the three or four links to the decrease-key that cuts it.  The total charge incurred is at most four per decrease-key that cuts a link won by a permanent node. If $vw$ is an f-link, then $w$ must be permanent, and we charge the three or four links to the insertion of $w$.  The total charge incurred is at most four per insertion of a permanent node.

The remaining possibility is that $vx$ is real.  Let $uv$ be the winner link of $vx$.
If $uv$ is a pass-1, pass-2, or pass-3 deletion link, then $vx$ is primary and we charge the three or four links to $vx$. The total charge incurred is at most $4$ per primary link. Otherwise, we charge the three or four links to $uv$, which is either an insertion link, a decrease-key link, or a deletion link that occurs after pass 3 of a delete-min. Since $v$ only wins one pass-2 link (namely $vx$), $uv$ is charged only once in this way. The total charge is at most $4$ per insertion of a temporary node plus $4$ per decrease-key plus $4d/8$, since the number of deletion links occurring after pass 3 of a delete-min is at most $d/8$ by Lemma~\ref{L:few-pass-i-links}. 

In each delete-min, there are at most two links that are not associated with a group.  We charge such non-associated links to the insertion of the node deleted by the delete-min, which is temporary.  The total charge incurred in this way is at most two per insertion of a temporary node.

Adding our bounds on charges, we find that the total charge, which equals $d$, is at most $10$ per insertion plus $8$ per decrease-key plus $4$ per primary link plus $d/2$.  The lemma follows.
\end{proof}

Our bounds on primary links will include small additive terms in $d$, the total number of deletion links.  To handle such terms, we restate Lemma~\ref{L:primary-links} as follows:

\begin{corollary}\label{C:primary-links-slack}
In any sequence of multipass pairing heap operations starting with no heaps, $d$ is at most $(64/3)\max\{0, p-3d/64\}$ plus $80$ per insertion plus $64$ per decrease-key. 
\end{corollary}
\begin{proof}
We consider two cases.  If $p < 3d/32$, then by Lemma~\ref{L:primary-links} $d$ is at most $3d/4$ plus $20$ per insertion plus $16$ per decrease-key, which implies that $d$ is at most $80$ per insertion plus $64$ per decrease-key.  If on the other hand $p \geq 3d/32$, then $(64/3)p \geq 2d$, which implies $d \leq (64/3)(p-3d/64)$.  Adding the bounds for the two cases gives the corollary. 
\end{proof}

\begin{remark}
This proof and subsequent ones do not minimize the constant factors, they only establish that appropriate constant factors exist.
\end{remark}

Our third and final result of this section is that link ranks are non-negative.  Let us recall the definitions from Section~\ref{S:terminology} that are relevant to this result, and in the process provide the complete definition of masses in a multipass pairing heap.

Let $w$ be a temporary node.  The \emph{real subtree} of $w$ is the partial subtree induced by the set of descendants of $w$ (including $w$) that are connected to $w$ by a path of real links.  The \emph{size} $w.s$ of $w$ is the number of nodes in its real subtree.

The \emph{mass} $w.m$ of $w$ depends on whether $w$ is a child or a root, and if a root whether it is a root in the middle of a deletion.  If $w$ is a phantom child or a root not in the middle of a deletion, its mass is its size.  If $w$ is a real child of a node $v$, its mass is the size of $v$ just after link $vw$ was done.  This is the sum of the sizes of $v$ and $w$ just before the link was done.  If $w$ is a root in the middle of a delete-min, its mass is the sum of its size and those of all the temporary roots to its right on the root list, \emph{with the following exception}: Just before a root $w$ loses a real right link to the root $v$ to its left on the root list, the mass of $w$ \emph{steps up} to the mass of $v$ (which is the same before and after the link).  

Finally, the \emph{rank} $w.r$ of $w$ is $\lg\lg(w.m/w.s)$.

Let $vw$ be a real deletion link done during the delete-min that deletes root $u$.  This delete-min cuts the winner link $uv$ of $vw$.  The link rank of $vw$ is $v.r_u - u.r'_v$, the rank of $v$ just before $uv$ is done minus the rank of $u$ just after $uv$ is done.  If $uv$ is a right link, $v.r_u$ is the rank of $v$ after the step-up in the mass of $v$ that occurs just before the link.  

Step-ups guarantee that link ranks are non-negative:

\begin{lemma}
\label{L:multipass-link-ranks-nonnegative}
In a multipass pairing heap, every link rank is non-negative.
\end{lemma}

\begin{proof}
Let $vw$ be a real pairing link with winner link $uv$.  The link rank of $vw$ is $v.r_u - u.r_v'$.  If $uv$ is an insertion or decrease-key link, $u$ has rank $0$ after the link, so $v.r_u - u.r_v'\geq 0$.  Suppose $uv$ is a deletion link. Let unprimed and primed sizes and masses denote values just before and just after the link, respectively. If $uv$ is a left link, $v.m = u.m'$ and $v.s \leq u.s'$, so $v.r_u - u.r_v' = \lg\lg(v.m/v.s) - \lg\lg(u.m'/u.s') \geq 0$.  If $uv$ is a right link, $u.m =u.m'$ and $u.s' = u.s + v.s$.  The step-up just before the link makes the mass of $v$ equal to $u.m'$, so $v.r_u - u.r_v' =\lg\lg(u.m'/v.s)- \lg\lg(u.m'/(u.s+v.s)) \geq 0$.
\end{proof}

\subsection{Small-rank links}\label{S:primary-small}

The next step in our analysis is to bound the number of primary links of small rank.  We begin by deriving a bound on the number of roots of small rank at certain times.

\begin{lemma}
\label{L:multipass-small-node-ranks}
At the beginning of any pairing pass in a delete-min in a multipass pairing heap, the number of roots with rank at most $1$ is at most $(5/2)\lg n +1$. 
\end{lemma}
\begin{proof}
Consider the beginning of some pairing pass, and let $v$ and $w$ be temporary roots with $w$ the next temporary root right of $v$.  Then $v.m-v.s = w.m$.  If $v.r \leq 1$, $v.m/v.s \leq 4$, so $v.s \geq v.m/4$, which implies $v.m \geq (4/3)w.m$.  It follows that if there are $i$ roots on the root list with rank at most $1$, $(4/3)^{i-1} \leq n$.  Hence $i \leq (\lg n)/\lg(4/3) + 1 \leq (5/2)\lg n + 1$.  
\end{proof}

\begin{corollary}
\label{C:multipass-small-rank-winners}
In any sequence of multipass pairing heap operations starting with no heaps, for any fixed $i$ the number of real pass-$i$ links in which the winner or loser has rank at most $1$ just before the link is at most $(5/2)\lg n +1$ per delete-min.
\end{corollary}
\begin{proof}
During a pairing pass, the rank of a node does not decrease until it participates in a pairing link.  Hence a node with rank at most one just before it wins or loses a pairing link had rank at most one at the beginning of the pairing pass.  The corollary is immediate from Lemma~\ref{L:multipass-small-node-ranks}.  
\end{proof}

Next we give a condition that allows us to charge certain real deletion links of rank less than $1$ to decreases in node ranks.

\begin{lemma}
\label{L:multipass-small-link-rank-decreases}
In a multipass pairing heap, let $vw$ be a real deletion link with winner link $uv$.  Suppose the link rank of $vw$ is less than $1$.  If both $u$ and $v$ have positive rank just after $uv$ is done, then $v.r' \leq v.r - 1$, where unprimed and primed variables take their values just before $uv$ is done (after the step-up if there is one) and just after $uv$ is done, respectively. 
\end{lemma}
\begin{proof}
Since $u$ and $v$ have positive ranks just after they are linked, $uv$ is a deletion link.  Furthermore, since link $uv$ (after the step-up in the mass of $v$ if there is one) does not increase the rank of $v$, we can replace $\lg$ by $\log_2$ in the formulas for $u.r'$, $v.r$, and $v.r'$.  Since $v.r - u.r' < 1$, $2^{u.r'} > 2^{v.r-1} = 2^{v.r}/2$.  Suppose $uv$ is a left link.  Then \[2^{v.r'} =\log_2((u.s+v.s)/v.s)=\log_2(v.m/v.s)-\log_2((v.m/(u.s+v.s))=2^{v.r}-2^{u.r'} \leq 2^{v.r}/2\]  Hence $v.r' \leq v.r - 1$.  Suppose $uv$ is a right link.  Then \[2^{v.r'} = \log_2((u.s+ v.s)/v.s) = \log_2(u.m'/v.s) - \log_2(u.m'/(u.s+v.s)) = 2^{v.r}-2^{u.r'} \leq 2^{v.r}/2\]  Hence $v.r' \leq v.r - 1$ in this case also.  
\end{proof}

Corollary~\ref{C:multipass-small-rank-winners} and Lemma~\ref{L:multipass-small-link-rank-decreases} combine to relate the number of primary links of rank less than $1$ to the sum of the magnitudes of all node rank decreases that occur during pairing passes $1$ through $3$ of some delete-min.

\begin{lemma}\label{L:primary-small}
In any sequence of multipass pairing heap operations starting with no heaps, the number of primary links of rank less than $1$ is at most the sum of the magnitudes of all node-rank decreases plus $(15/2)\lg n + 3$ per delete-min. 
\end{lemma}
\begin{proof}
Let $vw$ with winner link $uv$ be a primary link with link rank less than $1$.  Since $v$ wins only one pass-2 pairing link during the deletion of $u$, namely $vw$, $vw$ is uniquely determined by $uv$.  Since $vw$ is primary, $uv$ is a pass-1, pass-2, or pass-3 link.  
If either $u$ or $v$ has rank less than $1$ just before $uv$ is done, we charge $vw$ to $uv$.  By Corollary~\ref{C:multipass-small-rank-winners}, the total of such charges is at most $(5/2)\lg n + 1$ per delete-min for each of pairing passes $1$ through $3$, for a total of $(15/2)\lg n + 3$ per delete-min.   

The remaining possibility is that both $u$ and $v$ have rank at least $1$ just before $uv$ is done.  In this case doing $uv$ reduces the rank of $v$ by at least $1$, by Lemma~\ref{L:multipass-small-link-rank-decreases}.  We charge $vw$ to this node rank decrease.

Adding these bounds gives the lemma.
\end{proof}

To obtain an absolute bound on the number of primary links of rank less than $1$, we need a bound on the sum of the node-rank decreases.  We state a bound on the sum of the magnitudes of the node-rank changes.  We shall give a more general bound in Section~\ref{S:primary-large} and prove it in Section~\ref{S:node-rank-changes}.

\begin{theorem}\label{T:multipass-node-rank-changes}
In any sequence of multipass pairing heap operations starting with no heaps, the sum of the magnitudes of all node-rank changes is $\OO(\lg\lg n)$ per decrease-key plus $\OO(\lg n)$ per delete-min plus at most $d/64$.  
\end{theorem}

Lemma~\ref{L:primary-small} and Theorem~\ref{T:multipass-node-rank-changes} combine to give us a bound on the number of primary links with rank less than $1$:

\begin{theorem}\label{T:multipass-small}
In any sequence of multipass pairing heap operations starting with no heaps, the total number of primary links with rank less than $1$ is $\OO(1)$ per insertion plus $\OO(\lg\lg n)$ per decrease-key plus $\OO(\lg n)$ per delete-min plus at most $d/64$. 
\end{theorem}

\subsection{Large-rank links}\label{S:primary-large}

Now we turn to the task of bounding the number of primary links of large rank.  To do this we bound the sum of link ranks.  We bound this sum indirectly, by using the link-rank equality, which states that if $vw$ is a real deletion link with winner link $uv$, then
\begin{align*}
v.r_u-u.r_v' = (v.r_w' - u.r_w') + (v.r_u - v.r_w') + (u.r_w' - u.r_v')
\end{align*}

That is, the rank of $vw$ equals the sum of its $w$-shift, its $v$-shift, and its $u$-shift.  Summing this equality over all real deletion links $vw$, we find that the sum of the link ranks equals the sum of their $w$-shifts plus their $v$-shifts plus their $u$ shifts.  We already have a bound on the sum of the $w$-shifts, Lemma~\ref{L:w-shifts}.  The proof of Lemma~\ref{L:w-shifts} relies on cancellation in telescoping sums, and the proof fails if we only sum over the primary links, which are the ones we need to count.  Thus we are forced to estimate the sum of the $v$-shifts and the $u$-shifts of \emph{all} the real deletion links, not just those of the primary links.

To bound the sum of these shifts, we observe that a given vertex $v$ can win or lose at most one pass-$i$ link during a single delete-min.  This allows us to bound the sum of shifts over all real pass-$i$ links separately for each $i$.  Then we sum over $i$.  To make the overall bound small, we estimate the sum differently for each $i$.  We use the fact that the number of pass-$i$ links is exponentially small in $i$ (Lemma~\ref{L:few-pass-i-links}).  To exploit this fact, we generalize the definition of ranks.  If $w$ is a temporary node and $b$ is a non-negative integer, the \emph{$b$-reduced rank} $w.r(b)$ of $w$ is $\max\{w.r - b, 0\} = \lg(\lg(w.m/w.s)/2^b)$.  In particular, the $0$-reduced rank of $w$ is just its rank.

The following lemma gives a simple inequality that allows us to relate differences in node ranks to the corresponding differences in $b$-reduced node ranks:

\begin{lemma}\label{L:inequality-1}
If $x > y \geq 0$ and $b \geq 0$, then $x-y \leq \max\{x-b,0\}-\max\{y-b,0\}+b$.
\end{lemma}
\begin{proof}
If $y \geq b$, the inequality becomes $x-y\leq x-y+b$, which is true since $b \geq 0$.  If $x\leq b$, the inequality becomes $x-y\leq b$, which is true since $x\leq b$ and $y\geq 0$.  Finally, if $y<b<x$, the inequality becomes $x-y\leq x$, which is true since $y \geq 0$.
\end{proof}

To bound the sums of the $v$-shifts and the $u$-shifts, we need a generalization of Theorem~\ref{T:multipass-node-rank-changes} that bounds the sum of the magnitudes of the changes in $b$-reduced ranks.  We state and use this theorem here and prove it in the next section.

\begin{theorem}\label{T:b-rank-variation}
Let $b$ and $c$ be non-negative integers.  In any sequence of multipass pairing heap operations starting with no heaps, the sum of the magnitudes of the changes in $b$-reduced node ranks is at most $4\lg((\lg n)/2^b)$ per decrease-key plus $(4(b+c+2)/2^b)\lg n + 2\lg((\lg n)/2^b) + 6/2^b$ per delete-min plus $8d/2^{b+c}$.   
\end{theorem}

Choosing $b=0$ and $c=9$ in Theorem~\ref{T:b-rank-variation} gives Theorem~\ref{T:multipass-node-rank-changes}, which was stated and used in the previous section.

Now we have the tools we need to bound the sums of $v$-shifts and $u$-shifts.  We start with the $v$-shifts.

\begin{lemma}\label{L:multipass-v-shifts}
In any sequence of multipass pairing heap operations starting with no heaps, the sum of the $v$-shifts of the real deletion links is $\OO(\lg\lg n\cdot\lg\lg\lg n)$ per decrease-key plus $\OO(\lg n)$ per delete-min plus at most $d/64$. 
\end{lemma}
\begin{proof}
Let $vw$ be a real pairing link with winner link $uv$.  The $v$-shift of $vw$ is $v.r_u - v.r_w'$, the net decrease in the rank of $v$ from the time $t_1$ just before link $uv$ is done (after the step-up in the mass of $v$ if it is a right link) until the time $t_2$ just after $vw$ is done.  We call $[t_1, t_2]$ the \emph{$v$-shift time interval} of $vw$.  If $vw$ is a pass-$i$ link, $v$ cannot win another pass-$i$ link until after losing a link.  It follows that for a given $v$ and $i$, the $v$-shift time intervals of the pass-$i$ links $vw$ won by $v$ are disjoint.  We can thus apply Theorem~\ref{T:b-rank-variation} to bound the sum of the $v$-shifts of the pass-$i$ links $vw$ for any given $i$.  Summing the resulting bounds over all $i$ gives us a bound on the sum of all $v$-shifts.

In the sum of $v$ shifts, we ignore all the non-positive terms.
If $vw$ is an pass-$i$ link with positive $v$-shift, its $v$ shift $v.r_u-v.r_w'$ is at most $v.r_u(b) - v.r_w'(b) + b$ for any integer $b \geq 0$ by Lemma~\ref{L:inequality-1}, where $v.r_u(b)$ the $b$-reduced rank of $v$ just before $uv$ is done (after the step-up in the mass of $v$ if it is a right link), and $v.r_w'(b)$ is the $b$-reduced rank of $vw$ just after $vw$ is done. 
The number of pass-$i$ links is at most $2d/2^i$ by Lemma~\ref{L:few-pass-i-links}.  Thus, for any integers $i$ and $b$, the sum of $v$-shifts over all real pass-$i$ links is at most the bound of Theorem~\ref{T:b-rank-variation} plus $b \cdot (2d/2^i)$. This bound depends on a non-negative integer parameter $c$, whose value we choose at the end of this proof.

For $i = 1,\dots, c$, this bound with $b = 0$ implies that the sum of the $v$-shifts over all real deletion links done in the first $c$ pairing passes is at most 
$4c\lg\lg n$ per decrease-key plus $4c(c+2)\lg n + 2c\lg\lg n + 6c$ per delete-min plus $8cd/2^c$.

For $i \geq 1$, we apply the same bound to the pass-$(i + c)$ links, but with $b$ a function of $i$, specifically $b = b_i = 2^{\lceil i/2 \rceil}$. By this bound, the sum of the $v$-terms of the real pass-$(i+c)$ links is at most 
\begin{align*}
&\text{$4\lg((\lg n)/2^{b_i})$ per decrease-key}\\
+ &\text{$4((b_i + c + 2)/2^{b_i})\lg n + 2\lg((\lg n)/2^{b_i}) + 6/2^{b_i}$ per delete-min}\\
+ &8d/2^{b_i+c} + 2b_i d/2^{i+c}
\end{align*}
We need to sum this bound from $i=1$ to $\infty$.  The sum of $b_i/2^{b_i}$ is a constant, as are the sums of $1/2^{b_i}$ and $b_i/2^i$.  Finally, if $i > 2\lg\lg\lg n$, the $\lg((\lg n)/2^{b_i})$ terms are $0$.  It follows that if we choose $c$ to be a sufficiently large constant, the sum of the $v$-shifts of real deletion links is within the bound of the lemma.
\end{proof}

The argument for the $v$-shifts applies to the $u$-shifts as well, but we must consider both increases and decreases of node ranks over the $u$-shift time intervals.

\begin{lemma}\label{L:multipass-u-shifts}
In any sequence of multipass pairing heap operations starting with no heaps, the sum of the $u$-shifts of the real deletion links is $\OO(\lg\lg n\cdot\lg\lg\lg n)$ per decrease-key plus $\OO(\lg n)$ per delete-min plus at most $d/64$. 
\end{lemma}
\begin{proof}
Let $vw$ be a real pairing link with winner link $uv$.  The $u$-shift of $vw$ is $u.r_w' - u.r_v'$.  If $vw$ is a right link, this is the decrease in the rank of $u$ from the time $t_1$ just after it wins $uw$ until the time $t_2$ just after it wins $uv$.  If $vw$ is a left link, its $u$-shift is the increase in the rank of $u$ from the time $t_1$ just after it wins $uv$ until the time $t_2$ just after it wins $uw$.  We call $[t_1, t_2]$  the \emph{$u$-shift time interval} of $vw$.    To cover both the left links and the right links, we bound the sum of the magnitudes of the $u$-shifts.  Since each pass-$i$ link is between adjacent roots on the root list, the $u$-shift time intervals of two pass-$i$ links during the deletion of $u$ are disjoint.  These are the only $u$-shifts of pass-$i$ links.  Hence we can apply the same analysis as in the proof of Lemma~\ref{L:multipass-v-shifts}.  
\end{proof}

Summing the link-rank equality over the real deletion links and adding the bounds in Lemma~\ref{L:w-shifts}, Lemma~\ref{L:multipass-v-shifts}, and Lemma~\ref{L:multipass-u-shifts} gives us a bound of the sum of link ranks:

\begin{theorem}\label{T:multipass-link-rank-sum}
In any sequence of multipass pairing heap operations starting with no heaps, the sum of the link ranks of the real deletion links is $\OO(\lg\lg n\cdot\lg\lg\lg n)$ per decrease-key plus $\OO(\lg n)$ per delete-min plus at most $d/32$. 
\end{theorem}

Since all link ranks are non-negative by Lemma~\ref{L:multipass-link-ranks-nonnegative}, the bound in Theorem~\ref{T:multipass-link-rank-sum} is also a bound on the number of primary links with rank at least $1$.  Adding this bound to the bound in Theorem~\ref{T:multipass-small} on the number of such links that have rank less than $1$ and combining the resulting bound with the bound in Corollary~\ref{C:primary-links-slack} gives us our desired bound on the total number of deletion links:

\begin{theorem}\label{T:multipass-deletion-link-bound}
In any sequence of multipass pairing heap operations starting with no heaps, the total number of deletion links is $\OO(1)$ per insertion plus $\OO(\lg\lg n\cdot\lg\lg\lg n)$ per decrease-key plus $\OO(\lg n)$ per delete-min. 
\end{theorem}

Theorem~\ref{T:multipass-deletion-link-bound} yields our bound on the efficiency of multipass pairing heaps:

\begin{theorem}\label{T:multipass-time-bound}
\label{T:multipass}
Any sequence of multipass pairing heap operations starting with no heaps takes $\OO(\lg\lg n \cdot \lg\lg\lg n)$ amortized time per decrease-key, $\OO(\lg n)$ amortized time per deletion, and $\OO(1)$ amortized (and worst-case) time for each other heap operation. 
\end{theorem}

\subsection{Node-rank changes}\label{S:node-rank-changes}

Our analysis of multipass pairing heaps relies on the validity of Theorem~\ref{T:b-rank-variation}, which bounds the sum of the magnitudes of changes in $b$-reduced node ranks.  Proving this theorem is the final step in our analysis and is the topic of this section.

Our calculations require the inequality stated in the following lemma:

\begin{lemma}\label{L:inequality-2}
If $x > y > 0$, $a \geq 0$, and $b \geq 0$, $\lg(x/2^b) - \lg(y/2^b) \leq (2/2^{a+b})(x-y)+a$. 
\end{lemma}

\begin{proof}
Define $f(t) = \lg(t/2^b)$.  Observe that $f'(t) = 0$ for $t < 2^b$ and $f'(t) = \lg(e)/t \leq 2/t$ for $t > 2^b$. In particular, if $t > 2^{a+b}$ then $f'(t) \leq \lg(e)/2^{a+b} \leq 2/2^{a+b}$.  

If $x \geq y \geq 2^{a+b}$, then
\[\lg(x/2^b) - \lg(y/2^b) = \int_y^x f'(t)~dt \leq \Bigr(\frac{2}{2^{a+b}}\Bigr)~\bigr(x - y\bigr)\]
since $f'$ is at most $2/2^{a+b}$ on $(2^{a+b}, \infty)$.
 
If $2^{a+b} \geq x \geq y$, then 
\[\lg(x/2^b) - \lg(y/2^b) \leq \lg(2^{a+b}/2^b) - 0 = a\]

Finally, if $x \geq 2^{a+b} \geq y$, then
\begin{align*}
\lg(x/2^b) &\leq \lg(x/2^b) - \lg(2^{a+b}/2^b) + \lg(2^{a+b}/2^b)\\
&= \Bigr(\int_{2^{a+b}}^x f'(t)~dt\Bigr) + a \tag{since $\lg(2^{a+b}/2^b) = a$}\\
&\leq \Bigr(\frac{2}{2^{a+b}}\Bigr)~\bigr(x - 2^{a+b}\bigr) + a\tag{since $f'(t) \leq 2/2^{a+b}$ on $(2^{a+b}, \infty)$}\\
&\leq \Bigr(\frac{2}{2^{a+b}}\Bigr)~\bigr(x - y\bigr) + a \qedhere
\end{align*} 
\end{proof}

As observed in Section~\ref{S:overview}, the sum of the magnitudes of node-rank changes is at most twice the sum of node-rank increases, since node ranks are always non-negative and their sum is initially zero since initially there are no nodes.  This is also true of $b$-reduced ranks.  Hence to bound the sum of the magnitudes of $b$-reduced rank changes it suffices to bound the sum of $b$-reduced rank increases.  This we proceed to do. 

Newly inserted temporary nodes and nodes deleted by delete-mins have $b$-reduced rank zero, so their insertion or deletion changes no $b$-reduced rank.  A cut of a phantom link, which happens only during a decrease-key, changes no size or mass and hence changes no $b$-reduced rank. Cuts of real links happen only during delete-mins.  Deletion of the root in a delete-min decreases by one the size of each child of the deleted root, which does not increase its $b$-reduced rank, and makes the $b$-reduced rank of the deleted root equal to zero, which does not increase it.  A phantom insertion or decrease-key link changes no size or mass and hence changes no $b$-reduced rank.  A phantom deletion link changes no size and increases no mass, so it increases no $b$-reduced rank.  It remains to study the effect of real links.

\begin{lemma}\label{L:multipass-real-decrease-key-links}
In a multipass pairing heap, a real decrease-key link $vw$ increases the $b$-reduced rank only of $w$, by at most $\lg((\lg n)/2^b)$.  
\end{lemma}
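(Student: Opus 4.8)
The plan is to trace through exactly what a decrease-key does in the canonical framework and track which nodes' sizes and masses change. Recall that a decrease-key on an item $e$ that is not a root cuts $e$ from its parent and then links $e$ with the root of its heap; the link in question here is the resulting link $vw$, with $w=e$ the loser. First I would observe that since $vw$ is a \emph{real} link, both $v$ and $w$ are temporary, and $vw$ will eventually be cut by a delete-min. Before the decrease-key, $e=w$ was a child (real or phantom) somewhere in the tree. Cutting $w$ from its old parent removes $w$'s entire real subtree from the real subtrees of all its former ancestors, and it severs the contribution of $w$'s real subtree to the masses of its former real siblings; however, these changes only \emph{decrease} sizes and masses, hence only decrease (equivalently, do not increase) $b$-ranks of those nodes. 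This is the same observation already used in the paragraph preceding the lemma for cuts of phantom links, extended to the real case: a cut never increases a $b$-rank.

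Next I would analyze the link $vw$ itself. The only nodes whose size or mass can possibly change are $v$ (the new parent, now the root) and $w$ (the new child). Here is the subtlety, and the step I expect to be the main obstacle: after the decrease-key, $w$ becomes a \emph{real leftmost child} of $v$, so by the definition of mass in Section~\ref{S:ranks}, $w.m$ is set to the size of $v$ just after the link. Since $v$ is the root of a heap with $n$ nodes and $w$'s real subtree is now part of it, we have $w.m \le n$ — more precisely $w.m = v.s' \le n$. Meanwhile $w.s$ is the size of $w$'s real subtree, which is at least $1$, so $w.r' = \lg\lg(w.m/w.s) \le \lg\lg n$, and therefore $w.r'(b) = \max\{w.r' - b, 0\} \le \max\{\lg\lg n - b, 0\} = \lg((\lg n)/2^b)$ (using the convention $\lg x = \max\{\log_2 x, 0\}$, so the max is automatically handled). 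That bounds the $b$-rank of $w$ after the link by the claimed quantity; since the increase in $w$'s $b$-rank is at most its final value, the increase is at most $\lg((\lg n)/2^b)$ as well.

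It remains to check that $v$'s $b$-rank does not increase. When $w$ is attached as a child of $v$, the size of $v$ increases (by $w.s$), and the mass of $v$ increases by the same amount, since $v$ is a root and for a root the mass equals the size (we are not in the middle of a delete-min). Because $v$ was a root before the link and remains one after, $v.m = v.s$ both before and after, so $v.m/v.s = 1$ throughout and $v.r = v.r' = \lg\lg 1 = 0$ (interpreting $\lg 0 = 0$ per the paper's convention), hence $v.r(b) = 0$ before and after; no increase. I should also double-check the boundary case where $v$'s heap was previously empty except for $v$, or where $w$ was previously a root — but in those cases $w$ wouldn't be cut, or the masses collapse to sizes, and the same ``ratio stays $1$ for roots'' argument applies. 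Putting these pieces together: the cut portion increases no $b$-rank, the link increases no $b$-rank of $v$, and it increases the $b$-rank of $w$ by at most $\lg((\lg n)/2^b)$, which is exactly the statement of the lemma. The main thing to get right is the mass convention for the new child $w$ and confirming $w.m \le n$; the rest is bookkeeping of the kind already rehearsed in the surrounding text.
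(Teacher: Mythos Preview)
Your argument for the link $vw$ itself is correct and is exactly the paper's proof: $v$ is a root before and after, so $v.m=v.s$ throughout and $v.r(b)=0$; only $v$'s size and the masses of $v$ and $w$ change; and $w.m\le n$ after the link gives $w.r'(b)\le\lg((\lg n)/2^b)$. That is all the lemma claims.

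Your preliminary discussion of the cut, however, is both unnecessary (the lemma is about the link, and the surrounding text already disposes of the cut) and contains two slips worth flagging. First, the link that a decrease-key severs is \emph{always} phantom: by definition a link is real only if it is eventually cut by a delete-min, so there is no ``real case'' to extend to; the cut changes no size and no mass, full stop. Second, the inference ``these changes only decrease sizes and masses, hence only decrease $b$-ranks'' is not valid in general: rank is $\lg\lg(m/s)$, so decreasing a node's size while its mass stays put \emph{increases} its rank. The error is harmless here because the cut in fact changes nothing, but this reasoning would fail if reused elsewhere in the analysis.
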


\begin{proof}
The link changes the size only of $v$ and changes the mass only of $v$ and $w$.  Before and after the link, the mass of $v$ equals its size, so the link leaves the $b$-reduced rank of $v$ unchanged at zero.  The link increases the mass of $w$ to at most $n$.  The lemma follows. 
\end{proof}

\begin{lemma}
\label{L:multipass-real-insertion-links}
In any sequence of multipass pairing heap operations starting with no heaps, the sum of increases in $b$-reduced ranks caused by real insertion links is at most $\lg((\lg n)/2^b)$ per decrease-key plus $\lg((\lg n)/2^b) + 3/2^b$ per delete-min.
\end{lemma}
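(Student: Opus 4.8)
The plan is to track, for each real insertion link $vw$, the increase in $b$-ranks it causes, and to bound the total using a credit/time-shifting argument in the same spirit as Lemmas~\ref{L:key-link-shift} and~\ref{L:link-rank-sum}. First I would observe that a real insertion link $vw$ (done during an insert or meld, and later cut by a delete-min) changes sizes and masses only of $v$ and $w$: it leaves the mass of $v$ equal to its size (so $v.r(b)$ stays at $0$), and it increases the mass of $w$ by exactly $v.s'$, the new size of $v$. Thus the only $b$-rank that increases is that of $w$, and the increase is $w.r'(b) - w.r(b) = \lg(w.m'/w.s) / 2^b$-style quantity bounded via Lemma~\ref{L:inequalities}(ii) with $x = w.m'$, $y = w.m$, yielding an increase of at most $(2/2^{a+b})(w.m' - w.m) + a = (2/2^{a+b}) v.s' + a$ for any $a \ge 0$. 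The point of keeping the free parameter $a$ is that $v.s'$, the size of the subtree just melded/inserted onto $w$, could be as large as $n$; choosing $a = \lg((\lg n)/2^b)$ (so $2^{a+b} = \lg n$) makes the first term at most $2 v.s'/\lg n \le 2$, and the additive term contributes the $\lg((\lg n)/2^b)$ we want to charge.

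Next I would set up the charging. The additive $a = \lg((\lg n)/2^b)$ term is charged directly: one such term per insertion link, and by Lemma~\ref{L:insertion-links} the number of insertion links is at most the number of insertions, so this contributes at most $\lg((\lg n)/2^b)$ per insert/meld — which I would fold into the ``per decrease-key plus per delete-min'' accounting by noting every insertion can be paired with the delete-min or decrease-key that eventually removes the inserted node (or, more cleanly, since the statement's right-hand side is a sum over decrease-keys and delete-mins, I should re-examine whether the intended bookkeeping charges insertions to deletions via Lemma~\ref{L:final-links}; I would use that every insertion link is a key link, a delete link, or a final link, and invoke Lemma~\ref{L:final-links} to convert the ``per insertion'' count into ``per delete-min plus per decrease-key''). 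The multiplicative term $\sum_{vw} 2 v.s'/\lg n'$, where $n'$ is the heap size when $vw$ is done, needs the harder argument: I would bound $v.s'$ by the size $n'$ of the heap at the moment the link is created, but since the same node $w$ can be the loser of only one insertion link per stay-as-a-root interval (an insertion link to $w$ means $w$ was a root), I can telescope or simply charge $2 v.s'/\lg n' \le 2$ to... — no, $2$ per insertion is too weak to be absorbed without a deletion-counting step. So the real content is: charge each real insertion link to the delete-min that cuts it (it is a delete link, since real links are cut by delete-mins), giving $O(1)$ per delete-min from the multiplicative part; combined with the $a$-term handled as above, the total is $\lg((\lg n)/2^b)$ per decrease-key plus $\lg((\lg n)/2^b) + 3/2^b$ per delete-min.

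The time-shifting subtlety — and the step I expect to be the main obstacle — is reconciling the heap size $n'$ at link-creation time with the size $n$ at the relevant later event. When $vw$ is cut by a delete-min, the bound I derived involves $n'$ (size when $vw$ was done), but I want to charge against $\lg\lg n$ for the size $n$ at cut time, and $n'$ could exceed $n$. This is exactly the situation handled in the proof of Lemma~\ref{L:key-link-shift}: the extra $3/2^b$ credits per delete-min are distributed $3/((n''-1)2^b)$ to each surviving node, and over the lifetime of $w$ (from when it loses the real insertion link until that link is cut) $w$ accumulates at least $(3/2^b)\sum_{i=n}^{n'-1} 1/i \ge (\lg\lg n' - \lg\lg n)/2^b \ge \lg((\lg n')/2^b) - \lg((\lg n)/2^b)$ credits, covering the discrepancy. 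So the structure of the proof is: (1) compute the per-link $b$-rank increase via Lemma~\ref{L:inequalities}(ii) with $2^{a+b} = \lg n'$; (2) split into the additive and multiplicative parts; (3) charge the additive part to decrease-keys and delete-mins via Lemmas~\ref{L:insertion-links} and~\ref{L:final-links}; (4) charge the multiplicative part $O(1/2^b)$ to the cutting delete-min; (5) use the $3/2^b$ per-delete-min credit reservoir to fix the $n'$-vs-$n$ mismatch exactly as in Lemma~\ref{L:key-link-shift}. I expect step (5), getting the constants to line up so the bound is precisely $3/2^b$ per delete-min (not $c/2^b$ for some larger $c$), to require the most care.
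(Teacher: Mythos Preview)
Your proposal has two genuine gaps that prevent it from going through.

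First, the bound on the $b$-rank increase is both miscomputed and overcomplicated. You apply Lemma~\ref{L:inequalities}(ii) with $x=w.m'$ and $y=w.m$, but the $b$-rank is $\lg(\lg(w.m/w.s)/2^b)$, so the correct arguments are $x=\lg(w.m'/w.s)$ and $y=\lg(w.m/w.s)$; your ``$x-y=v.s'$'' is off by a logarithm. More importantly, Lemma~\ref{L:inequalities}(ii) is not needed at all. Just before a real insertion link $vw$, $w$ is a root, so $w.m=w.s$ and $w.r(b)=0$; afterwards $w.m'/w.s\le n$, hence the increase is at most $\lg((\lg n)/2^b)$ outright---a single term, with no ``multiplicative part'' to worry about. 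This is what the paper does, and it eliminates the stray ``$\le 2$'' contribution that your step~(4) cannot absorb into $3/2^b$.

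Second, and more seriously, your charging target is wrong. You propose charging each real insertion link to the delete-min that cuts it, but a single delete-min can cut \emph{arbitrarily many} real insertion links (the deleted root may have won many insertion links while it was the root), so this is not an $\OO(1)$-to-$1$ charge. Your fallback---converting ``per insertion'' into ``per delete-min plus per decrease-key'' via Lemmas~\ref{L:insertion-links} and~\ref{L:final-links}---also fails, since those inequalities point the wrong way. The paper's key idea, which you are missing, is to charge the increase for link $vw$ to the first \emph{later} operation after which $w$ is the \emph{only root} of its heap. Such an operation is a decrease-key or a delete-min, and it is charged by at most one real insertion link: $w$ is uniquely determined (it is the sole root afterward), and $vw$ is the most recent real insertion link lost by $w$. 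With this injective charging in place, your step~(5)---the $3/2^b$-per-delete-min credit reservoir to reconcile the heap size at link time with the size at charge time---is exactly right and matches the paper's argument.
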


\begin{proof}
The proof uses the idea in the proofs of the lemmas in Section~\ref{S:time-shifting-lemmas}.  We allocate $\lg((\lg n)/2^b)$ credits to each decrease-key and $\lg((\lg n)/2^b) + 3/2^b$ credits to each delete-min, and show that the sum of increases in $b$-reduced ranks caused by real insertion links is at most the sum of the allocated credits.

A real insertion link $vw$ increases the $b$-reduced rank only of $w$, by at most $\lg((\lg n)/2^b)$.  Since $w$ is temporary, it eventually becomes the only root in its heap, which must happen before $w$ can be deleted.  Let $n'$ be the number of temporary nodes in the heap containing $w$ the first time after link $vw$ that $w$ becomes the only root in its heap.  The operation that causes $w$ to become the only root in its heap is either a decrease-key or a delete-min.  

We assign $\lg((\lg n')/2^b)$ of the credits allocated to this decrease-key or delete-min to partially cover the increase in the $b$-reduced rank of $w$ caused by link $vw$.  The credits allocated to a decrease-key or delete-min are assigned to at most one such increase, since $w$ is the only root in the heap after the decrease-key or delete-min, and $vw$ is the most recent previous real insertion link lost by $w$.

If $n' \geq n$, the assigned credits fully cover the increase in the $b$-rank of $w$.  If not, we need an additional $\lg((\lg n)/2^b) - \lg((\lg n')/2^b) \leq (2/2^b)(\lg n - \lg n')$ credits to fully cover the increase.  For these we use the additional $3/2^b$ credits allocated to each delete-min.  Each delete-min distributes its extra credits equally among the $n'' - 1$ temporary nodes in the heap just after the deletion, $(3/2^b)/(n''-1)$ to each.  Between the time $v$ and $w$ are linked and $w$ next becomes the only root in its heap, $w$ receives at least $ (3/2^b)\sum_{i=n'}^{n-1} 1/i \geq (2/2^b)(\lg n - \lg n')$ credits, which cover the remainder of the increase in the $b$-rank of $w$.
\end{proof}

\begin{lemma}\label{L:multipass-real-deletion-links}
In a multipass pairing heap, a real deletion link $vw$ increases a $b$-reduced rank only if it is a right link.  If it is a right link, it increases only the rank of $w$, and only during the step-up in the mass of $w$.  For any $a \geq 0$, the amount of the increase is at most $(2/2^{a+b})(\lg(v.m) -\lg(w.m)) + a$, where variables take their values just before the step-up. 
\end{lemma}
\begin{proof}
The link changes the size, mass, and rank only of $v$ and $w$.  If the link is a left link, it decreases $w.m$ and does not change $w.s$, so the $b$-reduced rank of $w$ does not increase. The link increases $v.s$ and $v.m$ by $w.s$, so the $b$-reduced rank of $v$ does not increase.

Suppose the link is a right link.  The link increases $v.s$ but does not change $v.m$, so the $b$-reduced rank of $v$ does not increase.  The step-up in the mass of $w$ just before the link increases $w.r$ by 
\begin{align*}
\lg(\lg(v.m/w.s)/2^b)-\lg(\lg(w.m/w.s)/2^b) &\leq (2/2^{a+b})(\lg(v.m/w.s)-\lg(w.m/w.s)) + a\\
&= (2/2^{a+b})(\lg(v.m) - \lg(w.m)) + a
\end{align*}
by Lemma~\ref{L:inequality-2} and $v.m > w.m \geq w.s$.  After the link, the rank of $w$ is $\lg\lg((v.s+w.s)/w.s) \leq \lg\lg(v.m/w.s)$, so the link itself does not increase the rank of $w$.
\end{proof}

\begin{lemma}\label{L:pairing-pass}
In a multipass pairing heap, for any $a \geq 0$, the sum of the $b$-reduced rank increases caused by a pairing pass that does $k$ real right links is at most $(2/2^{a+b})\lg n + ak$.  
\end{lemma}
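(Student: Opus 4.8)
The plan is to derive this from Lemma~\ref{L:multipass-real-pairing-links} together with a telescoping argument over the real right links of the pass, using the fact that within a single pairing pass the relevant root masses are governed by the subtree sizes present at the start of the pass.

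First I would fix a pairing pass. By Lemma~\ref{L:multipass-real-pairing-links}, only real right links increase $b$-ranks during the pass, and a real right link $vw$ increases only the rank of $w$, by at most $(2/2^{a+b})(\lg(v.m)-\lg(w.m)) + a$, where $v.m$ and $w.m$ are evaluated just before the step-up of $w$ (at which instant both are ordinary root masses). List the real right links of the pass, in left-to-right order of the pairs in which they occur, as $v_1w_1,\dots,v_kw_k$. Summing the bound of Lemma~\ref{L:multipass-real-pairing-links} over these $k$ links, the total $b$-rank increase caused by the pass is at most $(2/2^{a+b})\sum_{j=1}^k(\lg(v_j.m)-\lg(w_j.m)) + ak$, so it suffices to prove $\sum_{j=1}^k(\lg(v_j.m)-\lg(w_j.m)) \le \lg n$.

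The key step is an observation about how masses evolve during a pairing pass. A pairing pass links the roots in disjoint adjacent pairs, and each link absorbs the size of the loser into the size of the winner without otherwise reordering the list. Hence, for any root $x$ not yet involved in a link, processing any other pair changes neither $x.s$ nor the total size of the roots currently to the right of $x$: a pair lying entirely left of $x$ does not touch these quantities, and a pair lying entirely right of $x$ merely moves size among roots that remain right of $x$. Consequently, at the instant pair $i$ is processed, just before the step-up of its loser $w_i$, the mass $w_i.m$ equals the start-of-pass subtree size of $w_i$ plus the sum of the start-of-pass subtree sizes of the roots that were to the right of $w_i$ at the start of the pass, and $v_i.m = v_i.s + w_i.m$ (with $v_i.s$ and $w_i.m$ taken at the start of the pass), since $v_i$ is the left neighbor of $w_i$. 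Because $v_{j+1}$ occurs in a later pair than $w_j$, it lies to the right of $w_j$ at the start of the pass, and the roots to the right of $v_{j+1}$ at the start of the pass form a subset of the roots to the right of $w_j$; therefore $v_{j+1}.m \le w_j.m$ for $1\le j<k$ (each evaluated just before the corresponding step-up). Also $w_k.m \ge w_k.s \ge 1$, so $\lg(w_k.m)\ge 0$.

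Finally I would telescope: since $\lg$ is nondecreasing, $\lg(w_j.m)\ge\lg(v_{j+1}.m)$ for $j<k$ and $\lg(w_k.m)\ge 0$, so $\sum_{j=1}^k\lg(w_j.m)\ge\sum_{j=2}^k\lg(v_j.m)$, giving $\sum_{j=1}^k(\lg(v_j.m)-\lg(w_j.m))\le\lg(v_1.m)$. Since $v_1.m$ is at most the total size of the root list at the start of the pass, which is less than $n$, this is at most $\lg n$, and combining with the first paragraph yields the claimed bound $(2/2^{a+b})\lg n + ak$. The main obstacle is the middle step: one must pin down precisely how masses change over the pass---in particular that the step-ups do not spoil the picture (a step-up raises $w_i.m$ only to the post-link mass of its winning left neighbor, which is never relevant again since $w_i$ is no longer a root), and that the left-to-right pairing order yields the nesting of the ``roots to the right of'' sets needed for the telescoping. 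Everything else is routine.
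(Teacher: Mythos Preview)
Your proof is correct and follows essentially the same approach as the paper's: apply Lemma~\ref{L:multipass-real-pairing-links} to each real right link and telescope $\sum_j(\lg(v_j.m)-\lg(w_j.m))$ via the inequality $w_j.m \ge v_{j+1}.m$. The paper compresses your middle paragraph into the single remark that the masses may be evaluated ``just before any of the links'' of the pass; your more explicit argument that processing other pairs leaves $v_i.m$ and $w_i.m$ unchanged (and that step-ups affect only losers, which are no longer roots) is exactly the justification behind that remark.
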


\begin{proof}
We sum the upper bound $(2/2^{a+b})(\lg(v.m) -\lg(w.m)) + a$ on the increase in $b$-reduced ranks given by Lemma~\ref{L:multipass-real-deletion-links} over the $k$ real right links $vw$ done during the pairing pass.  If $vw$ and $v'w'$ are two such links, with $v'$ right of $v$ on the root list, $w.m > v'.m$, where masses are evaluated just before any of the links.  Thus the sum of $\lg(v.m) -\lg(w.m)$ telescopes to at most $\lg n$, giving the lemma.
\end{proof}

\begin{lemma}
\label{L:multipass-delete-min-links}
In a multipass pairing heap, for any $c\geq 0$, the sum of the $b$-rank increases caused by a delete-min that does $k$ links is at most $(2(b+c+2)/2^b)\lg n + 4k/2^{b+c}$.
\end{lemma}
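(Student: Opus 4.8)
The plan is to bound the $b$-rank increase contributed by each pairing pass of the delete-min separately, applying Lemma~\ref{L:pairing-pass} with a damping parameter that is allowed to grow from pass to pass. First I would isolate which events of a delete-min can raise any $b$-rank. As noted in the discussion preceding Lemma~\ref{L:multipass-real-decrease-key-links}, cutting the deleted root's real links (and every phantom or left link) raises no $b$-rank, and insertion and decrease-key links do not occur inside a delete-min; moreover, by Lemma~\ref{L:multipass-real-pairing-links}, a real pairing link raises a $b$-rank only when it is a right link, and then only the loser's rank, only during the step-up. Hence the total $b$-rank increase of a delete-min equals the sum over pairing passes $i$ of the increase caused by the real right links of pass $i$. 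Writing $r_i$ for the number of real right links in pass $i$, Lemma~\ref{L:few-pass-i-links} gives $r_i \le k/2^{\,i-1}$, since at most $k/2^{\,i-1}$ links remain to be done when pass $i$ begins.

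Next I would apply Lemma~\ref{L:pairing-pass} to pass $i$ with the pass-dependent choice $a_i = \max\{0,\ i-b-c\}$ (so $a_i=0$ for the first $b+c$ passes and $a_i=i-b-c$ thereafter), which bounds the $b$-rank increase of pass $i$ by $(2/2^{\,a_i+b})\lg n + a_i r_i$. Summing this bound over all $i\ge 1$ is legitimate: once the passes run out the true increase is $0$ while the bound stays nonnegative, so summing over a superset of the actual pass indices only enlarges the estimate. This leaves the two sums $\bigl(\sum_{i\ge1} 2/2^{\,a_i+b}\bigr)\lg n$ and $\sum_{i\ge1} a_i r_i$ to evaluate.

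For the $\lg n$ coefficient, the $b+c$ passes with $a_i=0$ each contribute $2/2^b$, and the remaining passes form a geometric tail that sums to one additional $2/2^b$, for a total of $(b+c+1)\cdot 2/2^b = 2(b+c+1)/2^b$ (here I use that $b$, hence $b+c$, is an integer). For the second sum, using $r_i\le k/2^{\,i-1}$ and $a_i=i-b-c$ for $i>b+c$, we get $\sum_{i>b+c}(i-b-c)\,k/2^{\,i-1} = (k/2^{\,b+c-1})\sum_{j\ge1} j\,2^{-j} = 2k/2^{\,b+c-1} = 4k/2^{\,b+c}$, using $\sum_{j\ge1} j 2^{-j}=2$. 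Adding the two estimates yields exactly $(2(b+c+1)/2^b)\lg n + 4k/2^{\,b+c}$, as claimed. I do not anticipate a genuine obstacle: the only creative step is choosing $a_i$ to balance the shrinking $\lg n$ terms against the geometrically shrinking link counts $r_i$, and everything else is routine geometric-series bookkeeping.
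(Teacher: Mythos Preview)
Your proposal is correct and follows essentially the same approach as the paper: both apply Lemma~\ref{L:pairing-pass} pass by pass with the identical choice $a_i=\max\{0,\,i-(b+c)\}$, bound the number of pass-$i$ links by $2k/2^i$ (your $k/2^{\,i-1}$) via Lemma~\ref{L:few-pass-i-links}, and sum the resulting geometric series. The arithmetic and the final constants agree exactly.
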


\begin{proof}
We sum the upper bound in Lemma~\ref{L:pairing-pass} over all pairing passes of the delete-min, but we choose $a$ as a function of the pairing pass.  By Lemma~\ref{L:few-pass-i-links}, at most $2k/2^i$ links are done in pairing pass $i$.

Setting $a_i = \max\{i-\lceil b+c \rceil, 0\}$ for pairing pass $i$ and summing the bound in Lemma~\ref{L:pairing-pass} over all pairing passes in the delete-min, we find that the sum of the $b$-rank increases is at most
\begin{align*}
&(2\lceil b+c \rceil/2^b)\lg n 
 + \sum_{i=1}^\infty\bigr((2/2^b)(1/2^i)\lg n + (2k/2^{\lceil b+c \rceil})(i/2^i)\bigr)\\
 &= (2(\lceil b+c\rceil+1)/2^b)\lg n + 4k/2^{\lceil b+c \rceil}\\
 &\leq (2(b+c+2)/2^b)\lg n + 4k/2^{b+c}\qedhere
\end{align*}
\end{proof}

Combining Lemmas~\ref{L:multipass-real-decrease-key-links},~\ref{L:multipass-real-insertion-links}, and~\ref{L:multipass-delete-min-links} and using the fact that twice the sum of the increases in $b$-reduced node ranks is an upper bound on the sum of the magnitudes of the change is $b$-reduced node ranks gives us Theorem~\ref{T:b-rank-variation}:

For any non-negative integers $b$ and $c$, in any sequence of multipass pairing heap operations starting with no heaps, the sum of the magnitudes of the changes in $b$-reduced node ranks is at most $4\lg((\lg n)/2^b)$ per decrease-key plus $(4(b+c+2)/2^b)\lg n + 2\lg((\lg n)/2^b) + 6/2^b$ per delete-min plus $8d/2^{b+c}$.

\section{Analysis of slim heaps}\label{S:slim-analysis}
\subsection{Basic properties of slim and smooth heaps}\label{S:slim-smooth-properties}

Now we turn to the analysis of slim and smooth heaps.  We start with slim heaps, and then adapt the analysis to smooth heaps.  We begin by giving two basic properties of both slim and smooth heaps.  The first is that during a delete-min each node wins at most one left link and one right link~\cite{HKST21,KS19}. For completeness we include a proof of this result.  

\begin{lemma}
\label{L:s-treap}~\cite{HKST21,KS19}
During a delete-min in a slim or smooth heap, each node wins at most one left link and at most one right link.
\end{lemma}

\begin{proof}
Suppose a node $v$ wins a right link with a node $w$.  After the link, either $v$ is rightmost on the root list or its right neighbor has equal or smaller key.  In the latter case, during the rest of the delete-min the right neighbor of $v$ can change, but the key of its right neighbor cannot increase.  Hence $v$ can only lose, not win, a link with its right neighbor during the rest of the delete-min.  The argument for left links is symmetric, with minor changes to account for the tie-breaking rule in leftmost locally maximum linking.
\end{proof}

By Lemma~\ref{L:s-treap}, the set of links done during a delete-min in a slim or smooth heap forms a binary tree that is heap-ordered by key and symmetrically ordered by position on the root list just after the original root is deleted.  We call this tree the \emph{treap} formed by the delete-min.  See Figure~\ref{F:treap-example}.

\begin{figure}[hb!]
\centering
\includegraphics[width=4in]{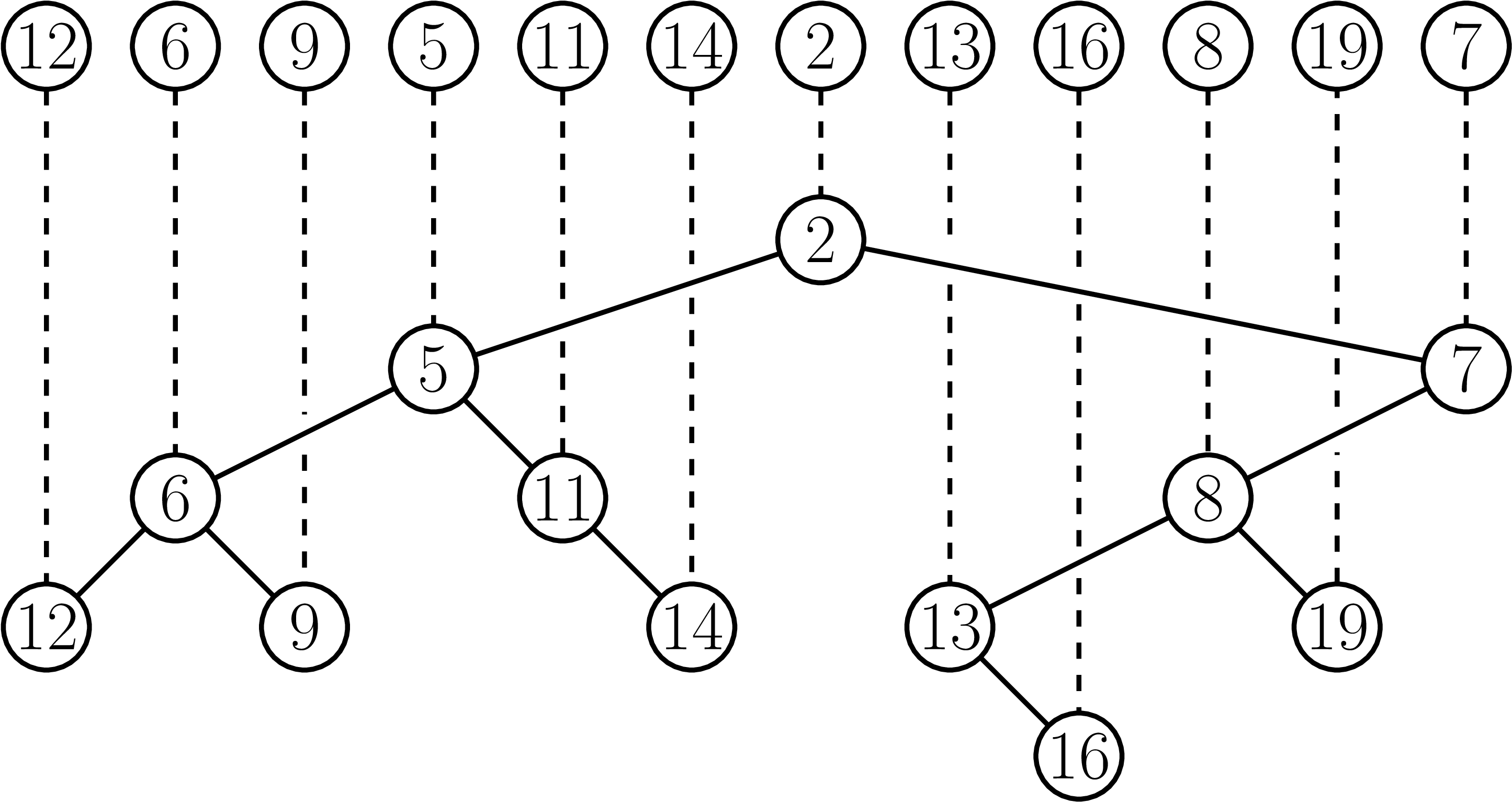}
\caption{An example of a treap formed by leftmost locally maximum linking.  Original subtrees of the roots linked are not shown.  Symmetric order in the treap is left-to-right order on the root list before any linking.}
\label{F:treap-example}
\end{figure}

Our second result bounds the number of deletion links $d$ in terms of the number of real deletion links.  This is the first step in bounding the number of deletion links. 

\begin{lemma}\label{L:slim-smooth-link-bound}
In any sequence of operations on slim or smooth heaps starting with no heaps, $d$ is at most three per real deletion link plus three per insertion plus three per decrease-key. 
\end{lemma}
\begin{proof}
Consider a delete-min that does $k$ deletion links and in which $j$ of the $k+1$ new roots are temporary.  We consider two cases.  If $j < k/3$, at least $k -2j \geq k/3$ of the deletion links must be won by a permanent node, since by Lemma~\ref{L:s-treap} each of the $j$ temporary roots wins at most two of the $k$ links.  Each link won by a permanent node is either an f-link (never cut) or a k-link (cut by a decrease-key).  If it is an f-link, we charge $3$ to the loser of the link, which is a permanent node.  If it is a k-link, we charge $3$ to the decrease-key that cuts the link.  The total charge is at least $k$, the number of deletion links.

If $j \geq k/3$, at least $j - 1 \geq k/3 - 1$ deletion links must be lost by temporary nodes.  Each such link is either a real deletion link or a k-link.  In the former case we charge $3$ to the link; in the latter we charge $3$ to the decrease-key that cuts the link.  We also charge $1$ to the insertion of the node deleted by the delete-min, which is a temporary node.  The total charge is at least $k$.

Combining the cases gives the lemma.
\end{proof}

\subsection{The treap and its boundaries}\label{S:treap}

In slim and smooth heaps, the links done during a delete-min do not occur in separate passes, so the concept of $b$-reduced ranks is not helpful.  Instead of partitioning the links done during a delete-min into passes, we partition them in a way that depends on the treap $\cT$ whose edges are the links done during the delete-min.  In this section we study some properties of this treap and introduce some relevant terminology.

We define a \emph{boundary} of $\cT$ to be a partition of the set of nodes in $\cT$ into two nonempty \emph{sides}, such that every node on one side is smaller in symmetric order (with respect to $\cT$) than every node on the other side.  The former is the \emph{left side} of the boundary; the latter is the \emph{right side}.  Each node $x$ in $\cT$ except the largest in symmetric order defines a unique boundary whose left side contains $x$ and all smaller nodes in symmetric order and whose right side contains all nodes larger than $x$ in symmetric order.  We call this the \emph{boundary of $x$}. A link $vw$ done during the delete-min \emph{crosses} a boundary if exactly one of $v$ and $w$ is on each side of the boundary.  See Figure~\ref{F:alternating}.

\begin{figure}[h!]
\begin{center}
\includegraphics[width=3in]{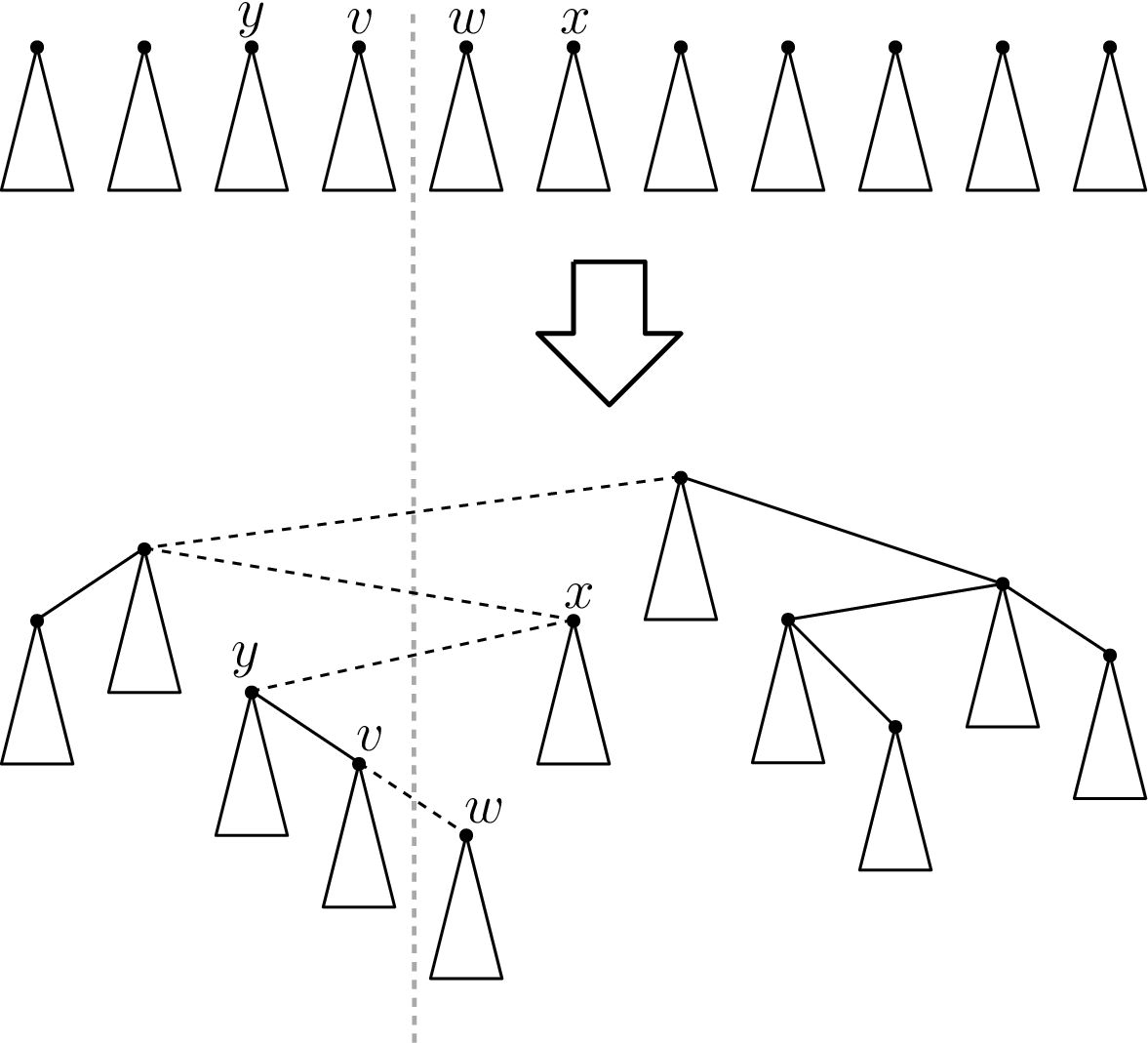}
\end{center}
\caption{Links crossing a boundary after a delete-min.  The dashed links cross the dashed grey boundary and alternate between left and right links. The boundary pictured is the boundary of $v$.}
\label{F:alternating}
\end{figure}

\begin{lemma}\label{L:crossing-links-path}
Let $\cT$ be the treap formed by a delete-min in a slim or smooth heap. Given the boundary of a node $x$ in $\cT$, all the links in $\cT$ that cross the boundary are on a single path in $\cT$ from the root of $\cT$ down to either $x$, if $x$ does not win a crossing link, or down to the smallest node greater than $x$ in symmetric order if $x$ does win a crossing link.  Along this path, the links crossing the boundary alternate between left and right.  If $x$ wins a right link in $\cT$, this link is the lowest link to cross the boundary.
\end{lemma}
\begin{proof}
Start at the root of $\cT$ and follow the path $P$ defined as follows: Let $v$ be the current node on the path, initially the root.  If $v$ is larger in symmetric order than $x$, replace $v$ by its left child in $\cT$; otherwise, replace $v$ by its right child in $\cT$.  Repeat until $v=x$.  If $x$ has a right child, replace $v$ by its right child, and then repeatedly replace $v$ by its left child until $v$ has no left child.

If $v$ is replaced by its right child, the link to its left child, as well as every link in the left subtree of $v$, does not cross the boundary; if $v$ is replaced by its left child, the link to its right child, as well as every link in the right subtree of $v$, does not cross the boundary.  If $v=x$, the link to its left child, and all links in the subtree of this child, do not cross the boundary; the link to the right child of $v$, if it exists, crosses the boundary, but no links in the subtree of this right child cross the boundary.  It follows by induction on the length of $P$ that every link crossing the boundary is on $P$.  If $vw$ is a link on $P$ that does not cross the boundary, both $v$ and $w$ are on the same side of the boundary.  It follows by induction on the length of $P$ that if $uw$ and $yz$ are successive links on $P$ that cross the boundary, both $w$ and $y$ are on the same side of the boundary, which means that exactly one of $uw$ and $yz$ is a left link. That is, the links crossing the boundary alternate between left and right.  If $x$ has no right child, $x$ is the bottom node on $P$.  If $x$ has a right child, say $y$, the bottom node on $P$ is the smallest node in symmetric order that is a descendant of $y$, which is the node in $\cT$ after $x$ in symmetric order.
\end{proof}

\subsection{Link order}
\label{S:link-order}

In our analysis of slim heaps, we shall make the simplifying assumption that if a temporary root $u$ wins both a left link $uv$ and a right link $uw$ during a delete-min, $u$ wins the right link first.  Suppose that in fact $u$ wins the left link first.  After both links, $w$ is the left neighbor of $v$ on the list of children of $u$, but if $u$ wins the right link first, the order is reversed: $w$ is the right neighbor of $v$.  In this case we restore the correct order when $u$ is deleted, by swapping the order of $v$ and $w$ on the root list before any links are done.  This guarantees that the set of links done in the actual algorithm is the same as the set of links we analyze.  In our analysis we shall need to account for the effect of such swaps.

\subsection{Ranks in slim heaps}\label{S:slim-ranks}

In slim heaps, the definitions of mass and rank depend on the future behavior of the algorithm, not just on its present and past behavior.  Specifically, masses and ranks during a delete-min depend on the treap $\cT$ formed by the delete-min, which is not completely determined until the last link of the delete-min.  We partition $\cT$ into maximal paths of real right links. Each node in $\cT$ is on exactly one such path.  Given such a path, if its top node (the one nearest the root of $\cT$) is the root itself or is the loser of a phantom link in $\cT$, we define all nodes on the path to be \emph{heavy}; otherwise all nodes on the path are \emph{light}.  Each node of $\cT$ retains its state (light or heavy) throughout the delete-min.  Each permanent node of $\cT$ forms a one-node path and is heavy, since any link it wins or loses is a phantom link.  Each real right link is between a pair of heavy nodes or a pair of light nodes.  Each loser of a real left link is light.  We define a real left link $vw$ to be \emph{anomalous} if $v$ is heavy and the subtree of $\cT$ rooted at $w$ contains at least one k-link.  See Figure~\ref{F:linking-example}.

\begin{figure}
\centering
\includegraphics[width=4.8in]{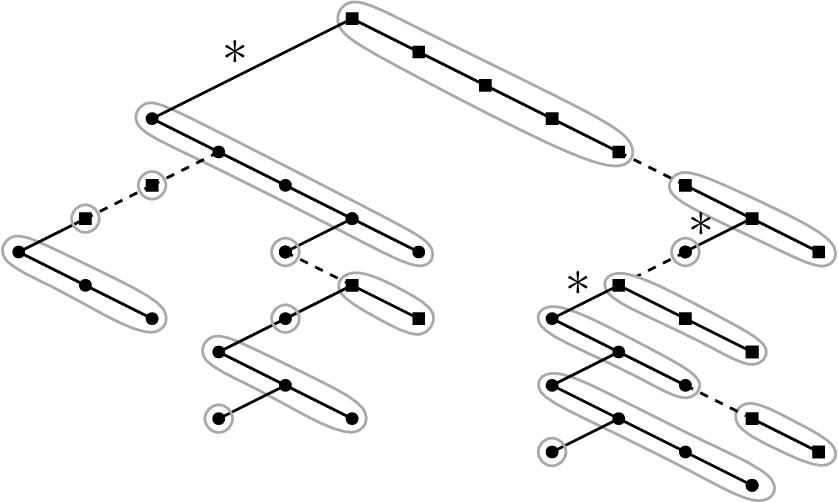}
\caption{An example of the treap formed by a delete-min.  Squares denote heavy nodes, circles light nodes.  Asterisks indicate anomalous links.  Solid links are real, dashed links phantom.  Each maximal path of real right links is circled.}
\label{F:linking-example}
\end{figure}

We define sizes in a slim heap as in a multipass pairing heap: The size $v.s$ of a temporary node $v$ is the number of temporary nodes that are descendants of $v$ and connected to $v$ by a path of real links.  During a delete-min, each root $v$, temporary or permanent, also has an \emph{extended size} $v.\bar{s}$.  We define extended sizes operationally.  Just after the original root is deleted in a delete-min, each temporary root has extended size equal to its size, and each permanent root has extended size $0$.  A link $vw$ increases the extended size of $v$ by that of $w$, whether the link is real or phantom.  (If the link is real, it also increases the size of $v$ by that of $w$.)   Equivalently, the extended size of a root $v$ during a delete-min is the number of temporary nodes that are descendants of $v$ and connected to it by a path of real links and links done during the current delete-min (before the present moment).  That is, phantom links done before the current delete-min are excluded.

We define the \emph{mass} $v.m$ of a temporary node $v$ as follows:  If $v$ is a  real child, its mass is one plus its size plus the sizes of its siblings to its right on its list of siblings.  Equivalently, its mass is the size of its parent $u$ just after $v$ lost the link $uv$.  If $v$ is a phantom child or a root not on the root list in the middle of a deletion, its mass is its size.  In these cases the definition of mass is the same as in multipass pairing heaps. 

If $v$ is a temporary root on the root list in the middle of a deletion, its mass depends on whether it is light or heavy.  If $v$ is heavy, its mass is its extended size plus the sum of the extended sizes of the roots to its right on the root list.  If $v$ is light, its mass is its size plus the sum of the extended sizes of the roots to its right on the root list.  There are two exceptions to this definition of mass:
\begin{itemize}
\setlength\itemsep{1em}
\item[(i)] Just before a temporary node $w$ loses a real right link $vw$, its mass becomes $v.m'$, the mass of $v$ just after the link.  After the link, the mass of $w$ becomes $v.s'$, the size of $v$ just after the link.
\item[(ii)] Just after a temporary node $v$ wins an anomalous link, its mass temporarily decreases to its size, then increases to its extended size plus the extended sizes of the nodes to its right on the root list. 
\end{itemize}

We call the increase in the mass of $w$ from $w.m$ to $v.m'$ in exception (i) and of that of $v$ from $v.s'$ to $v.m'$ in exception (ii) a \emph{step-up} in the mass of $w$ or $v$, respectively.     We call the decrease in the mass of $v$ in exception (ii) a \emph{step-down}.  We extend the terms ``step-up" and ``step-down" to the changes in rank resulting from the changes in mass.  Exception (i) is the same exception we used in the analysis of multipass pairing heaps.  The exceptions guarantee that all link ranks are non-negative.

A permanent node has no size, extended size, or mass, unless it is a root in the middle of a delete-min, when it has an extended size.  The extended size of a permanent node $v$ is zero at the beginning of a delete-min.  It becomes non-zero if (and only if) it has at least one descendant in $\cT$ that is a temporary node: If so, it will win a link during the delete-min that makes its extended size positive.

The definitions of node ranks, link ranks, $w$-shifts, $v$-shifts, and $u$-shifts are the same for slim heaps as for multipass pairing heaps. They are also stated in Section~\ref{S:terminology}.  In these definitions, if $vw$ is a real right link, $w.r_v$ is the rank of $w$ after its step-up just before the link, and if $vw$ is an anomalous link, $v.r_w'$ is the rank of $v$ after its step-down just after the link, but before its step-up.  

Now that we have the appropriate definitions, we can prove Lemma~\ref{L:multipass-link-ranks-nonnegative} for slim heaps.

\begin{lemma}\label{L:slim-link-ranks-nonnegative}
In a slim heap, every link rank is non-negative. 
\end{lemma}

\begin{proof}
Let $vw$ be a real deletion link with winner link $uv$.  The link rank of $vw$ is $v.r_u - u.r_v'$.  If $uv$ is an insertion, decrease-key, or anomalous link, the mass of $u$ just after link $uv$ is done is the size of $u$.  Thus $u.r_v' = 0$, so $v.r_u - u.r_v' \geq 0$. 

Suppose $uv$ is a right link or a non-anomalous left link.  Let unprimed and primed sizes and masses denote values just before and just after $uv$ is done, respectively.

If $uv$ is a right link, $v.r_u = \lg\lg(u.m'/v.s')$ due to the step-up in the mass of $v$ just before the link.  Also, $u.r_v' = \lg\lg(u.m'/u.s')$.  Since $v.s' = v.s$ and $u.s' = u.s + v.s$, $u.m'/v.s' \geq u.m'/u.s'$, giving  $v.r_u - u.r_v'\geq 0$.

If $uv$ is a left link, then $v$ is light.  Let $M$ be the sum of the extended sizes of all roots right of $u$ on the root list just before $uv$ is done.  If $u$ is light, then $v.m = v.s+u.\bar{s}+M \geq v.s+u.s+M = u.m'$. Since $v.s \leq u.s'$, $u.m'/u.s' \leq v.m/v.s$, giving $v.r_u - u.r_v' \geq 0$.

If $u$ is heavy, $v.\bar{s} = v.s$, since $uv$ is not anomalous.  This implies $v.m = v.s+u.\bar{s}+M = v.\bar{s}+u.\bar{s}+M=u.m'$, so in this case also $u.m'/u.s' \leq v.m/v.s$, giving $v.r_u - u.r_v' \geq 0$. 
\end{proof}

We proceed as in the analysis of multipass pairing heaps.

\subsection{Small-rank links in slim heaps}\label{S:small-rank-slim}

Our next step is to bound the number of real deletion links of small rank.  First we prove an analogue for slim heaps of Lemma~\ref{L:multipass-small-node-ranks} for multipass pairing heaps.

\begin{lemma}
\label{L:slim-small-node-ranks}
In a slim heap just after the root swaps in a delete-min (if any), the number of temporary roots on the root list with rank at most $1$ is at most $2.5\lg n + 1$.
\end{lemma}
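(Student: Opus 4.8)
The plan is to mirror the proof of Lemma~\ref{L:multipass-small-node-ranks} once the key identity $v.m - v.s = w.m$ for consecutive temporary roots is re-established in the slim-heap setting.

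First I would use the fact that immediately after the root swaps no links have yet been done during the delete-min. Hence exceptions (i) and (ii) in the definition of mass (Section~\ref{S:slim-ranks}) do not apply, each temporary root has extended size equal to its size, and each permanent root has extended size $0$. Consequently, whether a temporary root $v$ is light or heavy, its mass collapses to the same value $v.m = v.s + \sum_w w.s$, where the sum ranges over the temporary roots $w$ strictly to the right of $v$ on the root list (permanent roots contribute nothing, since their extended size is $0$). In particular, if $w$ is the next temporary root to the right of $v$, then $v.m = v.s + w.m$, so $v.m - v.s = w.m$; and since $v.s \geq 1$, the masses of the temporary roots strictly decrease from left to right.

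The remainder follows Lemma~\ref{L:multipass-small-node-ranks} exactly. If $v.r \leq 1$ then $v.m/v.s \leq 4$, so $v.s \geq v.m/4$, and thus $w.m = v.m - v.s \leq (3/4)v.m$, i.e.\ $v.m \geq (4/3)w.m$. Let $r_1, r_2, \dots, r_t$ be the temporary roots in left-to-right order, so $r_1.m \geq r_2.m \geq \dots \geq r_t.m \geq 1$, with a factor-$(4/3)$ drop from $r_j.m$ to $r_{j+1}.m$ whenever $j < t$ and $r_j.r \leq 1$. Since $r_1.m$ is at most the number of temporary nodes lying in the real subtree of some root, which is at most $n$, if $i$ of the temporary roots have rank at most $1$, then at least $i-1$ of them are followed on the list by another temporary root, so $(4/3)^{i-1} \leq n$, giving $i \leq (\lg n)/\lg(4/3) + 1 \leq 2.5\lg n + 1$.

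The only real content beyond the multipass case is the first step: verifying that the light/heavy distinction and the two mass exceptions are all vacuous in the post-swap, pre-link state, so that the mass of a temporary root reduces to the plain sum-of-sizes expression. This is a bookkeeping check against the definitions in Section~\ref{S:slim-ranks}, and it is the only place where the argument differs from the one for multipass pairing heaps; after it, the counting argument is unchanged.
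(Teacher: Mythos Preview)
Your proposal is correct and follows essentially the same approach as the paper: observe that just after the swaps no links have been done, so extended sizes equal sizes for temporary roots and vanish for permanent roots, after which the argument of Lemma~\ref{L:multipass-small-node-ranks} applies verbatim. The only difference is that you spell out explicitly why the light/heavy distinction and the two mass exceptions are vacuous at this moment, which the paper leaves implicit.
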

\begin{proof}
Just after the swaps, there have been no links, so the size of each temporary root equals its extended size, and the extended size of each permanent root is zero.  The lemma follows by the proof of Lemma~\ref{L:multipass-small-node-ranks}.   
\end{proof}

Second we observe that Lemma~\ref{L:multipass-small-link-rank-decreases} holds for slim heaps:

\begin{lemma}
\label{L:slim-small-link-rank-decreases}
Let $vw$ be a real deletion link with winner link $uv$.  Suppose $vw$ has link rank less than $1$.  If both $u$ and $v$ have positive rank just after $uv$ is done, then $v.r' \leq v.r - 1$, where unprimed and primed variables take their values just before $uv$ is done (after the step-up in the mass of $v$ if there is one) and just after $uv$ is done, respectively. 
\end{lemma}
\begin{proof}
Since $u$ and $v$ have positive rank just after $uv$ is done, $uv$ is not an anomalous link.  This makes the proof of Lemma~\ref{L:multipass-small-link-rank-decreases} valid for slim heaps. \end{proof}

Third we use Lemma~\ref{L:slim-small-node-ranks} and Lemma~\ref{L:slim-small-link-rank-decreases} to obtain a bound on the number of real deletion links of rank less than $1$ in terms of the sum of the magnitudes of all node-rank decreases.

\begin{lemma}\label{L:slim-small}
In any sequence of slim heap operations starting with no heaps, the number of real deletion links of rank less than $1$ is at most six times the sum of the magnitudes of all node-rank decreases plus two per insertion plus two per decrease-key plus $15\lg n + 6$ per delete-min. 
\end{lemma}
\begin{proof}
Let $vw$ with winner link $uv$ be a real deletion link with link rank less than $1$.  Since $v$ wins at most two links during a deletion, a given link $uv$ is the winner link of at most two such links $vw$.  We shall charge $vw$ to some event related to $uv$, doubling our estimate to account for the possibility that $uv$ is the winner link of two different links.    

If $uv$ is an insertion or decrease-key link, we charge $vw$ to $uv$.  After doubling, the total of such charges is at most two per insertion and two per decrease-key.  If $uv$ is a deletion link and $u$ or $v$ has rank less than $1$ just after the root swaps in the delete-min that does $uv$, we charge $vw$ to one of $u$ and $v$ that has rank less than $1$ just after the swaps, choosing either if both have rank less than $1$.  Since a given node can win at most two links and lose at most one link during a delete-min, the total charge per node for the winner links done during a given delete-min is at most six, including the doubling.  By Lemma~\ref{L:slim-small-node-ranks}, the total of such charges is at most $15\lg n + 6$ per delete-min.  

The remaining possibility is that $uv$ is a deletion link and both $u$ and $v$ have rank at least $1$ just after the swaps in the delete-min that does $uv$.  We claim that in this case the sum of the magnitudes of the rank decreases of at least one of $u$ or $v$ from just after the root swaps in the delete-min that does $uv$ until just after $uv$ is done must be at least $1$.  If this is not already true just before $uv$ is done, then both $u$ and $v$ have positive rank just before $uv$ is done, but then by Lemma~\ref{L:slim-small-link-rank-decreases} doing $uv$ decreases the rank of $v$ by at least $1$, making it true.  We charge $vw$ to the sum of the magnitudes of the node-rank decreases of one of $u$ and $v$ whose sum of node-rank decreases during the delete-min that does $uv$ is at least $1$.  The total of such charges is at most six per unit of node-rank decrease.  

Adding these bounds gives the lemma.
\end{proof}

To finish the task of bounding the number of real deletion links of small rank, we use an  analogue for slim heaps of Theorem~\ref{T:multipass-node-rank-changes}.  The bound for slim heaps is simpler in that there is no additive term in $d$, the total number of deletion links.  We state the bound here and prove it in Section~\ref{S:node-rank-changes-slim}.

\begin{theorem}\label{T:slim-node-rank-changes}
In any sequence of slim heap operations starting with no heaps, the sum of the magnitudes of all node-rank changes is $\OO(\lg\lg n)$ per decrease-key plus $\OO(\lg n)$ per delete-min.  
\end{theorem}

Since any link is the winner link of at most two real deletion links, Lemma~\ref{L:slim-small} and Theorem~\ref{T:slim-node-rank-changes} combine to give us our desired bound on the number of real deletion links of small rank:

\begin{theorem}\label{T:slim-small}
In any sequence of slim heap operations starting with no heaps, the total number of real deletion links with rank less than $1$ is $\OO(1)$ per insertion plus $\OO(\lg\lg n)$ per decrease-key plus $\OO(\lg n)$ per delete-min. 
\end{theorem}

\subsection{Large-rank links in slim heaps}\label{S:large-rank-slim}

Having bounded the number of real deletion links of small rank, we now bound the number of such links of large rank.  As in the case of multipass pairing heaps, we do this by bounding the sum of link ranks using the link-rank equality, which states that if $vw$ is a real deletion link with winner link $uv$, then
\begin{align*}
v.r_u-u.r_v' = (v.r_w' - u.r_w') + (v.r_u - v.r_w') + (u.r_w' - u.r_v')
\end{align*}

That is, the rank of $vw$ is the sum of its $w$-shift, its $v$-shift, and its $u$-shift.  Summing over all real deletion links, the sum of the ranks of the real deletion links equals the sum of their $w$-shifts, $v$-shifts, and $u$-shifts.  Lemma~\ref{L:w-shifts} is a bound on the sum of the $w$ shifts.  It remains to bound the sum of the $v$-shifts and the sum of the $u$-shifts.  Bounding the sum of the $v$-shifts is easy, because during a single delete-min a node can win only two links.  Bounding the sum of the $u$-shifts uses the treap structure of the links done by a delete-min.  We rely on Theorem~\ref{T:slim-node-rank-changes}, which we stated in the previous section and will prove in the next one.

\begin{lemma}\label{L:slim-v-shifts}
In any sequence of slim heap operations starting with no heaps, the sum of the $v$-shifts of the real deletion links is $\OO(\lg\lg n)$ per decrease-key plus $\OO(\lg n)$ per delete-min. 
\end{lemma}

\begin{proof}
Let $vw$ be a real deletion link with winner link $uv$.  The $v$-shift of $vw$ is $v.r_u - v.r_w'$, the net decrease in the rank of $v$ from the time $t_1$ just before link $uv$ is done (after the step-up in the mass of $v$ if it is a right link) until the time $t_2$ just after $vw$ is done.  We call $[t_1, t_2]$ the $v$-shift time interval of $vw$.  At most two links are won by $v$ during a single delete-min.  If $v$ wins two such links, arbitrarily label them $1$ and $2$.  For different real deletion links $vw$ with the same label won by $v$, the corresponding $v$-shift time intervals are disjoint.  It follows that for a given $v$ the sum of the $v$-shifts of all real deletion links $vw$ labeled $1$ is at most the sum of the decreases in $v.r$ over the sequence of heap operations, as is the sum of the $v$-shifts of all real deletion links $vw$ labeled $2$.  Summing over all $v$ and applying Theorem~\ref{T:slim-node-rank-changes} gives the lemma.     
\end{proof}

The $u$-shifts are far more challenging to bound, because an unbounded number of the real links done when node $u$ is deleted can have overlapping $u$-shift time intervals.  For a given $u$, we bound the sum of the $u$-shifts using the cancellation among these shifts that occurs because of the treap structure of the links during the deletion of $u$.

\begin{lemma}
\label{L:slim-u-shifts}
In any sequence of slim heap operations starting with no heaps, the sum of the $u$-shifts of the real deletion links is $\OO(\lg\lg n)$ per decrease-key plus $\OO(\lg n)$ per delete-min. 
\end{lemma}

\begin{proof}
Consider a delete-min that deletes $u$.  Let $\cT$ be the treap formed by the links done during the delete-min.   Let $v_1, v_2,\dots, v_k$ be the temporary nodes in $\cT$, in symmetric order in $\cT$.  This is their order on the root list just after the swaps done to correct for the linking order: See Section~\ref{S:link-order}.  For $1 \leq i \leq k$, let $u.r'_i$ be the value of $u.r$ just after $u$ won the link $uv_i$ cut by the delete-min.  For $1 \leq i < k$, let  $\Delta_i = u.r'_i - u.r'_{i+1}$, and let $I_i$ be the interval of time between just after $uv_{i+1}$ is done and just after $uv_i$ is done.  The value of $|\Delta_i|$ is the magnitude of the change in $u.r$ from the beginning of $I_i$ to the end of $I_i$, and is at most the sum of the magnitudes of the changes in $u.r$ during $I_i$.

Before the swaps, the $v_i$ are in order on the root list by the time they were linked to $u$, latest to earliest from left to right.  This is not true after the swaps, but the swaps do not change the order too much, as we now show.  Any swaps are of node-disjoint pairs of adjacent nodes on the root list. A swap that involves a permanent root has no effect on the order of the temporary nodes, so we can ignore all such swaps.  Consider $v_i$ and $v_{i+1}$.  At most two temporary nodes, $v_{i-1}$ and $v_{i+2}$, are in between $v_i$ and $v_{i+1}$ before the swaps. If both $v_{i-1}$ and $v_{i+2}$ are in between $v_i$ and $v_{i+1}$, then $|\Delta_i| \leq |u.r'_i - u.r'_{i-1}| + |u.r'_{i-1} - u.r'_{i+2}| + |u.r'_{i+2}-u.r'_{i+1}|$.  If $v_{i-1}$ but not $v_{i+2}$ is in between, then $|\Delta_i| \leq |u.r'_i - u.r'_{i-1}| + |u.r'_{i-1} - u.r'_{i+1}|$.  Similarly, if $v_{i+2}$ but not $v_{i-1}$ is in between, then $|\Delta_i| \leq |u.r'_i - u.r'_{i+2}| + |u.r'_{i+2}-u.r'_{i+1}|$.  Finally, if neither is in between, $|\Delta_i| \leq |u.r'_i - u.r'_{i+1}|$.  In each case $|\Delta_i|$ is at most the sum of at most three terms, each of which is the change or the negative of the change in $u.r$ from the time $u$ won a link with one of $v_{i-1}$, $v_i$, $v_{i+1}$, or $v_{i+2}$ until it won its next link with one of $v_{i-1}$, $v_i$, $v_{i+1}$, or $v_{i+2}$.  This implies that $\sum_{i=1}^{k-1}|\Delta_ i|$ is at most three times the sum of the magnitudes of all the changes in $u.r$ from the time $u$ was inserted until it was deleted.       

Let $v_sv_t$ be a real link done during the delete-min that deletes $u$.  If $t < s$, the link is a left link, and its $u$-shift is $\sum_{i=t}^{s-1} \Delta_i$.  For $t \leq i < s$, we call $\Delta_i$ the \emph{$i$-term} of $v_sv_t$.  If $t>s$, the link is a right link, and its $u$-shift is $\sum_{i=s}^{t-1} -\Delta_i$.  For $s \leq i < t$, we call $-\Delta_i$ the \emph{$i$-term} of $v_sv_t$.
We represent each $u$-shift as the sum of its $i$-terms and add them up.  We add these terms in groups, to maximize the cancellation of the plus and minus terms.

Recall from Section~\ref{S:treap} that a \emph{boundary} of $\cT$ is a partition of the nodes of $\cT$ into two nonempty parts, with all nodes in the left part less in symmetric order than all nodes in the right part.  The boundary of $v_i$ is the boundary whose largest left-part node in symmetric order is $v_i$.  Consider a real link in $\cT$ between $v_s$ and $v_t$ with $s < t$.  This link crosses the boundary of $v_i$ if and only if $s \leq i < t$.  If it does cross this boundary, its $i$-term contributes $\Delta_i$ or $-\Delta_i$ to the sum of the $u$-shifts if it is a left or right link, respectively.

By Lemma~\ref{L:crossing-links-path}, all the links crossing the boundary of $v_i$ are on a single path in $\cT$ from the root to $v_i$, or to its right child if it has a right child, and the links on this path that cross the boundary alternate between left and right links.  Given a real link $v_sv_t$ that crosses the boundary of $v_i$, we charge the $i$-term of $v_sv_t$ to $v_i$ if the path from $v_s$ to $v_i$ in $\cT$ contains only real links; if not, we charge the $i$-term of $v_sv_t$ to the topmost (closest to $v_s$) temporary node that loses a phantom link on the path from $v_s$ to $v_i$ in $\cT$.
We call the $i$-terms charged in the first way \emph{type-1 charges} and those charged in the second way \emph{type-2 charges}.

Consider the type-1 charges to a node $v_i$.  Each is of a link on a path of real links in $\cT$ whose bottommost node is $v_i$ or its right child.  Along this path, links crossing the boundary of $v_i$ alternate left and right by Lemma~\ref{L:crossing-links-path}.  The number of crossing left links is thus within one of the number of crossing right links.  This implies that the sum of the type-1 charges to $v_i$ is $-\Delta_i$, $0$, or $\Delta_ i$.  Summing over all $v_i$, the sum of the type-1 charges is at most $\sum_{i=1}^{k-1} |\Delta_i|$, which is at most three times the sum of the magnitudes of the changes in $u.r$ by the argument bounding the effect of the root swaps.  Summing over all $u$, the sum is within the bound of the lemma by Theorem~\ref{T:slim-node-rank-changes}(iii).

It remains to bound the type-2 charges.  Each of these is to a temporary node $v_j$ that lost a k-link in $\cT$.  We shall charge this k-link for all the type-2 charges to $v_j$.

Let $v_a$ and $v_b$ be the descendants of $v_j$ in $\cT$ with $a$ smallest and $b$ largest.  If $v_sv_t$ is a link with an $i$-term that is a type-2 charge to $v_j$, $v_t$ is a proper ancestor of $v_j$ in $\cT$.  Furthermore the $i$-terms of $v_sv_t$ charged to $v_j$, all of type 2, are exactly those such that $a \leq i < b$.  Finally, all such $i$-terms count positively if the link is a left link or negatively if the link is a right link.  Thus the sum of these charges for $v_sv_t$ is $u.r'_a-u.r'_b$, $0$, or $u.r'_b-u.r'_a$.  The set of links with an $i$-term that is a type-2 charge to $v_j$ are on a single path of real links in $\cT$, along which the links crossing the boundary of $v_j$ alternate left and right.  Hence the number of such links that are left is within one of the number that are right, and the sum of the type-2 charges to $v_j$ is $u.r'_a-u.r'_b$, $0$, or $u.r'_b-u.r'_a$. The magnitude of this sum is at most $\max\{u.r'_a, u.r'_b\}$.  We charge this amount to the k-link lost by $v_j$ in the delete-min that deletes $u$.  By the argument in the proof of Lemma~\ref{L:k-link-shift}, the sum of the type-2 charges over all delete-mins is $\OO(\lg\lg n)$ per decrease-key plus $3$ per delete-min.  Multiplying by a factor of three to account for the effect of root swaps, the sum is within the bound of the lemma.      
\end{proof}

Now we can prove the analogue of Theorem~\ref{T:multipass-link-rank-sum} for slim heaps: 

\begin{theorem}
\label{T:slim-link-rank-sum}
In any sequence of multipass pairing heap operations starting with no heaps, the sum of the link ranks of the real deletion links is $\OO(\lg\lg n)$ per decrease-key plus $\OO(\lg n)$ per delete-min.
\end{theorem}

\begin{proof}
We sum the link-rank equality over all real deletion links and apply Lemmas~\ref{L:w-shifts},~\ref{L:slim-v-shifts}, and~\ref{L:slim-u-shifts}.
\end{proof}

Since all link ranks are non-negative by Lemma~\ref{L:slim-link-ranks-nonnegative}, the bound in Theorem~\ref{T:slim-link-rank-sum} is also a bound on the number of real deletion links with rank at least $1$. Adding this bound to the bound in Theorem~\ref{T:slim-small} on the number of such links that have rank less than $1$ and applying Lemma~\ref{L:slim-smooth-link-bound} gives us our desired bound on the total number of deletion links:

\begin{theorem}\label{T:slim-deletion-link-bound}
In any sequence of slim heap operations starting with no heaps, the total number of deletion links is $\OO(1)$ per insertion plus $\OO(\lg\lg n)$ per decrease-key plus $\OO(\lg n)$ per delete-min. 
\end{theorem}

Theorem~\ref{T:slim-deletion-link-bound} gives our bound on the efficiency of slim heaps:

\begin{theorem}\label{T:slim-time-bound}
Any sequence of slim heap operations starting with no heaps takes $\OO(\lg\lg n)$ amortized time per decrease-key, $\OO(\lg n)$ amortized time per deletion, and $\OO(1)$ amortized (and worst-case) time for each other heap operation. 
\end{theorem}

\subsection{Node-rank changes in slim heaps}\label{S:node-rank-changes-slim}

Our remaining task is to prove Theorem~\ref{T:slim-node-rank-changes}.  We proceed as in the proof of Theorem~\ref{T:multipass-node-rank-changes}, with the addition of a lemma that bounds the effect of root swaps at the beginning of a delete-min.

The sum of node ranks is initially $0$.  We need to bound increases in this sum. Creation of a node does not increase the sum of node ranks, nor does a cut, nor does deletion of the root at the beginning of a delete-min.  The next lemma bounds the increase caused by swapping node-disjoint pairs of adjacent roots on the root list during a delete-min, to correct for changes in linking order as discussed in Section~\ref{S:link-order}.

\begin{lemma}\label{L:swapping}
In a slim heap, swapping node-disjoint pairs of adjacent roots on the root list just after deleting the original root in a delete-min increases the sum of node ranks by at most $2\lg n$.
\end{lemma}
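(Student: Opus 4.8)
The plan is to reduce to swaps of two temporary roots, note that at this instant every root's mass is a suffix sum of node sizes, charge each swap to the rank increase of the root that moves left, and then telescope over the (node-disjoint) swapped pairs.

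First I would dispose of swaps that involve a permanent root. Just after the original root is deleted and before any links, every permanent root has extended size $0$ and no rank, and the mass of a temporary root is its (extended) size plus a sum of extended sizes of the roots to its right, to which a permanent root contributes nothing; hence moving a permanent root past an adjacent temporary root changes no mass and creates no rank. So it suffices to bound the increase caused by the swaps of two temporary roots. At this instant extended sizes equal sizes, so — regardless of whether a root is light or heavy — the mass of a temporary root equals its size plus the sizes of the temporary roots to its right. Listing the temporary roots left to right as $v_1,\dots,v_k$ with sizes $s_1,\dots,s_k$ and writing $m_i=\sum_{j\ge i}s_j$, we get $v_i.m=m_i$, $v_i.r=\lg\lg(m_i/s_i)$, and $m_1\ge m_2\ge\cdots\ge m_k\ge 1$.

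Next I would analyze a single swap. The temporary-root swaps are node-disjoint transpositions of adjacent roots; index them so the $\ell$-th swaps the roots in positions $a_\ell-1$ and $a_\ell$, with $a_1<a_2<\cdots<a_p$ and, by node-disjointness, $a_{\ell+1}\ge a_\ell+2$. In the $\ell$-th swap the root in position $a_\ell$ moves left to position $a_\ell-1$, its mass rising from $m_{a_\ell}$ to $m_{a_\ell-1}=m_{a_\ell}+s_{a_\ell-1}$, while the root in position $a_\ell-1$ moves right to position $a_\ell$, its mass falling from $m_{a_\ell-1}$ to $m_{a_\ell-1}-s_{a_\ell}$; no other root's mass changes. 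Thus the rank of the root moving right cannot increase, and the total increase in the sum of node ranks is at most $\sum_{\ell=1}^p D_\ell$, where $D_\ell:=\lg\lg(m_{a_\ell-1}/s_{a_\ell})-\lg\lg(m_{a_\ell}/s_{a_\ell})\ge 0$ is the rank increase of the root moving left.

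Finally I would bound this sum. Since $m_{a_\ell}\ge s_{a_\ell}\ge 1$, the values $\lg(m_{a_\ell-1}/s_{a_\ell})$ and $\lg(m_{a_\ell}/s_{a_\ell})$ are non-negative with difference equal to $\lg(m_{a_\ell-1}/m_{a_\ell})$; since $\lg$ is $2$-Lipschitz on $[0,\infty)$ (Lemma~\ref{L:inequalities}(ii) with $a=b=0$, the boundary case $\lg(m_{a_\ell}/s_{a_\ell})=0$ handled directly using $\lg z\le z$), this gives $D_\ell\le 2\lg(m_{a_\ell-1}/m_{a_\ell})=2(\lg m_{a_\ell-1}-\lg m_{a_\ell})$. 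Here lies the one real idea: this bound, rather than the cruder $D_\ell\le\lg\lg(m_{a_\ell-1}/s_{a_\ell})$, telescopes. Because the positions $a_1-1<a_1<a_2-1<a_2<\cdots<a_p-1<a_p$ are strictly increasing and the suffix sums are non-increasing in position, $m_{a_\ell}\ge m_{a_{\ell+1}-1}$ for each $\ell$, so $\sum_\ell(\lg m_{a_\ell-1}-\lg m_{a_\ell})\le\lg m_{a_1-1}-\lg m_{a_p}\le\lg n$, using $m_{a_1-1}\le n$ (the real subtrees of distinct temporary roots are disjoint) and $m_{a_p}\ge 1$. Hence the total increase is at most $2\lg n$. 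The main obstacle is not any single inequality but getting the bookkeeping right: that permanent-root swaps are inert, that the masses here are exactly suffix sums regardless of the light/heavy classification, and that node-disjointness of the swapped pairs is precisely what makes the per-swap bound telescope.
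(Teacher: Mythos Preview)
Your proof is correct and follows essentially the same approach as the paper's: dispose of swaps involving a permanent root, observe masses are suffix sums of sizes at this instant, bound the rank increase of the root that moves left by $2(\lg m_{a_\ell-1}-\lg m_{a_\ell})$ via the $2$-Lipschitz property of $\lg$, and telescope using node-disjointness of the swapped pairs. Your write-up is a bit more explicit about the indexing and the telescoping than the paper's, but the argument is the same.
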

\begin{proof}
Let variables take their values before any swaps.  Let $v$ and $w$ be two adjacent roots to be swapped, with $v$ left of $w$.  Since no links have been done, the extended size of each temporary root is its size, and the extended size of each permanent root is $0$.  Hence if either $v$ or $w$ is permanent, the swap has no effect on ranks.  If $v$ and $w$ are both temporary, the swap of $v$ and $w$ increases the mass of $w$ to that of $v$ and decreases the mass of $v$.  We ignore the decrease, which does not increase the rank of $v$.  The increase in the rank of $w$ caused by the swap is $\lg\lg(v.m/w.s)-\lg\lg(w.m/w.s) \leq 2\lg(v.m/w.s) - 2\lg(w.m/w.s)= 2\lg v.m - 2\lg w.m$.  If $v'$, $w'$ is another pair of temporary nodes to be swapped that are right of pair $v$ and $w$, then $\lg w.m > \lg v'.m$.  If follows that the sum of the increases in ranks caused by all the swaps is bounded above by a telescoping sum whose total is at most $2\lg n$.  
\end{proof}

Phantom decrease-key links and phantom insertion links change no node ranks. The next two lemmas are the analogues for slim heaps of Lemma~\ref{L:multipass-real-decrease-key-links} and Lemma~\ref{L:multipass-real-insertion-links}. 

\begin{lemma}\label{L:slim-real-decrease-key-links}
In a slim heap, a real decrease-key link $vw$ increases the node rank only of $w$, by at most $\lg\lg n$. 
\end{lemma}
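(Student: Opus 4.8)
The plan is to follow the proof of Lemma~\ref{L:multipass-real-decrease-key-links} essentially line for line, since the slim-heap definitions of size and mass agree with the multipass ones except for roots that are in the middle of a delete-min, and a decrease-key link is never done in the middle of a delete-min.

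First I would recall the canonical implementation of $\Decreasekey(H,e,k)$: after changing the key of $e$, if $e$ is not already the root of $H$ it is cut from its parent and then linked with the root of $H$. Hence the decrease-key link $vw$ is a link between two roots, one of which is $e$ (just after its cut) and the other the former root of $H$; by hypothesis both endpoints are temporary, and just before the link neither is in the middle of a delete-min, so $v.m = v.s$, $w.m = w.s$, and both have rank $0$ just before the link. The cut that the decrease-key performs severs the link from $e$'s former parent to $e$; since this link is cut by a decrease-key it is a key link and therefore phantom, and a cut of a phantom link changes no sizes (the loser of a phantom link lies outside its parent's real subtree, and so do all of that loser's descendants) and no masses (the mass of a real child is fixed at link time and the mass of a root equals its size), hence changes no node rank.

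It remains to analyze the link $vw$ itself, with $v$ the winner and $w$ the loser. It increases $v.s$ by $w.s$ and changes no other size; since $v$ is a root not in the middle of a delete-min both before and after the link, $v.m = v.s$ throughout, so $v.r = \lg\lg 1 = 0$ is unchanged. Because $vw$ is real, $w$ becomes a real child of $v$, so just after the link $w.m$ equals the size of $v$ just after the link, namely $v.s + w.s$ with $v.s$ evaluated just before; no step-up occurs, since the step-ups and the anomalous-link mass adjustment defined in Section~\ref{S:slim-ranks} are triggered only during a delete-min. Since the real subtrees of $v$ and $w$ are disjoint sets of temporary nodes of $H$, we have $v.s + w.s \le n$, so after the link $w.m/w.s \le n/w.s \le n$ and therefore $w.r = \lg\lg(w.m/w.s) \le \lg\lg n$. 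Thus the link increases the rank only of $w$, by at most $\lg\lg n - 0 = \lg\lg n$, which is the claim.

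I do not expect a genuine obstacle. The only two points that need care are (i) verifying that none of the slim-heap-specific mass exceptions apply during a decrease-key, which is immediate because each of them is defined to occur inside a delete-min, and (ii) confirming that the cut the decrease-key performs is a cut of a phantom (key) link, so that it leaves all sizes, masses, and hence ranks unchanged.
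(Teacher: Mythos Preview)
Your proposal is correct and follows essentially the same approach as the paper: the paper's proof is simply ``The proof of Lemma~\ref{L:multipass-real-decrease-key-links} with $b=0$ is valid for slim heaps as well,'' and you have spelled out exactly that argument, together with the (immediate) verification that the slim-heap-specific mass exceptions do not apply outside a delete-min.
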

\begin{proof} The proof of Lemma~\ref{L:multipass-real-decrease-key-links} with $b=0$ is valid for slim heaps as well.  
\end{proof}

\begin{lemma}\label{L:slim-real-insertion-links}
In any sequence of slim heap operations starting with empty heaps, the sum of node rank increases caused by real insertion links is at most $\lg\lg n$ per decrease-key plus at most $\lg\lg n + 3$ per delete-min.
\end{lemma}
\begin{proof}
The proof of Lemma~\ref{L:multipass-real-insertion-links} with $b=0$ is valid for slim heaps as well.
\end{proof}

Now we come to the biggest challenge, bounding the increase in node ranks caused by deletion links and the associated step-ups.  Because of the introduction of extended sizes, phantom links as well as real links can cause such increases.

\begin{lemma}\label{L:slim-phantom-link-winners}
In any sequence of slim heap operations starting with empty heaps, the sum of increases in node ranks caused by phantom deletion links is at most $\lg\lg n$ per decrease-key plus at most $3$ per delete-min.
\end{lemma}

\begin{proof}
Let $vw$ be a phantom deletion link.  The link increases the extended size of $v$ by that of $w$, which leaves the masses of all roots other than $v$ and $w$ unchanged, and leaves their ranks unchanged as well.  If  $w$ is temporary, it has  rank $0$ after the link, so the link does not increase the rank of $w$.  Suppose $v$ is temporary.  The link does not change the size of $v$.  If $vw$ is a right link, it does not increase the mass of $v$, so it does not increase the rank of $v$.

If $vw$ is a left link, it can increase the mass of $v$ only if $v$ is heavy, and then only if $w$ has positive extended size just before the link.  For $w$ to have positive extended size, it must have a temporary descendant in the treap $\cT$ formed by the delete-min that does $vw$.  Let $y$ be the nearest temporary descendant of $w$ in $\cT$.  The link $xy$ in $\cT$ lost by $y$ is a k-link: If $y=w$, $xy=vw$ is a k-link because it is phantom and both $v$ and $w$ are temporary; if $y \neq w$, $xy$ is a k-link because $x$ is permanent and $y$ is temporary.  The increase in the rank of $v$ caused by $vw$ is at most $\lg\lg n$, where $n$ is the size of the heap containing $v$ and $w$ when $vw$ is done.  Link $vw$ is uniquely determined by $xy$, since $w$ is the nearest temporary ancestor of $y$ in $\cT$.  We charge the increase in the rank of $v$ to $xy$ and apply Lemma~\ref{L:k-link-shift}.
\end{proof}

\begin{lemma}\label{L:slim-real-link-winners}
In a slim heap, a real deletion link $vw$ increases no node rank (although its associated step-up might).
\end{lemma}
\begin{proof}
Let $vw$ be a real deletion link.  The link increases the extended size of $v$ by that of $w$, which leaves the masses of all roots other than $v$ and $w$ unchanged, and leaves their ranks unchanged as well.

If $vw$ is a real right link, $v$ and $w$ are both heavy or both light.  In either case the link increases the size of $v$ but not its mass, which cannot increase its rank.  The step-up just before the link can increase the mass and hence the rank of $w$, but we consider step-ups separately.  The link itself does not increase the rank of $w$, since the mass of $w$ after the link is at most the mass of $v$ after the link, and the link does not change the size of $w$.

Suppose $vw$ is a real left link.  Then $w$ is light.  Whether $vw$ is anomalous or not, the link does not change the size of $w$, and it decreases the mass of $w$ to the size of $v$, resulting in no increase in the rank of $w$.  Let variables take their values just before the link.  If $v$ is light, the link increases both the size and mass of $v$ by $w.s$.  This does not increase the rank of $v$.  If $v$ is heavy but $vw$ is not anomalous, $w.s=w.\bar{s}$, so again the link increases the size and mass of $v$ by $w.s$, resulting in no increase in the rank of $v$.  Finally, if $vw$ is anomalous, the rank of $v$ just after the link is $0$ by exception (ii).
\end{proof}

It remains to bound the effect of step-ups.

\begin{lemma}\label{L:slim-anomalous-step-ups}
In any sequence of slim heap operations starting with empty heaps, the sum of increases in node ranks caused by step-ups during anomalous links is at most $\lg\lg n$ per decrease-key plus at most $3$ per delete-min.
\end{lemma}

\begin{proof}
The proof is just like the proof of Lemma~\ref{L:slim-phantom-link-winners}.
Let $vw$ be an anomalous link.  We charge the step-up in the rank of $v$ to a k-link done during the delete-min that does $vw$.  Let $\cT$ be the treap formed by this delete-min.  Since $vw$ is anomalous, there is a k-link in the subtree of $\cT$ rooted at $w$.  Choose such a link $xy$ such that there is no other such link on the path in $\cT$ from $x$ to $w$.  The path in $\cT$ from $x$ to $w$ cannot contain an anomalous link, since the winner of such a link would be on a path of real right links of heavy nodes whose top node lost a k-link, which would also be on the path, contradicting the choice of $xy$.  It follows that $vw$ is the anomalous link nearest $x$ on the path in $\cT$ from $x$ to the root.  Hence $vw$ is uniquely determined by $xy$.  We charge the step-up in the rank of $v$ to $xy$ and apply Lemma~\ref{L:k-link-shift}.
\end{proof}

\begin{lemma}\label{L:slim-heavy-step-ups}
During a single delete-min in a slim heap, the sum of increases in node ranks caused by step-ups before real right links lost by heavy nodes is at most $2\lg n$.
\end{lemma}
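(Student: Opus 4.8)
Fix the delete-min, let $\cT$ be the treap it forms, and for each root $v$ present during the delete-min write $\mu_v$ for $v.\bar{s}$ plus the sum of the extended sizes of the roots to the right of $v$ on the root list; for a heavy root this is exactly its mass (before any step-up), and since the extended sizes of distinct roots count disjoint sets of temporary nodes we always have $\mu_v\le n$. Three simple observations drive the proof. (i) Along any maximal path of real right links $y_0,\dots,y_m$ of heavy nodes, with $y_{j}$ the parent of $y_{j+1}$ in $\cT$, the links are performed in the order $y_{m-1}y_m,\dots,y_0y_1$ (bottom up in $\cT$), because $y_j$ must still be a root when it wins $y_jy_{j+1}$. (ii) While $v$ is a root, $\mu_v$ is nondecreasing: winning a right link leaves $\mu_v$ unchanged, winning a left link can only increase it, and links among the roots to the right of $v$ merely redistribute extended size among them; moreover for a heavy root $\mu_v$ equals $v.\bar{s}$ plus $\mu_{v'}$, where $v'$ is the right neighbor of $v$. (iii) When $y_{j+1}$ steps up just before losing the real right link $y_jy_{j+1}$ at time $\tau$, its mass rises from $\mu_{y_{j+1}}(\tau)$ to the mass of $y_j$, which is $\mu_{y_j}(\tau)=y_j.\bar{s}(\tau)+\mu_{y_{j+1}}(\tau)$; hence the step-up raises $y_{j+1}.r$ by at most $\lg\lg(\mu_{y_j}(\tau)/y_{j+1}.s)-\lg\lg(\mu_{y_{j+1}}(\tau)/y_{j+1}.s)\le 2\lg\mu_{y_j}(\tau)-2\lg\mu_{y_{j+1}}(\tau)$, using the inequality $\lg\lg a-\lg\lg b\le 2\lg a-2\lg b$ for $a\ge b\ge 1$ (the same inequality used in the proof of Lemma~\ref{L:swapping}, valid since the derivative of $2\lg$ dominates that of $\lg\lg$).

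Summing the bound in (iii) over the links of a single heavy path and using (i) and (ii) to compare $\mu_{y_{j+1}}$ at the two times $\tau_{j+1}<\tau_j$ at which it appears with opposite signs, the sum telescopes to at most $2\lg\mu_{y_0}(\tau_0)-2\lg\mu_{y_m}(\tau_{m-1})\le 2\lg\mu_{y_0}(\tau_0)\le 2\lg n$. I would then extend the telescoping across heavy paths: if the bottom $y_m$ of a heavy path has a (necessarily phantom) right child $c$, then $c$ is the top of another heavy path, $y_m$ absorbs $c$ before $y_m$ loses its own link, and by (ii) together with the fact that a heavy root's $\mu$ dominates that of its right neighbor we get $\mu_{y_m}(\tau_{m-1})\ge\mu_c$ at every moment after that absorption. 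So the ``subtracted'' term of the upper path dominates the ``added'' term $2\lg\mu_c$ of the lower path, and every maximal chain of heavy paths linked in this way telescopes down to $2\lg$ of the mass of the chain's top node at the time of that node's first link, again at most $2\lg n$.

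The remaining difficulty, which I expect to be the main obstacle, is aggregating over all the heavy chains of a single delete-min: the chain whose top is the root of $\cT$ lies on the right spine and costs $2\lg n$, but a chain whose top is the loser of a phantom \emph{left} link (or of a phantom right link won by a light or permanent node) begins elsewhere in $\cT$, and summing a separate $2\lg n$ for each would be too weak. I would resolve this by making the telescoping global rather than per-chain: process \emph{all} the heavy step-ups of the delete-min together, charging each one, for a heavy node $w$ losing the real right link $vw$ with $v$ its left neighbor, against $2\lg\mu_v-2\lg\mu_w$ evaluated at the step-up time, and then show that this entire collection of differences telescopes against $2\lg$ of the total extended size, which is invariant throughout the delete-min and at most $n$. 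The crux is a non-crossing property of extended size: because nothing is ever absorbed ``across'' a surviving root, no unit of extended size is charged on both sides of such a difference, even across the phantom left links where a root's $\mu$-value genuinely grows. Verifying this carefully, using the conservation and monotonicity of extended sizes established for slim heaps, is the technical heart of the lemma.
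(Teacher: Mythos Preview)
You have missed the paper's key simplification, which turns what you call ``the main obstacle'' into a non-issue. The paper invokes the link-order assumption of Section~\ref{S:link-order}: if a root wins both a left and a right link, the right link is done first. Since each root wins at most one right link (Lemma~\ref{L:s-treap}), this means that when $v$ wins the real right link $vw$, it is the \emph{first} link $v$ participates in during the delete-min. Hence $v.m$ at that moment equals $v$'s mass \emph{just after the swaps}. Moreover, since $v$ has not moved, every node initially right of $v$ is now $w$, a descendant of $w$, or a root right of $w$; by conservation of extended size, $w$'s mass just before the step-up equals the initial mass of $u$, the nearest temporary root right of $v$ on the initial list. The step-up therefore contributes at most $2\lg v.m - 2\lg u.m$ with both quantities evaluated on the \emph{initial} root list. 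Ordering the winners $v$ left to right on that list gives $u_i.m \ge v_{i+1}.m$ (each $u_i$ is the nearest temporary root right of $v_i$, and $v_{i+1}$ is a temporary root right of $v_i$), so the whole sum telescopes globally to at most $2\lg n$ in one line.

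Your dynamic approach---tracking $\mu_v$ through time, telescoping along heavy paths, and then trying to glue chains across phantom links---is not wrong in spirit, but it is far more intricate than necessary, and you yourself leave the hard step (aggregating across chains whose tops lose phantom \emph{left} links) as a promissory ``non-crossing property'' that you do not verify. That gap is genuine in your framework. The paper avoids it entirely by freezing all masses to their values at the start of the delete-min, which is exactly what the link-order assumption buys.
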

\begin{proof}
Let $vw$ be a real right link between heavy nodes during a delete-min.  By the assumption on linking order (Section~\ref{S:link-order}), $v$ has not won or lost a link since it became a root at the beginning of the delete-min, so its mass and size are unchanged since the root swaps at the beginning of the delete-min (if any).  Let $u$ be the nearest temporary root right of $v$ just after the swaps.  Node $u$ is $w$ or left of $w$ just after the swaps.  Node $w$ is heavy, so its mass just before the link $vw$ is the mass of $u$ just after the swaps, since all temporary nodes between $u$ and $w$ became descendants of $w$ before link $vw$. The step-up in the mass of $w$ thus increases its mass from $u.m$ to $v.m$, where these variables take their values just after the swaps.  This increases the rank of $w$ by $\lg\lg(v.m/w.s)- \lg\lg(u.m/w.s) \leq 2\lg(v.m)-2\lg(u.m)$, where $w.s$ is the size of $w$ just before the link.  Summing this bound over all real right links between heavy nodes done during the delete-min, the sum is bounded above by a telescoping sum that totals at most $2\lg n$.   
\end{proof}

We shall obtain the same bound as in Lemma~\ref{L:slim-heavy-step-ups} for increases in node ranks caused by step-ups before right links between light nodes, but we need a more complicated argument.  Let $P$ be a maximal path of real right links of light nodes in $\cT$.  Let $v$ be the topmost node on this path.  Since $v$ is light, it is a real child in $\cT$.  Let $uv$ be its link to its parent.  This link is a left link, since otherwise $P$ would not be maximal.  We shall charge the rank increases of the nodes on $P$ to the future delete-min that deletes $u$.

\begin{lemma}
\label{L:slim-light-right-links-some}
Let $P$ be a maximal path of real right links of light nodes in the treap $\cT$ formed by some delete-min in a slim heap.  Let $v$ be the top node on $P$, and let $uv$ be the real left link lost by $v$ during the delete-min.  Then the sum of the increases in node ranks caused by step-ups before links on $P$ is at most $2\lg u.s'-2\lg u.s$, where $u.s$ and $u.s'$ are the sizes of $u$ just before and just after $uv$, respectively.
\end{lemma}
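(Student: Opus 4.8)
The plan is to enumerate $P$ from the top down as $v=v_0,v_1,\dots,v_k$, so that each link on $P$ is a real right link $v_iv_{i+1}$ won by $v_i$, and the step-up preceding it raises the mass of its loser $v_{i+1}$.  First I would fix the timing of events: by the link-order convention of Section~\ref{S:link-order} and the mechanics of leftmost locally maximum linking, the links on $P$ are performed from the bottom up --- $v_{k-1}v_k$ first, $v_0v_1$ last --- and $uv$ only after all of them.  Consequently, just before link $v_iv_{i+1}$ the roots $v_i$ and $v_{i+1}$ are adjacent, and the real subtrees of $v_{i+1},\dots,v_k$ are already complete, so $v_{i+1}.s,\dots,v_k.s$ have reached their final values.

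Next I would analyze a single step-up.  Just before link $v_iv_{i+1}$, the light root $v_{i+1}$ has mass $v_{i+1}.s+S_{i+1}$, where $S_{i+1}$ denotes the total extended size of the roots lying to the right of $v_{i+1}$ at that instant (no earlier exception can have touched the mass of $v_{i+1}$, since $v_{i+1}$ is light).  The step-up raises this mass to the mass of $v_i$ just after the link, which for the light root $v_i$ is $a_i+v_{i+1}.s+S_{i+1}$, where $a_i\ge1$ is the size of $v_i$ just before the link and the term $S_{i+1}$ reappears because the roots formerly right of $v_{i+1}$ are now right of $v_i$.  As the size of $v_{i+1}$ is unchanged, the resulting increase in the rank of $v_{i+1}$ is
\[
\lg\lg\!\frac{a_i+v_{i+1}.s+S_{i+1}}{v_{i+1}.s}-\lg\lg\!\frac{v_{i+1}.s+S_{i+1}}{v_{i+1}.s}\ \le\ 2\lg\bigl(a_i+v_{i+1}.s+S_{i+1}\bigr)-2\lg\bigl(v_{i+1}.s+S_{i+1}\bigr),
\]
using the inequality $\lg\lg(x/c)-\lg\lg(y/c)\le 2\lg x-2\lg y$ already used in the proof of Lemma~\ref{L:swapping}.

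The heart of the proof is a telescoping sum, and the structural fact that makes it work is that $S_{i+1}$ does not depend on $i$.  From the moment $v_i$ wins its link on $P$ until $uv$ is done, $v_i$ stays a root, and while it is a root no link can move extended size past its position on the root list: a root to the right of $v_i$ and a root at or to the left of $v_i$ can never become adjacent, and $v_i$ never links with its right neighbor after winning $v_iv_{i+1}$.  Hence the total extended size of the roots right of $v_i$ is constant over this stretch, and one checks that it equals the analogous quantity for $v_{i-1}$; write $S$ for this common value.  Substituting $S_{i+1}=S$ and, for $i\ge1$, using $a_i+v_{i+1}.s\le v_i.s$ to bound $2\lg(a_i+v_{i+1}.s+S)$ by $2\lg(v_i.s+S)$, the sum over $i=0,\dots,k-1$ telescopes to at most $2\lg(a_0+v_1.s+S)-2\lg(v_k.s+S)$.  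Finally I would anchor the endpoints at $u$: $a_0+v_1.s$ is the size of $v_0$ just after it absorbs $v_1$, hence at most its final size $v_0.s=u.s'-u.s$; and $S$ is the total extended size of the roots right of $v_0$ at the instant $uv$ is done, one of which is $u$, the right neighbor of $v_0$, so $S$ is at least the extended size of $u$, which is at least $u.s$.  The inequality $(u.s'-u.s)(S-u.s)\ge0$ then gives $(a_0+v_1.s+S)\,u.s\le u.s'\,S\le u.s'\,(v_k.s+S)$, and taking base-$2$ logarithms of both sides yields $2\lg(a_0+v_1.s+S)-2\lg(v_k.s+S)\le 2\lg u.s'-2\lg u.s$, which is the claimed bound.

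The step I expect to be the main obstacle is the constancy of $S$.  Proving it cleanly requires the exact order in which the links on $P$ and $uv$ occur, together with a careful case analysis of the ways a link during a delete-min can arise, to confirm that extended size --- which, unlike true size, can be inflated by phantom links whose effect is counted within the current delete-min --- is only shuffled around inside the region to the right of a path node and never leaks across its boundary while that node remains a root.  Once $S$ is pinned down, the remaining steps, including the product inequality $(u.s'-u.s)(S-u.s)\ge0$ that anchors the telescope at $u$, are routine.
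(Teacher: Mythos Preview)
Your proof is correct and follows the same telescoping idea as the paper.  The only difference is where the ``mass to the right'' is eliminated.  The paper, after observing that $u$ is the right neighbor of the loser $y$ just before every link $xy$ on $P$, immediately replaces $M$ by the constant $u.s$ via the monotonicity
\[
2\lg\frac{x.s+y.s+M}{y.s+M}\ \le\ 2\lg\frac{x.s+y.s+u.s}{y.s+u.s}\qquad(\text{since }u.s\le M),
\]
telescopes directly in terms of $u.s$, and reads off the bound from $v.s'+u.s=u.s'$ and $w.s+u.s\ge u.s$.  You instead prove that $S$ is constant, telescope with $S$, and only at the end convert to $u.s,u.s'$ using the product inequality $(u.s'-u.s)(S-u.s)\ge0$.

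Your ``main obstacle'' is a real point, not wasted effort: the paper's substitution takes $u.s$ at the (later) moment just before $uv$, and since $u$ may in principle win its real right link between a link on $P$ and $uv$, the inequality $u.s\le M$ at the earlier moment is not immediate from $u$ merely being a root to the right of $y$ then; it needs exactly your separator observation (equivalently, that $M$ equals the fixed total size of the original temporary roots lying to the right of $P$).  The paper leaves this implicit.  Its route is slightly shorter because it avoids the final product inequality, but the two arguments are equivalent.
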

\begin{proof}
Let $xy$ be a link on $P$.  Just before the link, $u$ is the next root right of $y$ on the root list.  Let $x.s$ and $y.s$ be the sizes of $x$ and $y$ just before the link, respectively, and let $M = y.m - y.s$. That is, $M$ is the sum of the extended sizes of $u$ and all the roots to its right just before the link.  The step-up before the link increases the rank of $y$ by $\lg\lg((x.s+y.s+M)/y.s)-\lg\lg((y.s+M)/y.s) \leq 2\lg((x.s+y.s+M)/(y.s+M))$.  Since $u.s \leq M$, the step-up in the rank of $y$ increases it by at most $2\lg((x.s+y.s+u.s)/(y.s+u.s)) = 2\lg(x.s+y.s+u.s)-2\lg(y.s+u.s)$. This bound telescopes when summed over the links on $P$. Let $w$ be the bottom node on $P$.  Node sizes do not decrease.  It follows that the sum of the bounds on the increases in node ranks over all links on $P$ is at most $2\lg(v.s'+u.s)-2\lg(w.s+u.s) = 2\lg u.s'-2\lg(w.s+u.s) \leq 2\lg u.s'-2\lg u.s$ , where $v.s'$ is the size of $v$ just after link $uv$.
\end{proof}

\begin{lemma}\label{L:slim-light-right-links-all}
In slim heaps, the sum of node rank increases caused by real right links of light nodes is at most $2\lg n$ per delete-min.
\end{lemma}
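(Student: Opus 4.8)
The plan is to sum the per-path bound of Lemma~\ref{L:slim-light-right-links-some} over all maximal paths of real right links of light nodes, using the charging scheme already described to route each path's contribution to a future delete-min, and then to bound the total charged to any one delete-min by a telescoping argument. First, recall from Lemma~\ref{L:slim-real-link-winners} that a real right link of light nodes increases node ranks only through its associated step-up, so the quantity to bound is exactly the sum, over all delete-mins $D$ and all maximal paths $P$ of real right links of light nodes in the treap $\cT$ of $D$, of the node-rank increases caused by the step-ups before the links on $P$. For each such $P$, the top node $v$ is light, hence is not the root of $\cT$ and is a real child, so $v$ loses a real link; since $P$ is maximal this link is not a real right link, so it is a real left link $uv$ done during $D$. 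Because $uv$ is real, its winner $u$ is temporary and $uv$ is cut precisely by the (later) delete-min $D^\ast$ that deletes $u$; I would charge the bound $2\lg u.s' - 2\lg u.s$ of Lemma~\ref{L:slim-light-right-links-some} for $P$ to this $D^\ast$, where $u.s$ and $u.s'$ are the sizes of $u$ just before and just after $uv$ in $D$. It then remains to show that each delete-min is charged at most $2\lg n$, where $n$ is its size; summing over all delete-mins gives the lemma.

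To bound the charge to a fixed delete-min $D^\ast$, let $u$ be the node it deletes. A path $P$ is charged to $D^\ast$ exactly when the left link lost by its top node is won by $u$; by Lemma~\ref{L:s-treap}, $u$ wins at most one left link per delete-min, so at most one path is charged to $D^\ast$ from each delete-min. Let $D^{(1)} \prec \cdots \prec D^{(r)}$ be the delete-mins that charge $D^\ast$ (all necessarily before $D^\ast$), with $u$ winning the real left link $u v_j$ during $D^{(j)}$, and let $\sigma_j < \tau_j$ be the sizes of $u$ just before and just after $u v_j$ in $D^{(j)}$. The crucial monotonicity observation is that the size of a temporary node never decreases while the node is alive: a real link is, by definition, cut only by a delete-min (never by a decrease-key), so once $u$ wins the real link $u v_j$ the node $v_j$ together with its entire real subtree remains part of $u$'s real subtree until $u$ is deleted. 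Hence $\tau_j \le \sigma_{j+1}$ for $1 \le j < r$, while $\tau_r \le n$ since the $\tau_r$ nodes of $u$'s real subtree are temporary nodes of the heap undergoing $D^\ast$, and $\sigma_1 \ge 1$. Therefore the total charged to $D^\ast$ is at most
\[
\sum_{j=1}^{r} \bigl( 2\lg\tau_j - 2\lg\sigma_j \bigr)
= 2\lg\tau_r - 2\lg\sigma_1 - \sum_{j=1}^{r-1} \bigl( 2\lg\sigma_{j+1} - 2\lg\tau_j \bigr)
\le 2\lg\tau_r \le 2\lg n ,
\]
using $\sigma_{j+1} \ge \tau_j$ and $\sigma_1 \ge 1$.

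The part of this argument I expect to require the most care is the monotonicity-and-persistence bookkeeping: one must verify that no decrease-key can remove a node from a real subtree (immediate from the definition of ``real'', since a link cut by a decrease-key is a key link and hence phantom), that this genuinely forces $\tau_j \le \sigma_{j+1}$ even though $u$ may acquire and shed arbitrary other structure between $D^{(j)}$ and $D^{(j+1)}$, and that the nodes counted in $\tau_r$ really lie among the $n$ temporary nodes present at $D^\ast$. Everything else is the telescoping computation above together with the already-established Lemma~\ref{L:slim-light-right-links-some} and Lemma~\ref{L:slim-real-link-winners}.
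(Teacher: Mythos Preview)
Your proof is correct and follows essentially the same approach as the paper: charge the bound of Lemma~\ref{L:slim-light-right-links-some} for each maximal light path to the future delete-min that deletes the winner $u$ of the real left link at its top, then telescope using the fact that $u$'s size is nondecreasing between consecutive real left links it wins. Your version is a bit more explicit about why the size of $u$ can only grow (real links are never cut by decrease-keys, so nothing leaves $u$'s real subtree until $u$ is deleted), which the paper leaves implicit.
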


\begin{proof}
We sum the bound in Lemma~\ref{L:slim-light-right-links-some} over all real left links cut by a single delete-min.  Consider a delete-min that deletes node $u$.  Let $v_0, v_1,\dots, v_k$ be the children of $u$ in left-to-right order whose real left links $uv_i$ are cut by the deletion of $u$.  The sum of the bound in Lemma~\ref{L:slim-light-right-links-some} over the links $uv_i$ is the sum of $2\lg u.s_i' - 2\lg u.s_i$, where $u.s_i$ and $u.s_i'$ are the sizes of $u$ just before and just after $uv_i$, respectively.  Since $u.s_i \geq u.s_{i+1}'$, the sum is bounded above by a telescoping sum that sums to at most $2\lg n$.
\end{proof}

The results of this section combine to prove Theorem~\ref{T:slim-node-rank-changes}: In any sequence of slim heap operations starting with no heaps, the sum of the magnitudes of all node-rank changes is $\OO(\lg\lg n)$ per decrease-key plus $\OO(\lg n)$ per delete-min.

\section{Analysis of smooth heaps}
\label{S:smooth-analysis}

In this section we adapt the analysis of Section~\ref{S:slim-analysis} to smooth heaps.  The results of Sections~\ref{S:slim-smooth-properties} and~\ref{S:treap} hold for both slim heaps and smooth heaps.  Recall that the only difference between slim and smooth heaps is the position of the loser of a link on the list of children of a winner: In a slim heap, the loser becomes the new leftmost child of the winner, but in a smooth heap the winner of a deletion link becomes the new leftmost or rightmost child of the winner if the link is a left or a right link, respectively.  In both slim and smooth heaps the winner of an insertion or decrease-key link becomes the new leftmost child of the winner.  We call a child a \emph{left} or \emph{right child} if when it is linked to its parent it becomes the new leftmost or rightmost child of its parent, respectively.  The left children of a node are in decreasing order (latest to earliest) by link time, left to right on their list of siblings.  The right children are in increasing order by link time.  We say a left (right, respectively) child is \emph{interior} to all its leftward (rightward) siblings.

When a root is deleted in a delete-min, its list of children becomes a list of roots.  We call such a root a \emph{left} or \emph{right} root if it was a left or right child of its parent, respectively.  A left (right, respectively) root is \emph{interior} to all roots to its left (right) in the list of roots.  In the treap $\cT$ formed by a delete-min, the \emph{central boundary} is the boundary whose left side contains the left roots and whose right side contains the right roots.  The links crossing the central boundary are exactly those between a left root and a right root.  We call these links \emph{crossing links}.  We call a link between two left-side or two right-side roots a \emph{left-side} or \emph{right-side} link, respectively.  Let $x$ be the largest left-side root in symmetric order in $\cT$ and let $y$ be the smallest right-side root in symmetric order in $\cT$.  By Lemma~\ref{L:crossing-links-path}, the crossing links are on a single path in the treap from the root to $x$, if $x$ does not win a crossing link in $\cT$, or to $y$, if $x$ does win a crossing link in $\cT$.  We call this path the \emph{central path}. If there are no crossing links at all, the central path contains only the root of $\cT$.  In this case either all the links in $\cT$ are left-side links, or they are all right-side links.
We call a link \emph{central} if it is on the central path, \emph{non-central} otherwise.  Each central link won by a left root is a right link; each central link won by a right root is a left link.

In our analysis, we shall assume in a delete-min that all the non-central links are done before all the central links, and that every left (right, respectively) root that wins two non-central links wins the right (left) link first. These assumptions are without loss of generality, since in a smooth heap any bottom-up linking order during a delete-min that produces the same treap produces the same order of children of each node.  (This is \emph{not} true in a slim heap, since if a node wins both a left and a right link the order of the losers on the list of children of the winner depends on which link occurs first.)

We break the time during which links are done in a delete-min into two phases: the \emph{non-central phase}, during which the non-central links are done, and the \emph{central phase}, during which the central links are done.  We define the masses of roots differently during the two phases.

The links done during the non-central phase build all of the treap except the central path.  The part of the treap built during this phase consists of one subtree rooted at each vertex on the central path.  The nodes in such a subtree are either all left roots or all right roots.  If the root of a subtree is a left root, the subtree consists either of just this root or of a left link won by this root, plus the subtree of the treap rooted at the loser of this left link.  The situation is symmetric for subtrees rooted at right roots.   

During the non-central phase, we partition the subtrees of left (right, respectively) roots built by this phase into maximal paths of real right (left) links.  Each node on such a path is \emph{heavy} if its top node is a phantom child or is on the central path, \emph{light} otherwise.  A real left left-side (right right-side, respectively) link is \emph{anomalous} if its winner is heavy and the subtree of the treap rooted at its loser contains a k-link.  (Such links are non-central.)   See Figure~\ref{F:smooth-linking-example}.

\begin{figure}
\centering
\includegraphics[width=4.8in]{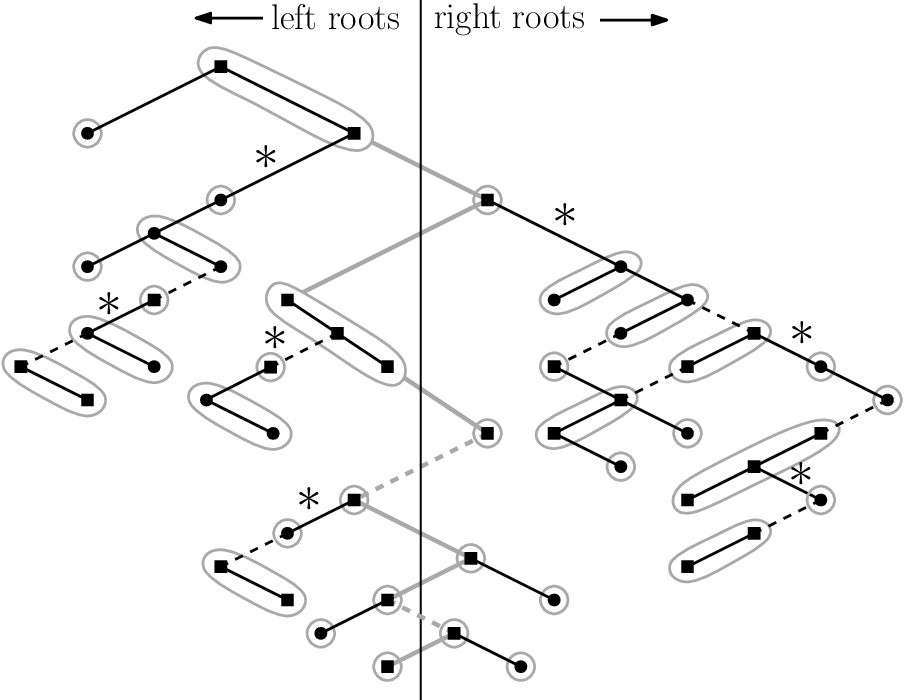}
\caption{An example of the treap formed by a delete-min in a smooth heap.  Squares denote heavy nodes, circles light nodes.  Asterisks indicate anomalous links.  Solid links are real, dashed links phantom.  Crossing links are gray, left-side and right-side links black.  Each maximal path of real right left-side links or real left right-side links is circled.  During the non-central phase, all nodes on the central path are heavy.}
\label{F:smooth-linking-example}
\end{figure}

We apply the methods and results of Section~\ref{S:slim-analysis} separately to non-central left-side links and to non-central right-side links.  We define sizes and extended sizes exactly as for slim heaps: See Section~\ref{S:slim-ranks}.  We define masses of left children with respect to left children only, those of left roots with respect to left roots only, and symmetrically for right children and right roots.  Specifically, the mass of a temporary child is one plus the sum of the sizes of it and its interior siblings.

We define masses of roots similarly. The mass of a temporary root not in the middle of a delete-min is its size.  During the non-central linking phase of a delete-min, the mass of a light temporary root is its size plus the sum of the extended sizes of the roots interior to it on the root list.  The mass of a heavy temporary root is its extended size plus the sum of the extended sizes of the roots interior to it on the root list. There are two exceptions to this definition of root mass, which are the same as in slim heaps.
\begin{itemize}
\setlength\itemsep{1em}
\item[(i)] Just before a temporary node $w$ loses a real right (left, respectively) link $vw$ to a left (right) root $v$, its mass becomes $v.m'$, the mass of $v$ just after the link.  After the link, the mass of $w$ becomes the value defined above for a real child.
\item[(ii)] Just after a temporary node $v$ wins an anomalous link, its mass temporarily decreases to its size, then increases to the value defined above for a heavy root. 
\end{itemize}

As in slim heaps, we call the increases in mass and rank in exceptions (i) and (ii) \emph{step-ups}, and the decreases in mass and rank in exception (ii) \emph{step-downs}.  

Our main challenge in the analysis of smooth heaps is to bound the changes in node ranks caused by central links.  The problem is that a crossing link in effect moves the size of the loser from one side of the central boundary to the other, and we need to bound the effect of such back-and-forth movements.  This requires some care, and some careful definitions.

At the beginning of the central linking phase, the only roots remaining are those on the central path. During this phase, we define the masses of roots based on the \emph{real subpaths} of the central path.  These are the maximal subpaths of the central path formed by the real central links. Every permanent node on the central path constitutes its own real subpath, since it does not win or lose a real link. Given a real subpath $P$ with bottommost node $v$, let $M_L$ ($M_R$, respectively) be the sum of the sizes of all left (right) roots on the central path that are proper descendants in $\cT$ of $v$, where the sizes are evaluated at the start of the central linking phase.
We say $P$ is \emph{left-light} if $M_L \leq M_R$, otherwise it is \emph{right-light}.  
Let $M = \min\{M_L, M_R\}$ and $\barM = \max\{M_L, M_R\}$.  Let $P$ be a left-light path, and let $v$ be the topmost left node on $P$, if any.  We call $v$ and each node below $v$ on $P$ \emph{inner}.  We call each node above $v$ on $P$ \emph{outer}.  Each outer node on a left-light path is a right root.  The symmetric definitions apply to right-light paths.  

A path $P$ may or may not contain some outer nodes.  If it does, these nodes are all at the top of the path.  If $P$ is left-light, its outer nodes form a subpath of right links of right roots, and symmetrically if $P$ is right-light.  We define masses slightly differently for inner and outer nodes.

At the start of the central linking phase, we redefine the masses of the nodes on the central path, which are the remaining roots.  Let $v$ be a node on a maximal real subpath $P$ of the central path.  Let $M$ and $\barM$ be as defined above.  If $v$ is inner (outer, respectively), its mass becomes $M$ ($\barM$) plus its size plus the sum of the sizes of all roots on $P$ that are interior to $v$.  When $v$ wins a link during the central linking phase, its mass becomes $M$ ($\barM$) plus its new size.  (Just after $v$ wins such a link, no remaining root is interior to it.)    

Exception (i) applies to all real central links: Just before losing a real central link $vw$, the mass of $w$ steps up to $v.m'$, the mass of the winner after the link.  No central link is anomalous, so exception (ii) does not apply.

We define node ranks and link ranks for real deletion links exactly as in slim heaps.  In particular, if $vw$ is a real right left-side link, a real left right-side link, or a central link, $w.r_v$ is the rank of $w$ after its step-up just before the link, and if $vw$ is an anomalous link, $v.r_w'$ is the rank of $v$ after its step-down just after the link, but before its step-up.

Now we check the various results for slim heaps to determine how they apply to smooth heaps.  We begin by verifying that link ranks are non-negative.

The proof of Lemma~\ref{L:slim-link-ranks-nonnegative} holds for each real deletion link $vw$ whose winner link $uv$ is won by a left root $u$.  This includes the case in which $uv$ is a central link, since exception (i) still applies.  The symmetric argument holds for each real deletion link whose winner link is won by a right root.  We conclude that Lemma~\ref{L:slim-link-ranks-nonnegative} holds for smooth heaps: All link ranks are non-negative.

\subsection{Small-rank links in smooth heaps}\label{S:small-rank-smooth}

Next, we verify that the analysis for links of small rank in slim heaps applies to smooth heaps.  Indeed it does, if we adjust some constant factors and account for central links.

The proof of Lemma~\ref{L:slim-small-node-ranks} applies separately to the left roots and to the right roots just after the deletion of the original root in a delete-min.  Thus the number of roots of rank at most 1 is at most twice the bound in Lemma~\ref{L:slim-small-node-ranks}, specifically at most $5\lg n + 2$.
The proof of Lemma~\ref{L:multipass-small-link-rank-decreases} holds with one small change\footnote{The required change is that $2^{v.r'}$ is \emph{at most} $\log_2((u.s+v.s)/v.s)$, but not necessarily equal to it. This does not affect the validity of the proof.} for each real deletion link in a smooth heap whose winner link is a left-side non-central link, and holds symmetrically for each real deletion link whose winner link is a right-side non-central link.
An inspection of the proof shows that it also holds for each real deletion link whose winner link is a central link, even with the redefinition of mass at the beginning of the central linking phase.  It follows that Lemma~\ref{L:slim-small-link-rank-decreases} holds for smooth heaps.  The proof of Lemma~\ref{L:slim-small} holds for slim heaps using twice the bound in Lemma~\ref{L:slim-small-node-ranks}, giving us a bound of at most six times the sum of the magnitudes of all node-rank decreases plus two per insertion plus two per decrease-key plus $30\lg n + 12$ per delete-min on the number of real deletion links of rank less than $1$ in any sequence of slim heap operations starting with no heaps.

Suppose Theorem~\ref{T:slim-node-rank-changes} holds for smooth heaps; that is, in any sequence of smooth heap operations starting with no heaps, the sum of the magnitudes of all node-rank changes is $\OO(\lg\lg n)$ per decrease-key plus $\OO(\lg n)$ per delete-min.  Then we can conclude that Theorem~\ref{T:slim-small} holds for smooth heaps: In any sequence of slim heap operations starting with no heaps, the total number of real deletion links with rank less than $1$ is $\OO(1)$ per insertion plus $\OO(\lg\lg n)$ per decrease-key plus $\OO(\lg n)$ per delete-min.

\subsection{Large-rank links in smooth heaps}\label{S:large-rank-smooth}

Continuing, we verify that the analysis for links of large rank in slim heaps applies to smooth heaps, assuming that Theorem~\ref{T:slim-node-rank-changes} holds for smooth heaps. 

The proof of Lemma~\ref{L:slim-v-shifts}, the $v$-shift bound for slim heaps, holds without change for smooth heaps.  Thus the sum of the $v$-shifts of all real deletion links is $\OO(\lg\lg n)$ per decrease-key plus $\OO(\lg n)$ per delete-min.

The $u$-shift bound for slim heaps in Lemma~\ref{L:slim-u-shifts} also holds for smooth heaps, but the proof requires some changes.  There are no root swaps after deletion of the original root, so the part of the proof concerning such swaps can be omitted.  On the other hand, the roots on the root list after deletion of the original root $u$ at the beginning of a delete-min are not in order by link time: The left roots are in decreasing order by link time and the right roots are in increasing order by link time.

Let $v_1, v_2,\dots, v_k$ be the temporary nodes in $\cT$, in symmetric order in $\cT$, and let $v_j$ be the child that $u$ acquired earliest.  Root $v_j$ is either the rightmost left root or the leftmost right root among the $v_i$.  For $1 \leq i \leq k$ let $u.r'_i$ be the value of $u.r$ just after $u$ lost the link $uv_i$ cut by the delete-min.  We redefine $\Delta_i$ to flip the sign of $\Delta_i$ for $i \geq j$.  Specifically, for $1 \leq i < j$,  $\Delta_i = u.r'_i - u.r'_{i+1}$; for $j \leq i < k$, let $\Delta_i = u.r'_{i+1} - u.r'_i$.  For $1 \leq i < k$,  let $I_i$ be the interval of time between just after $uv_i$ is done and just after $uv_{i+1}$ is done.  A given instant of time is in at most one $I_i$ for $i < j$ and one $I_i$ for $i \geq j$.  It follows that the sum of $|\Delta_i|$ for $1 \leq i <k$ is at most twice the sum of the changes in $u.r$ over all such changes.  The rest of the proof remains the same, except that the multiplying factor for intervals overlapping is two rather than three.  We conclude that the sum of the $u$-shifts of all real deletion links is $\OO(\lg\lg n)$ per decrease-key plus $\OO(\lg n)$ per delete-min.

The proofs of Theorem~\ref{T:slim-link-rank-sum},~\ref{T:slim-deletion-link-bound}, and~\ref{T:slim-time-bound} hold without change for smooth heaps if we replace each slim heap result with the corresponding result for smooth heaps.  That is, any sequence of smooth heap operations starting with no heaps takes takes $\OO(\lg\lg n)$ amortized time per decrease-key, $\OO(\lg n)$ amortized time per deletion, and $\OO(1)$ amortized (and worst-case) time for each other heap operation. 

\subsection{Node-rank changes in smooth heaps}

The last step is to prove Theorem~\ref{T:slim-node-rank-changes} for smooth heaps. The proofs of Lemmas~\ref{L:slim-real-decrease-key-links} and~\ref{L:slim-real-insertion-links} hold without change for smooth heaps.  Thus the sum of node rank increases caused by insertion and decrease-key links is at most $2\lg\lg n$ per decrease-key plus at most $\lg\lg n + 3$ per delete-min.

The proof of Lemma~\ref{L:slim-phantom-link-winners} holds for non-central phantom left-side links. The symmetric argument applies to non-central phantom right-side links.  We conclude that the sum of node rank increases caused by such links is at most $2\lg\lg n$ per decrease-key plus at most $6$ per delete-min.

The proof of Lemma~\ref{L:slim-real-link-winners} holds for real non-central left-side links, with one change: Such a link $vw$ decreases the mass of $w$ to \emph{at most} the size of $v$.  This does not affect the validity of the proof.  The symmetric argument applies to real non-central right-side links.  We conclude that real non-central left-side and right-side links do not increase any node rank, although the associated step-ups might.

The proof of Lemma~\ref{L:slim-anomalous-step-ups} holds for anomalous left-side links, and the symmetric argument applies to anomalous right-side links.  All anomalous links are non-central.  We conclude that the sum of the increases in node ranks caused by step-ups during anomalous links is at most $2\lg\lg n$ per decrease-key plus at most $6$ per delete-min.

The proof of Lemma~\ref{L:slim-heavy-step-ups} holds for real right non-central links between heavy left roots with trivial changes: There are no node swaps at the start of the delete-min, and the sum performed is just over the relevant left-side links. The symmetric argument applies to real left non-central links between heavy right roots.  We conclude that the sum of the increases in node ranks caused by step-ups before real non-central links between heavy roots is at most $4\lg n$.

The proofs of Lemmas~\ref{L:slim-light-right-links-some} and~\ref{L:slim-light-right-links-all} hold for real right links of light left roots, since the link $uv$ in the statement of Lemma~\ref{L:slim-light-right-links-some} is lost by a light node and hence cannot be a central link.  The symmetric argument applies to real left links of light right roots.  We conclude that the sum of rank increases caused by real non-central links between light nodes is at most $4\lg n$ per delete-min.

It remains to bound the sum of increases in node ranks caused by central links.

\begin{lemma}\label{L:crossing-link-node-rank-increases}
In any sequence of slim heap operations starting with no heaps, the sum of the increases in node ranks caused by central links is at most $\lg\lg n$ per decrease-key plus $10\lg n + \lg\lg n + 3$ per delete-min.
\end{lemma}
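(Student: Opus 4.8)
The plan is to mirror the slim-heap analysis of Section~\ref{S:slim-analysis}, now applied to the central path. First I would record the configuration at the moment all non-central links are done and the central links are about to begin. Every node on the central path loses exactly one link during the delete-min, namely the one to its central-path parent, and that link is central; hence no central-path node leaves the root list during the non-central linking, so at this moment the root list consists precisely of the central-path nodes, with the left roots forming a prefix and the right roots a suffix in treap symmetric order. This is the analogue of the configuration ``just after the swaps'' used in Lemma~\ref{L:slim-heavy-step-ups}, and it pins down the mass of every central-path root at the start of the central phase. Since all central-path nodes are heavy and no central link is anomalous, only the heavy-node arguments are needed, and everything is charged within the current delete-min or to key links; this is why no decrease-key term beyond the one coming from key links will appear.

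Next I would split the rank increases caused by central links into three groups: (a) increases caused by phantom central links; (b) increases from the exception-(i) step-ups performed just before real right and real left central links; and (c) increases from the third exception, by which a left root's mass becomes its new extended size when it wins a real right central link, and symmetrically for a right root winning a real left central link. As in Lemma~\ref{L:slim-real-link-winners}, the real central links themselves, apart from the changes in (b) and (c), increase no node rank.

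For group (a) I would reuse the argument of Lemma~\ref{L:slim-phantom-link-winners}: a phantom central link $vw$ increases only the mass of its heavy winner $v$, and does so only when $w$ has positive extended size, in which case the subtree of the treap rooted at $w$ contains a key link; charging the increase (at most $\lg\lg n$) to the nearest such key link and invoking Lemma~\ref{L:key-link-shift}, once for left-type and once for right-type phantom central links, gives at most $2\lg\lg n$ per decrease-key plus $6$ per delete-min. For group (b), I would run the telescoping argument of Lemma~\ref{L:slim-heavy-step-ups} along the central path: because the central links are processed bottom-up and every central-path node is heavy, the mass of a node just before it loses a real right central link equals the start-of-phase mass of the nearest surviving left root to its right, so the step-up raises its rank by at most $2\lg m_1 - 2\lg m_2$ for masses $m_1 > m_2$ that decrease monotonically up the central path; summing gives $2\lg n$ for real right central links and, symmetrically, $2\lg n$ for real left central links. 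For group (c), when a left root $v$ wins a real right crossing link to a right root $w$ its mass rises from at least $v.\bar s$ to $v.\bar s + w.\bar s$, an increase of at most $2\lg(v.\bar s + w.\bar s) - 2\lg v.\bar s$; since $v.\bar s + w.\bar s$ is $v$'s new extended size and $v$ then becomes a descendant of the next left root that wins a crossing link higher on the path, these increases telescope to $2\lg n$, with a symmetric $2\lg n$ for right roots winning real left crossing links (when the loser of such a link is a left root the change is a decrease and contributes nothing). The four telescopes give $8\lg n$ per delete-min, and one residual $\lg\lg n$ per delete-min absorbs the boundary term the telescopes leave over (for instance the rank produced by the step-up onto the lowest crossing link). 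Combining (a), (b), and (c) yields the stated bound.

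The step I expect to be the main obstacle is the telescoping bookkeeping for groups (b) and (c). On the central path the losers of crossing links alternate between left roots and right roots, the mass of a central-path root mixes its own extended size with the extended sizes of only the central-path roots on one side of it, and the central links run last and bottom-up, so I must track these masses carefully --- dealing with the non-crossing central links interspersed among the crossing ones and with the nodes at the two ends of the path --- to confirm that the relevant mass sequences really are monotone along the path and that all four sums telescope.
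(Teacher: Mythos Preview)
Your decomposition into (a) phantom links, (b) exception-(i) step-ups, and (c) winner mass changes is reasonable, and (a) and (b) are essentially sound. But the telescoping you sketch for (c) does not work, and this is precisely where the paper's proof introduces an idea you are missing.

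For (c) you bound the winner's rank increase by $2\lg(v.\bar s + w.\bar s) - 2\lg v.\bar s$ and claim these telescope because ``$v$ then becomes a descendant of the next left root that wins a crossing link higher on the path.'' But consider two consecutive left roots $v$ and $v''$ on the central path that each win a crossing link, with $v''$ above $v$. Since $v''$ is a left root, its central-path child is its right child, and $v''$ absorbs everything below it only \emph{when it wins that right link}. Thus $v''.\bar s$ just \emph{before} $v''$ wins its crossing link is merely its start-of-phase extended size --- it does not yet include $v.\bar s' = v.\bar s + w.\bar s$. So $v''.\bar s \geq v.\bar s'$ fails in general, and the sum $\sum (2\lg v.\bar s' - 2\lg v.\bar s)$ does not telescope. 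Your parenthetical claim that non-crossing central links contribute nothing to (c) is also wrong above the first crossing link: once right-root mass has entered the path below, a left root $v$ winning a non-crossing right link has $v.m' = v.\bar s'$ strictly larger than its start-of-phase mass (which counted only left-root extended sizes to its right).

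The paper fixes this by introducing a \emph{special root}: the winner of the first (bottom-most) crossing link, thereafter replaced by whoever beats it. The special root is always either the rightmost left root or the leftmost right root, its mass is always its current extended size, and crucially that extended size is non-decreasing along the sequence of special roots (each new special root absorbs the old one). The paper then splits real right central links into those whose loser is not yet special (these live entirely below the first crossing link, are all between left roots, and telescope over start-of-phase masses exactly as in Lemma~\ref{L:slim-heavy-step-ups}) and those whose loser is the special root (these telescope over the monotone sequence of special-root extended sizes). The first crossing link is handled separately for a one-time $\lg\lg n$, and phantom links whose loser is special are charged to key links as you do in (a). This special-root device is what makes the upper-path telescope go through; without it, your (c) does not close.
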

\begin{proof}
For a root $v$ on the central path, let unprimed variables $v.r$, $v.m$, $v.s$ take their values at the start of the central linking phase, and let primed variables $v.r'$, $v.m'$, $v.s'$ take their values just after $v$ wins its central link. If $v$ is the bottommost node on the central path, it does not win a central link and the primed and unprimed variables are equal.

At the start of the central linking phase, the mass of a root (but not is size) can change, but any such change is a decrease.  Hence such changes can only decrease root ranks.  A phantom central link does not change the rank of the winner and the loser's rank becomes zero. Hence only real central links can cause rank increases.

The winner of a real central link increases in mass, but its size increases by the same amount, so its rank does not increase. The loser's rank can increase due to the step-up. Nothing else causes rank increases during the central linking phase. It remains to bound the rank increases due to step-ups.

Let $vw$ be a real central link. Let $M$ and $\barM$ be with respect to the real subpath containing $v$ and $w$.

If $v$ and $w$ are outer, then the mass of $w$ steps up from $w.m'$ to $v.m' = w.m' + v.s$. The corresponding rank increase is at most $2\lg v.m' - 2\lg w.m'$. Observe that $v.m' - w.m' = v.m - w.m = v.s$, and that $v.m' \geq v.m$ and $w.m' \geq w.m$. Hence $2\lg v.m' - 2\lg w.m' \leq 2\lg v.m - 2\lg w.m$. Since $w$ is next to $v$ and interior to it at the start of the central linking phase, the sum of $2\lg v.m - 2\lg w.m $ over all central left-side links between outer roots is at most $2\lg n$. The same applies to the central right-side links between outer roots. Thus the sum of rank increases due to central links between outer roots is at most $4\lg n$.

If $v$ and $w$ are inner, the mass of $w$ steps up from $w.m'$ to $v.m' = w.m' + v.s$, yielding a rank increase of at most $2\lg v.m' - 2\lg w.m'$. Summing this bound over all links between inner roots within a single real subpath, the sum of rank increases is at most $2\lg x.m' - 2\lg M$, where $x$ is the highest inner root on the real subpath. We rewrite this quantity as $(2\lg x.m' - 2\lg x.m) + (2\lg x.m - 2\lg M)$, and bound the two terms separately.

We charge the first term, $2\lg x.m' - 2\lg x.m$, to the future deletion of $x$. Since $x.m' = M + x.s'$ and $x.m \geq M + x.s$, $2\lg x.m' - 2\lg x.m \leq 2\lg(M + x.s') - 2\lg(M + x.s)\leq 2\lg x.s' - 2\lg x.s$. The sum of $2\lg x.s' - 2\lg x.s$ over all changes in the size of $x$ is at most $2\lg n$. The total charge is $2\lg n$ per delete-min.

We charge the second term, $2\lg x.m - 2\lg M$, to the current delete-min. Suppose the real subpath containing $x$, say $P$, is left-light. If there is another left-light real subpath $P'$ above $P$ on the central path, then $x.m \leq M'$ where $M'$ is defined with respect to $P'$. This inequality also holds for the topmost left-light real subpath if we set $M' = n$.  Then $2\lg x.m - 2\lg M \leq 2\lg M' - 2\lg M$. The sum of this bound over all left-light real subpaths $P$ is at most $2\lg n$. The symmetric argument yields the same bound for the right-light real subpaths, giving a total bound of $4\lg n$. Hence the sum of rank increases due to central links between inner nodes is at most $6\lg n$ per delete-min.

Finally, on each real subpath there is at most one link between an outer root and an inner root. The step-up increases the rank of the inner root by at most $\lg\lg n$. We charge this increase to the phantom k-link lost by the top node on the real subpath, or to the current delete-min if the top node is the root of $\cT$. The total cost is $\lg\lg n$ per decrease-key plus $\lg\lg n + 3$ per delete-min, by Lemma~\ref{L:k-link-shift}. This proves the lemma.
\end{proof}

Combining our bounds, we obtain Theorem~\ref{T:slim-node-rank-changes} for smooth heaps, with somewhat larger constants.  That is, in any sequence of smooth heap operations starting with no heaps, the sum of the magnitudes of node-rank changes is $\OO(\lg\lg n)$ per decrease-key plus $\OO(\lg n)$ per delete-min.

\section{Lazy self-adjusting heaps}\label{S:lazy}

We have analyzed the eager (tree) versions of multipass pairing heaps, slim heaps, and smooth heaps.  The same bounds hold for the lazy (forest) versions of these data structures.  As described at the end of Section~\ref{S:canonical-framework}, these versions maintain a heap as a set of trees rather than just a single tree, represented by a circular list of roots and accessed by a root of minimum key, the \emph{min-root}.  Insertions, melds, and decrease-keys do not do links; instead, an insert makes the new node into a one-node tree and adds it to the root list, a meld catenates the root lists of the two heaps, and a decrease-key of a child node $w$ cuts $w$ from its parent and adds $w$ to the root list.  In each case, the operation updates the min-root.  A delete-min deletes the min-root and catenates the list of its children with the root list, making its children into \emph{new roots}. The other roots, which were already roots before the min-root was deleted, are \emph{old roots}. It then proceeds to link the roots using the appropriate linking method until one root remains.

In discussing the lazy versions, we shall assume that the list of old roots in a delete-min is added to the \emph{front} of the list of new roots, so that all the old roots precede all the new roots before any links occur.  Equivalently, we maintain the min-root as the rightmost root on the root list, and when a delete-min occurs the list of new roots replaces the deleted min-root.  In a smooth heap, we define each old root to be a left root.  

The only change we need to make in any of the analyses is to account for increases in node ranks of the old roots after deletion of the min-root.  Before the min-root is deleted, each old root has rank $0$.  Deletion of the min-root causes the mass of each old temporary root $v$ to increase from its size to the sum of its size and those of all temporary roots to its right on the root list.  This increases its mass by at most $\lg\lg n$.  Root $v$ became an old root by being inserted or by being cut from its parent in a decrease-key.  We charge the increase in the rank of $v$ to the next operation that makes $v$ a root not in the middle of a delete-min.  This operation is either a decrease-key or a delete-min.  An adaptation of the proof of Lemma~\ref{L:multipass-real-insertion-links} shows that the sum of the increases of the ranks of old roots at the beginning of delete-mins is $\OO(\lg n)$ per delete-min plus $\OO(\lg\lg n)$ per delete-min.  It follows that the bounds for the eager heap implementations hold for the lazy implementations as well.

Our bounds also hold if delete-min catenates the list of new roots to the front of the list of old roots rather than to the back but (except for smooth heaps) the analysis is much less straightforward.  For smooth heaps, we merely define old roots to be right roots instead of left roots.  For multipass pairing heaps and slim heaps, we define new roots to be left roots and old roots to be right roots, and analyze links between left and right roots in a way similar to what we did in the analysis of smooth heaps.

\section{Remarks}\label{S:remarks}

Using powerful new techniques, we have significantly improved the amortized time bounds for multipass pairing heaps, slim heaps, and smooth heaps.  Slim and smooth heaps are subject to the lower bounds of Iacono and {\"O}zkan, and our upper bounds match their lower bounds.  They are not subject to Fredman's lower bounds, but they do match them.  Multipass pairing heaps are subject to both sets of lower bounds.  Our bounds match them except for that of decrease-key, which differs from the lower bound by a factor of $\lg\lg\lg n$.  Our bounds hold for both the eager and the lazy versions of the data structures.

By showing that slim and smooth heaps match the lower bounds of Fredman and of Iacono and {\"O}zkan and are subject to Iacono and {\"O}zkan's bounds (though not Fredman's), we have closed a chapter in the design and analysis of self-adjusting heaps.  But this is far from resolving all interesting questions.  In particular, for decrease-key operations in multipass pairing heaps there is still a $\Theta(\lg\lg\lg n)$ gap between the upper and lower time bounds.  The known time-bound gap for decrease-key operations in classical pairing heaps is significantly larger, as we discussed in the introduction.  Whether our methods, or any others, will suffice to close these gaps is a subject for future study.  In particular, we are optimistic that our methods can be extended to give a much tighter bound for decrease-key in classical pairing heaps.

What is it that makes a self-adjusting heap efficient?  Our results suggest two requirements: The links in delete-min operations should respect link-time order, and each root should win at most a constant number of links in each delete-min.  We do not know whether these conditions are necessary: pairing heaps and their variants satisfy the first but not the second.  On the other hand, the assembly pass in pairing heaps (see~\cite{SimplerPairing, phd2022}), during which a root can win an arbitrary number of links, seems benign to us.  We think classical pairing heaps have the same amortized bounds as slim and smooth heaps, but we have no proof yet. 

Another possible research direction is to apply our methods to splay trees or similar binary search trees.  A previous paper on multipass pairing heaps~\cite{DKKPZ} connected their analysis to that of a variant of splay trees called \emph{path-balanced binary search trees}.  Perhaps our techniques could improve the analysis of this data structure as well. 

\bibliographystyle{plain}
\bibliography{main}
\end{document}